\newif\ifacs
\acsfalse

\ifacs
\documentclass[sn-mathphys,pdflatex]{sn-jnl}
\else
\documentclass[reqno]{amsart}
\usepackage[foot]{amsaddr}
\fi

\input{macros}

\begin{document}
\ifacs
\input{snFirstPage}
\else
\makeatletter
\newcommand{\definetitlefootnote}[1]{
  \newcommand\addtitlefootnote{
    \makebox[0pt][l]{$^{*}$}
    \footnote{\protect\@titlefootnotetext}
  }
  \newcommand\@titlefootnotetext{\spaceskip=\z@skip $^{*}$#1}
}
\makeatother

\definetitlefootnote{
  Work supported by the ANR project LAMBDACOMB ANR-21-CE48-0017}

\title[Unitless Frobenius quantales]{Unitless Frobenius quantales\addtitlefootnote}
\author[C. De Lacroix]{Cédric De Lacroix}
\email{cedric.delacroix@lis-lab.fr}
\author[L. Santocanale]{Luigi Santocanale}
\email{luigi.santocanale@lis-lab.fr}
\address{
  LIS, CNRS UMR 7020, Aix-Marseille Universit\'e,
  France
}

\maketitle

\begin{abstract}
  It is often stated that Frobenius quantales are necessarily unital.
By taking negation as a primitive operation, we can define Frobenius
quantales that may not have a unit.  We develop the elementary theory
of these structures and show, in particular, how to define nuclei
whose quotients are Frobenius quantales. This yields a phase semantics
and a representation theorem via phase quantales.
\ifacs
\newline
\vspace{-10pt}
\fi

Important examples of these structures arise from Raney's notion of
tight Galois connection: tight endomaps of a complete lattice always
form a Girard quantale which is unital if and only if the lattice is
completely distributive.  We give a characterisation and an
enumeration of tight endomaps of the diamond lattices $M_n$ and
exemplify the Frobenius structure on these maps.  By means of phase
semantics, we exhibit analogous examples built up from trace class
operators on an infinite dimensional Hilbert space.
\ifacs
\newline
\vspace{-10pt}
\fi

Finally, we argue that units cannot be properly added to Frobenius
quantales: every possible extention to a unital quantale fails to
preserve negations.
\ifacs
\newline
\vspace{-10pt}
\fi

\end{abstract}

\smallskip
\noindent \textbf{Keywords.} 
Quantale, Frobenius quantale, Girard quantale, unit, dualizing
element, Serre duality, tight map, trace class operator, nuclear map.

\fi

\maketitle

\section*{Introduction}

It is often stated, see for example
\cite{Rosenthal1990a,Rosenthal1990b,EGHK2018}, that a Frobenius
quantale has a unit.
Indeed, as far as these quantales are defined via a dualizing element
(the linear falsity), this element necessarily is the unit of the dual
multiplication. Then, by duality, the standard multiplication of the
quantale has a unit. Negations are here taken as defined operators, by
means of false and the implications.

It is possible, however, to consider negations as primitive operators
and axiomatize them so to be coherent with the implications. 
We follow here this approach, thus exploring an axiomatization which
might be considered folklore: while the axiomatization is explicitly
considered in \cite[\S
3.3]{GJKO}, 
it is also closely related to the notions of \emph{Girard couple}
\cite{EggerKruml2010} and of \emph{Serre duality} \cite{Rump2021}.
Models of the given axioms are quantales coming with a notion of
negation, yet they might lack a unit. When they have a unit, these
structures coincide with the standard Frobenius and Girard
quantales. We call these structures 
\emph{unitless Frobenius quantales}.
It is the goal of this paper to explore in depth such an
axiomatization. We get to the conclusion that unitless Frobenius and
Girard quantales are structures of interest and worth further
research.

In support of this conclusion we present several examples of these
structures and characterize when they have units. Also, we show that
the standard theory of quantic nuclei and phase quantales can be
lifted to unitless Girard quantales and even to unitless Frobenius
quantales.  Slightly surprisingly, the new axiomatization immediately
clarifies how to generalise to Frobenius quantales the usual double
negation nucleus. As far as we are ware of, this nucleus is always
described in the literature as arising from a cyclic element, yielding
therefore a Girard quantale and, overall, a representation theorem for
Girard quantales. 
We prove therefore that unitless Frobenius quantales can be
represented as phase quantales, i.e. as quotients of free quantales
over a semigroup by a double negation nucleus. As a consequence, many
of our examples (and in principle all of them) arise as phase
quantales.

We present and study the example that prompted us to develop this
research. It is known
\cite{KrumlPaseka2008,EggerKruml2010,EGHK2018,S_RAMICS2020,S_ACT2020}
that the quantale of \jp endomaps of $L$ has the structure of a
Frobenius quantale if and only if $L$ is a \cd lattice. If $L$ is \cd,
then every \jp endomap of $L$ is tight, in the sense of
\cite{Raney60}, or nuclear, in the sense of \cite{HiggsRowe1989}.  If
$L$ is not \cd, then tight maps still form a Girard quantale, yet a
unitless one as we define. A statement finer than the one above then
asserts that the Girard quantale of tight endomaps of a complete
lattice $L$ is unital if and only if $L$ is \cd.  We illustrate the
theory so far developed on this example by showing that the Girard
quantale of tight endomaps can also be constructed via a double
negation nucleus.
We further illustrate this example by characterizing tight endomaps of
$M_{n}$ (the modular lattice of height $3$ with $n$ atoms), by
enumerating these maps, by characterising the operations on tight maps
arising from the quantale structure.

In this setting, a natural question is whether it is possible to embed
unitless Frobenius quantales into unital ones. This question has
always a trivial positive answer. Yet, given the choice of negations
as primitive operations, a finer question is whether it is possible to
find such an embedding that also preserves the negations.  A simple
but surprising argument shows that this is never possible, unless the
given quantale has already a unit. Studying further this phenomenon,
we emphasize that, contrary to the unital case, this is due to the
fact that positive elements are not closed under infima.

Let us mention other motivations that prompted us to develop this
research.  The literature on quantales often emphasizes that these
structures might not have units. We could not identify a work whose
main topic are units in quantales, yet the role of units has been discussed
from slightly different perspectives.
For example, completeness and complexity results for the Lambek
calculus---which we consider as a fragment of non-commutative linear
logic whose connections to quantale theory are well-known
\cite{Yetter1990}---have shown to importantly differ depending on the
presence of a unit in the calculus
\cite{AndrekaMikulas1994,Kuznetsov2021}.

The need of considering unitless semigroups analogous to Frobenius
quantales has shown up when devising categorical frameworks for
quantum computation \cite{AbramskyHeunen2012}. While the Frobenius
algebras considered in that work are not quantales, the Frobenius
quantales that we construct in Section~\ref{sec:phaseQuantales} are
built up from the same Frobenius algebras of trace class operators
considered in \cite{AbramskyHeunen2012}; the strict analogy between
the two situations can be formalised using the language of autonomous
categories.
 
Finally, coming to the past research of one of the authors, the
constructions carried out in \cite{SAN-2021-JPAA} rely on Girard
quantales; however, to perform them, it is unclear whether the units
play any role, units even appear to be problematic. Fundamental
results in that paper depend on considering an infinite family of
Girard quantale embeddings that preserve all the structure except for the
units.

\ifacs\else
\tableofcontents
\fi
\section{Elementary notions}

Let us recall a few notions that we need in the course of the
paper. We address the reader to standard monographies, e.g.
\cite{DP2002,Rosenthal1990a,KrumlPaseka2008,GJKO,EGHK2018}, for an in-depth
presentation of these notions.
\begin{definition}
  A \emph{complete lattice} $L$ is a poset such that suprema
  $\bigvee_{i \in I} x_{i}$ exists, for each family
  $\set{x_{i} \in L\mid i \in I}$. A function $f : L_{0} \rto L_{1}$ between two posets
  is \emph{\jp} if $f(\bigvee_{i \in I} x_{i}) =\bigvee_{i \in I} f(x_{i})$,
  for each family $\set{x_{i} \in L_{0}\mid i \in I}$ and whenever the
  supremum  $\bigvee_{i \in I} x_{i}$ exists.
\end{definition}
If $L$ is a complete lattice, then the poset $L^{\op}$ (with the
opposite ordering, $x \leq_{op} y$ if and only if $y \leq x$) is also
a complete lattice, where the suprema in $L^{\op}$ are infima in $L$.
If $g : L_{0} \rto L_{1}$ is \jp, then there exists a unique \jp map
$g : L^{\op}_{1} \rto L_{0}^{\op}$ satisfying $f(x) \leq y$ if and
only if $g(y) \leq_{op} x$---that is, $x \leq g(y)$---for each
$x,y \in L$. We use the notation $\radj{f}$ for this map and say that
$\radj{f}$ is the \emph{right adjoint} of $f$. In a similar way, if
$g = \radj{f}$, then we let $f = \ladj{g}$ and say that $f$ is the
\emph{left adjoint} of $g$. We also write $f \adj g$ if $g = \radj{f}$
and $f = \ladj{g}$.

Complete lattices and \jp maps form a category, denoted by
\SLatt.  The operation $\fun^{\op} : \SLatt \rto \SLatt$, that we can
extend to \jp maps by setting $f^{\op} \eqdef \radj{f}$, is a
contravariant functor from $\SLatt$ to itself. More than that, it is
actually a category isomorphism between $\SLatt$ and its opposite
category $\SLatt^{\op}$.  A map $g : L_{0} \rto L_{1}$ is \emph{\mp} if
it can be considered as a map $L_{0}^{\op} \rto L_{1}^{\op}$ in \SLatt.

A \emph{Galois connection} on complete lattices $L_{0},L_{1}$ is a
pair of maps $f : L_{0} \rto L_{1}$ and $g : L_{1} \rto L_{0}$ such
that, for each $x \in L_{0}$ and $y \in L_{1}$, we have $y \leq f(x)$
if and only if $x \leq g(y)$. It is a consequence of this definition
that we can see $f$ as a \jp map from $L_{0}$ to $L_{1}^{\op}$ and that
$g = \radj{f} : L_{1} = (L_{1}^{\op})^{\op} \rto L_{0}^{\op}$.
Every Galois connection is uniquely determined in this way.
We say that $f: L \rto L$ is
\emph{self-adjoint} if the pair $f,f$ is a Galois connection.

A \emph{closure operator} on a \cl $L$ is a map $j : L \rto L$ which is
isotone (i.e. order preserving), increasing ($x \leq j(x)$, for each
$x \in L$), and idempotent ($j(j(x)) = j(x)$, for each $x \in L$). If
$j : L \rto L$ is a closure operator, then $L_{j} \eqdef \set{x \in L
  \mid j(x) =  x}$ is, with the ordering  induced  from $L$, a \cl,
where suprema (in $L_{j}$, denoted by $\supj$) are computed  as follows:
\begin{align*}
  \supj \set{x_{i} \in L_{j} \mid i \in I}
  & = j(\,\bigvee \set{x_{i} \in L_{j} \mid i \in I}\,)\,.
\end{align*}
We shall say that $x \in L$ is \emph{$j$-closed} if $x \in L_{j}$.  The map
$j : L \rto L_{j}$ is then a surjective \jp map.  If
$f : L_{0} \rto L_{1}$ is \jp, then $\radj{f}\circ f$ is a closure
operator in $L_{0}$, and every closure operator can be obtained in
this way.

\begin{definition}
  A \emph{quantale} is a pair $(Q,\qmult)$ where $Q$ is a complete lattice
  and $\qmult$ is a semigroup operation that distributes with
  arbitrary suprema in each place:
  \begin{align}
    \label{eq:multdistr}
    (\,\bigvee_{i \in I} x_{i}\,) \qmult (\,\bigvee_{j \in J} y_{j}\,)
    & = \bigvee_{i \in I, j \in J} x_{i} \qmult y_{j}\,,
  \end{align}
  for each pair of families $\set{x_{i} \mid i \in I}$ and
  $\set{y_{j} \mid j \in J}$.  If the semigroup operation $\qmult$ has
  a unit, then we say that the quantale is \emph{unital}. If
  $(Q_{i},\qmult_{i})$, $i = 0,1$, are quantales, then a \jp map
  $f : Q_{0} \rto Q_{1}$ is a \emph{quantale homomorphism} if
  $f(x \qmult_{0} y) = f(x) \qmult_{1} f(y)$, for each
  $x,y \in Q_{0}$.
\end{definition}
The category \SLatt is actually a \sautcat, see
e.g. \cite{Barr1979,HiggsRowe1989}. As such, it comes with a tensor
product and 
a (unital) quantale is exactly a (monoid) semigroup object in the
monoidal category \SLatt. For a quantale $(Q,\qmult)$ and fixed
$x,y \in Q$, equation~\eqref{eq:multdistr} exhibits the maps
$x\qmult \fun : Q \rto Q$ and $\fun \qmult y : Q \rto Q$ as being
\jp. Thus thes two maps have right adjoints
$x \lrimpl \fun : Q^{\op} \rto Q^{\op}$ and
$\fun \rlimpl y : Q^{\op} \rto Q^{\op}$. We have therefore
\begin{align*}
  x \qmult y \leq z & \Tiff y \leq x \lrimpl z \Tiff x \leq z \rlimpl y\,,
\end{align*}
for each $x,y,z \in Q$. These operations, that we call the
\emph{implications}, have also the following explicit expressions:
\begin{align*}
  x \lrimpl z & = \bigvee \set{y \mid x \qmult y  \leq z}\,,
  \qquad z \rlimpl y = \bigvee \set{x \mid x \qmult y  \leq z} \,.
\end{align*}
Notice that if $f : Q_{0} \rto Q_{1}$ is a quantale homomorphism, then
$f(x \lrimpl z) \leq f(x) \lrimpl f(z)$ but in general this inequality
might be strict (and a similar remark applies to $f(z \rlimpl y)$).

\section{Frobenius and Girard quantales}

The usual definition of Frobenius and Girard quantales requires a
dualizing possibliy cyclic element. It goes as follows:
\begin{definition}[Definition A]
  \label{defi_A}
  Let $(Q,\qmult)$ be a quantale and let $0 \in Q$.  The element $0$ is
  \emph{dualizing} if, for every $x$ in $Q$, we have
  \begin{align*}
    0 \rlimpl (x \lrimpl 0) & = (0 \rlimpl x) \lrimpl 0 = x\,. 
  \end{align*}
  The element $0$ is \emph{cyclic} if, for every $x$ in $Q$, we have
  \begin{align*}
    x \lrimpl 0 & = 0 \rlimpl x\,. 
  \end{align*}
  A \emph{Frobenius quantale} is a tuple $(Q,\qmult, 0)$ where
  $(Q,\qmult)$ is a quantale and $0 \in Q$ is dualizing. If moreover $0$
  is cyclic then $(Q,\qmult, 0)$ is a \emph{Girard quantale}.
\end{definition}
 
It is well known that a Frobenius quantale, if so defined, is unital,
see \cite[Corollary to Proposition 1]{Rosenthal1990b} and
\cite[Proposition 2.6.3]{EGHK2018}.  Indeed,
$0\lrimpl 0 = 0 \rlimpl 0$ turns out to be the unit of the quantale.
We propose next a slightly different definition of
Frobenius and Girard quantales, which does not rely on dualizing
elements. The definition is more general in that, as we shall see, it
allows to consider unitless quantales.

\begin{definition}[Definition B]
    \label{defi_B}
    A \emph{Frobenius quantale} is a tuple
    $(Q,\qmult,\Lneg{(-)}, \Rneg{(-)})$ where $(Q,\qmult)$ is a quantale and
    $\Lneg{(-)}, \Rneg{(-)}:Q\rto Q$ are inverse antitone maps
    satisfying
    \begin{align}
      \label{eq:Serre}
      x \lrimpl \Lneg{y} & = \Rneg{x} \rlimpl y\,, \qquad\text{for every $x,y \in Q$.}
    \end{align}
    The map $\Rneg{(-)}$ is called the \emph{right negation} while the
    map $\Lneg{(-)}$ the \emph{left negation}.  A \emph{Girard
      quantale} is a Frobenius quantale for which right and left
    negations coincide.
  \end{definition}  
  We informally call a structure as defined in Definition~\ref{defi_B}
  a \emph{unitless Frobenius} (or \emph{Girard}) \emph{quantale}. This
  naming, however, is slight imprecise, as these quantales might well
  have a unit.

  Let us make straight the relation between  Definition~\ref{defi_A}
  and Definition~\ref{defi_B}.
  \def\thiscite{\cite[Lemma 3.18]{GJKO}}
  \begin{lemma}[cf. \thiscite]
    The Frobenius quantale structures defined in
    Definition~\ref{defi_A} and those defined in
    Definition~\ref{defi_B} that moreover are unital, are in
    bijection.
  \end{lemma}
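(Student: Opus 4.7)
The plan is to exhibit explicit maps between dualizing elements of a quantale $(Q,\qmult)$ and pairs $(\Lneg{\FUN}, \Rneg{\FUN})$ of negations satisfying~\eqref{eq:Serre} on a unital quantale, and then check that the two assignments are mutually inverse. For the direction from Definition~\ref{defi_A} to Definition~\ref{defi_B}, given a dualizing $0$, set $\Rneg{x} := x \lrimpl 0$ and $\Lneg{x} := 0 \rlimpl x$. The dualizing condition of Definition~\ref{defi_A} is then literally the identity $\Lneg{\Rneg{x}} = \Rneg{\Lneg{x}} = x$, so these maps are mutual inverses; antitonicity is a general property of $\fun \lrimpl 0$ and $0 \rlimpl \fun$, verified directly from the adjunctions. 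For the Serre equation, unwinding the definitions reduces both $z \leq x \lrimpl \Lneg{y}$ and $z \leq \Rneg{x} \rlimpl y$ to $x \qmult z \qmult y \leq 0$, using associativity of $\qmult$ and iterated use of the adjunctions for $\lrimpl$ and $\rlimpl$. Unitality of the resulting structure is the cited result of Rosenthal: the element $e := 0 \lrimpl 0 = 0 \rlimpl 0$ is a unit for $\qmult$.

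For the reverse direction, given a unital structure as in Definition~\ref{defi_B} with unit $e$, set $0 := \Rneg{e}$. First observe that, setting $y := e$ in~\eqref{eq:Serre} and using $\Rneg{x} \qmult e = \Rneg{x}$, we obtain $x \lrimpl \Lneg{e} = \Rneg{x}$. Specializing this to $x := e$ yields $\Lneg{e} = \Rneg{e}$, so the definition of $0$ is independent of which negation is used. Substituting back produces $\Rneg{x} = x \lrimpl 0$, and the symmetric manipulation (setting $x := e$ in~\eqref{eq:Serre}) gives $\Lneg{y} = 0 \rlimpl y$. The dualizing condition $0 \rlimpl (x \lrimpl 0) = (0 \rlimpl x) \lrimpl 0 = x$ then reduces exactly to the hypothesis that $\Lneg{\FUN}$ and $\Rneg{\FUN}$ are mutually inverse.

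It remains to confirm that the two constructions are inverse. Starting from a dualizing $0$ and running both constructions, the candidate new dualizing element is $\Rneg{e} = e \lrimpl 0 = (0 \lrimpl 0) \lrimpl 0$, which equals $0$ by applying the dualizing condition of Definition~\ref{defi_A} at the argument $x := 0$; the reverse round-trip is immediate from the formulas $\Rneg{x} = x \lrimpl 0$ and $\Lneg{y} = 0 \rlimpl y$ re-established above. The main obstacle, and the place where unitality is essential, is the identification $\Lneg{e} = \Rneg{e}$ in the Definition~\ref{defi_B}~$\to$~Definition~\ref{defi_A} direction: without a unit, there is no canonical element of $Q$ at which to evaluate the negations to collapse the two maps into a single dualizing element, which is precisely why the correspondence must be restricted to unital structures.
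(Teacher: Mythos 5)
Your proof is correct and takes essentially the same route as the paper: both directions use the standard translation $\Rneg{x} = x \lrimpl 0$, $\Lneg{x} = 0 \rlimpl x$ in one direction, and $0 := \Lneg{1} = \Rneg{1}$ in the other, with the identity $\Lneg{1}=\Rneg{1}$ derived exactly as you do by specialising the Serre equation at the unit. The only difference is presentational: you verify the Serre identity and the round-trip bijectivity by explicit reduction, whereas the paper cites the residuation identity $x \lrimpl (z \rlimpl y) = (x \lrimpl z) \rlimpl y$ and leaves the mutual inverseness implicit.
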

  \begin{proof}
    Let $(Q,\qmult,0)$ be a Frobenius quantale as defined in
    Definition~\ref{defi_A}.  By letting $\Rneg{x} := x\lrimpl 0$ and
    $\Lneg{x} := 0\rlimpl x$ one obtains a Frobenius quantale as
    defined in Definition~\ref{defi_B}, in view of the well known
    identity $x \lrimpl (z \rlimpl y) = (x \lrimpl z) \rlimpl y$.

    In the opposite direction, let $(Q,\qmult,\Lneg{(-)},\Rneg{(-)})$ be
    a Frobenius quantale as defined in Definition~\ref{defi_B} which,
    moreover, is unital. Observe that
    $\Lneg{1} = 1 \lrimpl \Lneg{1} = \Rneg{1}\rlimpl 1 =
    \Rneg{1}$. Thus, let $0 \eqdef \Lneg{1} = \Rneg{1}$ and observe
    that
    \begin{align*}
      \Lneg{x} & = 1 \lrimpl \Lneg{x} = \Rneg{1} \rlimpl x = 0 \rlimpl
      x
    \end{align*}
    and, similarly, $\Rneg{x} = x \lrimpl 0$. It is then immediate
    that $0$ is a dualizing element.
  \end{proof}

  The identity~\eqref{eq:Serre} is considered in \cite{GJKO}, where it
  is called \emph{contraposition}.  Let us consider a few consequences
  of~\eqref{eq:Serre}. This identity is (under the assumption that
  negations are inverse to each other) clearly equivalent to the
  following ones:
  \begin{align}
    \label{eqs:Serre}
    x \lrimpl y & =
    \Rneg{x} \rlimpl \Rneg{y}\,, & x \rlimpl y & = \Lneg{x} \lrimpl
    \Lneg{y}\,,
    &
    \Lneg{x} \lrimpl y & = x \rlimpl \Rneg{y}\,.
  \end{align}
  Invertible maps satisfying the above kind of conditions were called
  Serre dualities in \cite{Rump2021}. The last of the three identities
  above was considered in \cite{EggerKruml2010}.
  \begin{definition}
    We say that a pair of antitone maps
    $\Lneg{(-)}, \Rneg{(-)}:Q\rto Q$ is \emph{Serre} if they
    satisfy~\eqref{eq:Serre}.  
    If they are inverse to
    each other, then we say that they are form a \emph{Serre duality}.
  \end{definition}
  The following Lemma immediately follows from the previous
  definition.
  \begin{lemma}
    A pair of antitone maps is Serre if and only if the equivalence
    below holds (for each $x,y,z \in Q$):
    \begin{align}
      \label{eq:shift}
      x \qmult z & \leq \Lneg{y} \Tiff z \qmult y \leq \Rneg{x}\,. 
    \end{align}
  \end{lemma}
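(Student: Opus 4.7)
The plan is to unfold both sides of the claimed equivalence~\eqref{eq:shift} using the defining adjunctions of the implications $\lrimpl$ and $\rlimpl$, and show that the resulting quantified condition on $z$ is equivalent to the identity $x \lrimpl \Lneg{y} = \Rneg{x} \rlimpl y$ characterising the Serre property.

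More concretely, I would proceed as follows. First, I would recall from the preceding section that for fixed $x, y \in Q$ the adjunctions
\begin{align*}
  x \qmult z \leq w & \Tiff z \leq x \lrimpl w\,,
  & z \qmult y \leq w & \Tiff z \leq w \rlimpl y\,,
\end{align*}
hold for every $z, w \in Q$. Setting $w = \Lneg{y}$ on the left and $w = \Rneg{x}$ on the right, the condition $x \qmult z \leq \Lneg{y}$ becomes $z \leq x \lrimpl \Lneg{y}$, and the condition $z \qmult y \leq \Rneg{x}$ becomes $z \leq \Rneg{x} \rlimpl y$. Thus~\eqref{eq:shift} is exactly the statement that, for each $z \in Q$,
\begin{align*}
  z \leq x \lrimpl \Lneg{y} \Tiff z \leq \Rneg{x} \rlimpl y\,,
\end{align*}
which is a standard reformulation of the equality $x \lrimpl \Lneg{y} = \Rneg{x} \rlimpl y$.

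For the forward direction, assuming the pair is Serre, we substitute $x \lrimpl \Lneg{y} = \Rneg{x} \rlimpl y$ into the unfolded equivalence. For the converse, assuming~\eqref{eq:shift}, I would conclude $x \lrimpl \Lneg{y} = \Rneg{x} \rlimpl y$ by instantiating the biconditional at $z = x \lrimpl \Lneg{y}$ (yielding one inequality) and at $z = \Rneg{x} \rlimpl y$ (yielding the other). Antisymmetry of $\leq$ then gives the identity, and since $x, y$ were arbitrary, the pair is Serre.

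There is essentially no obstacle here: the only ingredient beyond bookkeeping is the indirect characterisation of an element by the principal downset it generates, which is immediate from antisymmetry. I would keep the proof to a few lines, written as a short calculation chaining the two adjunctions, and note explicitly that no use is made of the antitonicity of $\Lneg{(-)}$ or $\Rneg{(-)}$ beyond the fact that these are well-defined maps $Q \to Q$, so that the statement is really about the interaction of the quantale multiplication with the unary operations $\Lneg{(-)}$ and $\Rneg{(-)}$.
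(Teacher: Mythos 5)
Your proof is correct and is exactly the argument the paper leaves implicit (the paper simply remarks that the lemma ``immediately follows from the previous definition''): unfold both sides of~\eqref{eq:shift} via the residuation adjunctions and observe that two elements are equal iff they have the same principal downset. Your closing remark that antitonicity plays no role is also accurate.
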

  Notice that 
  the equivalence in~\eqref{eq:shift}
  amounts to what we called  elsewhere (see e.g. \cite{S_RAMICS2021})
  the \emph{shift relations}:
  \begin{align*}
    x \qmult y \leq z & \Tiff \Lneg{z} \qmult x \leq \Lneg{y} \Tiff y \qmult
    \Rneg{z} \leq \Rneg{x}\,,
  \end{align*}
  which also appear in \cite[\S 3.2.2]{GJKO} under the naming of
  (IGP).
  
  \medskip
  
  If $Q$ is unital, then a Serre pair $\Lneg{\fun}, \Rneg{\fun}$
  necessarily form a Galois connection.  If $Q$ is not unital, then a
  Serre pair need not to be a Galois connection. For example, if
  $x \qmult y = 0$ for each $x,y \in Q$, then the constant map
  $\Rneg{x} = 0$, $x \in Q$, is Serre with itself, but it is not
  self-adjoint if $Q$ has at least two elements.  We shall restrict
  ourself, notably in the next section, to consider Serre pairs
  forming a Galois connection. Of course, if $\Lneg{\fun},\Rneg{\fun}$
  are inverse to each other, then they are also adjoint.

  \medskip
  
  Given a quantale $(Q,\qmult)$ and pair of antitone maps
  $\Lneg{(-)}, \Rneg{(-)}:Q\rto Q$ inverse to each other, we can
  define two distinct dual operators:
  \begin{align*}
    x \Roplus y & \eqdef \Lneg{(\Rneg{y} \qmult \Rneg{x})}\,, &
    x \Loplus
    y & \eqdef \Rneg{(\Lneg{y} \qmult \Lneg{x})}\,.
  \end{align*}
  \def\thiscite{\cite[Lemma 3.17]{GJKO}}
  \begin{proposition}[cf. \thiscite]
    If $\Lneg{\fun},\Rneg{\fun}$ is a Serre duality, then the two dual
    multiplications coincide and they are determined by the
    implication and the negations as follows
    \begin{align*}
      x \Roplus y & =  \Lneg{x} \lrimpl y = x \rlimpl \Rneg{y} = x
      \Loplus y\,.
    \end{align*}
  \end{proposition}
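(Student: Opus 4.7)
The plan is to verify the universal property of the implication $\Lneg{x} \lrimpl y$, since the equality $\Lneg{x} \lrimpl y = x \rlimpl \Rneg{y}$ is already the third identity in~\eqref{eqs:Serre}. So it suffices to establish $x \Roplus y = \Lneg{x} \lrimpl y$ and, symmetrically, $x \Loplus y = x \rlimpl \Rneg{y}$.

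For the first equality, I would show that for every $z \in Q$ the condition $z \leq \Lneg{(\Rneg{y} \qmult \Rneg{x})}$ is equivalent to $\Lneg{x} \qmult z \leq y$. Since $\Lneg{\fun}$ and $\Rneg{\fun}$ are antitone and mutually inverse, they give rise to a Galois connection, so $a \leq \Lneg{b}$ holds iff $b \leq \Rneg{a}$; applying this with $a = z$ and $b = \Rneg{y} \qmult \Rneg{x}$ yields
\begin{equation*}
z \leq \Lneg{(\Rneg{y} \qmult \Rneg{x})} \Tiff \Rneg{y} \qmult \Rneg{x} \leq \Rneg{z}.
\end{equation*}
Next I would invoke the shift relation~\eqref{eq:shift} twice, in the form $a \qmult b \leq c$ iff $\Lneg{c} \qmult a \leq \Lneg{b}$. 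The first application, with $a = \Rneg{y}$, $b = \Rneg{x}$ and $c = \Rneg{z}$, together with the cancellations $\Lneg{\Rneg{\cdot}} = \id$, transforms $\Rneg{y} \qmult \Rneg{x} \leq \Rneg{z}$ into $z \qmult \Rneg{y} \leq x$; the second application, with $a = z$, $b = \Rneg{y}$ and $c = x$, rewrites this in turn as $\Lneg{x} \qmult z \leq y$. Combined with the defining adjunction $\Lneg{x} \qmult z \leq y$ iff $z \leq \Lneg{x} \lrimpl y$, this chain of equivalences yields $x \Roplus y = \Lneg{x} \lrimpl y$.

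The equality $x \Loplus y = x \rlimpl \Rneg{y}$ follows by an entirely analogous argument with the roles of $\Lneg{\fun}$ and $\Rneg{\fun}$ interchanged: one verifies that $w \leq \Rneg{(\Lneg{y} \qmult \Lneg{x})}$ is equivalent to $\Lneg{y} \qmult \Lneg{x} \leq \Lneg{w}$ and then converts the latter, via two applications of the shift in the variant $a \qmult b \leq c$ iff $b \qmult \Rneg{c} \leq \Rneg{a}$, into $w \qmult \Rneg{y} \leq x$, which is the defining condition of $w \leq x \rlimpl \Rneg{y}$. The only real obstacle is bookkeeping: one must pay attention to which negation is applied on which side at each step, and the cancellations crucially rely on the inverse law $\Lneg{\Rneg{z}} = z = \Rneg{\Lneg{z}}$, which is precisely what distinguishes a genuine Serre duality from a mere Serre pair.
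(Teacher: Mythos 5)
Your proof is correct and follows essentially the same route as the paper: convert the definitional inequality $z \leq x \Roplus y$ into a multiplicative inequality via the Galois connection, then use the shift relations and the inverse law to reach a residuation condition. The only cosmetic difference is that the paper stops one shift earlier in each chain, reading off $x \Roplus y = x \rlimpl \Rneg{y}$ and $x \Loplus y = \Lneg{x} \lrimpl y$ directly before invoking~\eqref{eqs:Serre}, so your extra shift step is slightly redundant given that you already established the middle equality from~\eqref{eqs:Serre}.
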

  \begin{proof}
    The chain of equivalences
    \begin{align*}
      z & \leq \Rneg{(\Lneg{y} \qmult \Lneg{x})}
      \Tiff \Lneg{y} \qmult
      \Lneg{x} \leq \Lneg{z}
      \Tiff
      \Lneg{x} \qmult z \leq y
      \Tiff
    z \leq \Lneg{x} \lrimpl y
  \end{align*}
  yields the identity $x \Loplus y = \Lneg{x} \lrimpl y$.  In a
  similar way, we have
  \begin{align*}
    z & \leq \Lneg{(\Rneg{y} \qmult \Rneg{x})}
    \Tiff \Rneg{y} \qmult
    \Rneg{x} \leq \Rneg{z}
    \Tiff
    z \qmult \Rneg{y} \leq x
    \Tiff
    z \leq x \rlimpl \Rneg{y}
  \end{align*}
  yielding $x \Roplus y = x \rlimpl \Rneg{y}$.  The equality
  $x \Roplus y = x \Loplus y$ then follows from~\eqref{eqs:Serre}.
\end{proof}

Our next aim is that of providing simple examples of unitless Frobenius
quantales that indeed lack units.

\begin{Example}[Chu construction]
  We give a first example of unitless Girard quantale by observing
  that the usual Chu construction (or Twist product, see
  e.g. \cite{Kalman1958,Barr1979,Rosenthal1990b}), does not require a quantale to
  be unital and yields a Girard quantale as defined in
  Definition~\ref{defi_B}.
  For a quantale $(Q,\ast)$, the quantale $C(Q)=(Q\times\Qd, \star)$
  is defined by
  \begin{align*}
    (x_1,x_2)\star (y_1,y_2) & \eqdef (x_1\ast y_1, y_1\lrimpl
    x_2\wedge y_2\rlimpl x_1)\,.
  \end{align*}
One can directly check the associativity of $\star$ and that it
distributes in both arguments over joins of $Q \times
\Qd$. Recall that
\begin{align*}
(x_1,x_2)\lrimpl (z_1,z_2) & = (x_{1}\lrimpl z_{1} \wedge x_2\rlimpl z_2,
 z_2 \ast x_{1})\,,\\
 (z_1,z_2)\rlimpl (y_1,y_2) & = (z_1\rlimpl y_1 \wedge z_2\lrimpl y_2,
 y_1\ast z_2)\,,
\end{align*}
since the conditions
$(x_{1},x_{2}) \star (y_{1},y_{2}) \leq (z_{1},z_{2})$,
$(y_{1},y_{2}) \leq (x_{1},x_{2}) \lrimpl (z_{1},z_{2})$, and
$(x_{1},x_{2}) \leq (z_{1},z_{2}) \rlimpl (y_{1},y_{2})$, are all
equivalent to the conjuction of the three conditions:
\begin{align*}
  x_{1} \ast y_{1} & \leq z_{1}\,, \quad y_{1} \ast z_{2} \leq
  x_{2}\,, \quad z_{2} \ast x_{1} \leq y_{2}\,.
\end{align*}
The duality being given by
\begin{align*}
\Rneg{(x_1,x_2)} &\eqdef (x_2,x_1)\,,
\intertext{so the computations}
(x_{1},x_{2})\lrimpl \Rneg{(y_{1},y_{2})} & = (x_{1},x_{2})\lrimpl
(y_{2},y_{1}) = (x_{1}\lrimpl y_{2} \wedge x_{2}\rlimpl y_{1}, y_{1} \ast x_{1}) \\
\Rneg{(x_1,x_2)}\rlimpl (y_1,y_2) & =(x_2,x_1)\rlimpl (y_1,y_2) =
(x_2\rlimpl y_1 \wedge x_1\lrimpl y_2, y_1\ast x_1)
\end{align*}
exhibit $\Rneg{\fun}$ as Serre self-dual.

\begin{proposition}
  The quantale $C(Q)$ is unital if and only if $Q$ is unital.
\end{proposition}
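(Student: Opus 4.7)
The plan is to establish the equivalence by constructing an explicit unit in the forward direction, and by projecting to the first coordinate in the converse direction.

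For the forward direction, suppose $(Q, \ast)$ has unit $1$, and let $\top$ denote the top element of the complete lattice $Q$. I would propose $(1, \top) \in Q \times \Qd$ as the unit of $C(Q)$. The verification reduces to elementary identities about the implications: the first coordinate of both $(1, \top) \star (x_1, x_2)$ and $(x_1, x_2) \star (1, \top)$ returns $x_1$ because $1$ is the unit of $\ast$; for the second coordinates, I would use $x_1 \lrimpl \top = \top$ and $x_2 \rlimpl 1 = x_2$ (respectively $1 \lrimpl x_2 = x_2$ and $\top \rlimpl x_1 = \top$), so that both meets in $Q$ collapse to $x_2$.

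For the converse, let $(e_1, e_2)$ be a unit of $C(Q)$. Since the first coordinate of $\star$ is just $\ast$, projecting the unit equations $(e_1, e_2) \star (x_1, x_2) = (x_1, x_2) = (x_1, x_2) \star (e_1, e_2)$ onto $Q$ immediately yields $e_1 \ast x_1 = x_1 = x_1 \ast e_1$ for every $x_1 \in Q$. Hence $e_1$ is a two-sided unit of $(Q, \ast)$.

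I do not expect any nontrivial obstacle: the argument relies only on the basic behaviour of the implications at $1$ and at $\top$, together with the transparent structure of $\star$ in the first coordinate. If one wishes, the extra information that $e_2 = \top$ can be extracted from the second-coordinate equations after substituting $e_1 = 1$, confirming that $(1, \top)$ is in fact the unique unit.
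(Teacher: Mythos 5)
Your proof is correct and takes essentially the same route as the paper: you build the unit $(1,\top)$ for the forward direction, and for the converse you project the unit equations onto the first coordinate, which is exactly the paper's observation that the first projection is a (surjective) semigroup homomorphism. The only minor quibble is the closing remark: deriving $e_2=\top$ directly from the second-coordinate equations is slightly delicate, but it is unnecessary anyway since units in a semigroup are unique, so $(e_1,e_2)=(e_1,e_2)\star(1,\top)=(1,\top)$.
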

Indeed, the first projection is a surjective semigroup homomorphism and
so, if $C(Q)$ has a unit $(u_{1},u_{2})$, then $u_{1}$ is a unit of
$Q$.  On the other hand, it is well known see that if $u_{1}$ is a
unit of $Q$, then $(u_{1},\top)$ is a unit of $C(Q)$.
\end{Example}

\begin{Example}[Couples of quantales]
  In \cite{EggerKruml2010} the authors define a couple of quantales as
  a pair of quantales $C,Q$ that are related by a sup-preserving map
  $\phi : C \rto Q$; moreover, $C$ is asked to be a $Q$-bimodule and
  the folllowing equations
  \begin{align}
    \label{eq:Gcouple}
    \phi(c_{1}) \cdot c_{2} &= c_{1}\cdot \phi(c_{2}) = c_{1}\qmult c_{2}
  \end{align}
  are to be satisfied.  If $Q$ is a quantale, then $C = Q^{\op}$ is a
  canonical $Q$-bimodule, with, for $x,y \in Q$ and $q \in Q^{\op}$,
  \begin{align*}
    y \cdot q & = q \rlimpl y\,,  \qquad q \cdot x = x \lrimpl q\,.  
  \end{align*}
  Thinking of $\phi : Q^{\op} \rto Q$ as a sort of negation, the first
  equation in \eqref{eq:Gcouple} yields
  \begin{align*}
    x \rlimpl \phi(y) & = \phi(x) \lrimpl y\,,
  \end{align*}
  that is, the last equation in \eqref{eqs:Serre}.  Notice that, by
  taking the above equality as definition of a binary operator on
  $Q^{\op}$, then this operator is necessarily associative.  If,
  moreover, $\phi$ is an involution, then $Q$ becomes a unitless
  Girard quantale. Viceversa, unitless Girard quantale structures on
  $Q$ give rise to couples of quantales $\phi : C \rto Q$ with
  $C = Q^{\op}$ and $\phi$ an antitone involution. 
  Let us notice, however, that \emph{Girard} couples of quantales, in
  contrast to couples of quantales, turn out to have some unit
  \cite[Proposition 8]{EggerKruml2010}.
\end{Example}

\section{Nuclei and phase quantales}
\label{sec:phaseQuantales}

 In this section we show how to generalise the elementary theory of
 nuclei from unital Girard
quantales  to unitless Girard quantales and also unitless Frobenius
quantales. 

\medskip

Let us recall that a \emph{quantic nucleus} (or simply a
\emph{nuclues}) on a quantale $(Q,\qmult)$ is a closure operator $j$
on $Q$ such that, for all $x,y \in Q$,
\begin{align*}
  j(x) \qmult j(y) & \leq j(x \qmult y)\,.
\end{align*}
It is easily seen that a closure operator $j$ is a nucleus if and only
if the two conditions below hold:
\begin{align*}
  x \qmult j(y) & \leq j (x \qmult y)\,, \qquad j(x) \qmult y \leq j(x \qmult
  y),\qquad\qquad\text{for all $x,y \in Q$}\,.
\end{align*}

Given a nucleus $j$ on $(Q,\qmult)$, let $Q_{j}$ be the set of fixed
points of $j$. From the elementary theory of closure operators,
$Q_{j}$ is a complete lattice and $j : Q \rto Q_{j}$ is a surjective
\jp map. $Q_{j}$ has a canonical structure of a quantale
$(Q_{j}, \qmult_{j})$ as well, where the multiplication is given by
\begin{align*}
  x \qmult_{j} y & \eqdef j(x \qmult y)\,.
\end{align*}
Moreover, $j : (Q,\qmult) \rto (Q_{j},\qmult_{j})$ is a surjective
quantale morphism.

\begin{definition}
  For a quantale $(Q,\qmult)$, a \emph{Serre Galois connection} is a
  Galois connection $l,r : Q \rto Q$ such that $ l \circ r = r \circ l$
  and, for all $x,y,z \in Q$,
  \begin{align}
    x \qmult y & \leq r(z) \Tiff z \qmult x \leq l(y)\,.
    \label{eq:shift}
  \end{align}
\end{definition}
That is, $(l,r)$ is Serre if $l \circ r = r \circ l$ and the following
identity holds:
\begin{align*}
  r(z) \rlimpl y & = z \lrimpl l(y)\,.
\end{align*}
We shall refer to \eqref{eq:shift} as the shift relation. Recall
that if $(l,r)$ is Serre and moreover $(l,r)$ are inverse to each other,
then we say that $(l,r)$ is a Serre duality.

We say that $r : Q \rto Q$ is a Serre map if $(r,r)$ is a Serre Galois
connection.
\begin{proposition}
  \label{prop:SerreGC1}
  If $(l,r)$ is a Serre Galois connection, then
  $j \eqdef l \circ r = r \circ l$ is a nucleus and the restriction of
  $(l,r)$ to $Q_{j}$ yields a Frobenius quantale structure on
  $(Q_{j},\qmult_{j})$.
\end{proposition}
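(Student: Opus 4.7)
The plan is to split the statement into three tasks: first, that $j := l \circ r = r \circ l$ is a closure operator on $Q$; second, that it is a nucleus; third, that $l$ and $r$ restrict to an antitone pair of mutually inverse maps on $Q_{j}$ satisfying the Serre identity, so that $(Q_{j}, \qmult_{j}, l|_{Q_{j}}, r|_{Q_{j}})$ satisfies Definition~\ref{defi_B}. The closure part is routine: $j$ is isotone as the composite of two antitones, it is increasing because setting $x := r(y)$ in $y \leq l(x) \Leftrightarrow x \leq r(y)$ yields $y \leq l(r(y))$, and it is idempotent because the standard Galois identities $l r l = l$ and $r l r = r$ give $j \circ j = l r l r = l r = j$.

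The crux is the stability inequality $x \qmult j(y) \leq j(x \qmult y)$. My key lemma is the equivalence
\[
  x \qmult l(r(y)) \leq r(z) \quad \Longleftrightarrow \quad x \qmult y \leq r(z), \qquad x,y,z \in Q\,.
\]
Applying the shift~\eqref{eq:shift} with $y$ replaced by $l(r(y))$ converts the left inequality into $z \qmult x \leq l(l(r(y)))$. Since $l \circ l \circ r = l \circ r \circ l = l$---the first equality using $l r = r l$, the second using $l r l = l$---this becomes $z \qmult x \leq l(y)$, which by a second application of the shift is the right inequality. Specialising $z := l(x \qmult y)$ gives $r(z) = r l(x \qmult y) = l r(x \qmult y) = j(x \qmult y)$, and the right-hand side collapses to $x \qmult y \leq j(x \qmult y)$, which holds by closure. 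This proves $x \qmult j(y) \leq j(x \qmult y)$. A mirror argument, based on the equivalent form $A \qmult B \leq l(C) \Leftrightarrow B \qmult C \leq r(A)$ of the shift (a mere relabeling), yields $j(x) \qmult y \leq j(x \qmult y)$. Hence $j$ is a nucleus.

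For the final task I would check that $l$ and $r$ send $Q_{j}$ into itself and are mutually inverse there: if $x \in Q_{j}$ then $j(r(x)) = l r r(x) = r l r(x) = r(x)$, using $l r = r l$ and $r l r = r$; symmetrically $l(x) \in Q_{j}$; moreover $(l \circ r)|_{Q_{j}} = j|_{Q_{j}} = \id = (r \circ l)|_{Q_{j}}$. To lift the Serre equation, I would appeal to the standard consequence of the nucleus property that for every $z \in Q_{j}$ and every $x \in Q$ both $x \lrimpl z$ and $z \rlimpl x$ already lie in $Q_{j}$ (a short computation using $x \qmult j(x \lrimpl z) \leq j(x \qmult (x \lrimpl z)) \leq j(z) = z$), which forces the implications of $(Q_{j},\qmult_{j})$ to agree with those of $(Q,\qmult)$ on $j$-closed arguments. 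Consequently, for $x,y \in Q_{j}$ the identity $x \lrimpl l(y) = r(x) \rlimpl y$ valid in $Q$ transfers verbatim to $Q_{j}$, completing the verification of Definition~\ref{defi_B}.

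The main obstacle is a ``circularity trap'': a naive application of the shift relation to the target $x \qmult j(y) \leq j(x \qmult y)$ merely transforms it into an equivalent statement rather than a proof. The device that breaks the loop is to shift twice and collapse the extra layer via $l \circ l \circ r = l$, which crucially combines the commuting assumption $l r = r l$ with the intrinsic Galois identity $l r l = l$.
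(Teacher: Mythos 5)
Your proof is correct and follows essentially the same route as the paper's: the nucleus inequality $x \qmult j(y) \leq j(x \qmult y)$ is obtained by applying the shift relation twice and collapsing the middle layer via the Galois identity $lrl = l$ together with $lr = rl$, exactly as in the paper (your ``key lemma'' merely packages the same two-shift computation with a free variable $z$ before specialising it, and $llr = lrl$ is the same simplification the paper performs as $l(rl(y)) = lrl(y)$). The only genuine variation is in the last step: the paper verifies the shift relation on $Q_j$ directly by unwinding $x \qmult_j y = j(x \qmult y) \leq r(z)$ through two iffs, whereas you verify the Serre identity $x \lrimpl l(y) = r(x) \rlimpl y$ by first observing (via the standard nucleus computation you sketch) that residuals into $j$-closed targets are $j$-closed and that $\lrimpl_j$, $\rlimpl_j$ agree with the ambient residuals on $Q_j$; this costs one extra standard observation but is a cleaner way to state why the identity ``transfers verbatim'' and is interchangeable with the paper's argument.
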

\begin{proof}
  We firstly prove that $x \qmult j(y) \leq j(x \qmult y)$, that is,
  $x \qmult rl(y) \leq rl(x \qmult y)$. Using the shift relation, this
  inclusion is equivalent to $l(x \qmult y) \qmult x \leq
  lrl(y)$. Since $lrl(y) = l(y)$, the latter inclusion is
  $l(x \qmult y) \qmult x \leq l(y)$ and equivalent, by the shift
  relation, to $x \qmult y \leq rl(x \qmult y)$, where the last
  inclusion holds since $r\circ l$ is a closure operator.
  The inclusion $j(x) \qmult y \leq j(x \qmult y)$ is proved similarly,
  using now $j = l \circ r$.

  Next, the restrictions of $(l,r)$ to $Q_{j}$ have $Q_{j}$ as codomain:
  if $x \in Q_{j}$, then $j(r(x)) = r(l(r(x))) = r(x)$, so
  $r(x) \in Q_{j}$. Similarly, $l(x) \in Q_{j}$.  These restrictions
  are also inverse to each other: for $x \in Q_{j}$,
  $l(r(x)) = r(l(x)) = j(x) = x$.  Let us argue that these
  restrictions are Serre: for $x,y ,z \in Q_{j}$,
  $x \qmult_{j} y = j(x \qmult y) \leq r(z)$ iff $x \qmult y \leq r(z)$
  (since $r(z) \in Q_{j}$) iff $z \qmult x \leq l(y)$ iff
  $z \qmult_{j} x = j(z \qmult x)\leq l(y)$, so the shift relation holds
  in $Q_{j}$.
\end{proof}

\begin{proposition}
  \label{prop:SerreGC2}
  Let $j$ be a nucleus on $(Q,\qmult)$ and suppose that
  $l,r : Q_{j} \rto Q_{j}$ is a Serre duality on
  $(Q_{j},\qmult_{j})$. Then $(l \circ j,r \circ j)$ is a Serre Galois
  connection on $(Q,\qmult)$ and the Frobenius quantale structure
  $(Q_{j},\qmult_{j},l,r)$ is induced from $(l \circ j,r \circ j)$ as
  described in Proposition~\ref{prop:SerreGC1}.
\end{proposition}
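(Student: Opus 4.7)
The plan is to verify the three conditions making $(l\circ j, r\circ j)$ a Serre Galois connection on $(Q,\qmult)$, and then to check that Proposition~\ref{prop:SerreGC1} applied to this connection recovers exactly the given data. Throughout, the main lever is that $l$ and $r$ send $Q_{j}$ to $Q_{j}$ (they are even inverse to each other there), that $r\circ j$ and $l\circ j$ already have image in $Q_{j}$, and that $j\colon (Q,\qmult)\to (Q_{j},\qmult_{j})$ is a surjective quantale morphism, so that $j(x\qmult y)=j(x)\qmult_{j} j(y)$.

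First, I would verify that $(l\circ j, r\circ j)$ is a Galois connection on $Q$. Given $x,y\in Q$, since $l(j(x)),r(j(y))\in Q_{j}$ and $z\leq w$ with $w\in Q_{j}$ iff $j(z)\leq w$, the condition $y\leq l(j(x))$ is equivalent to $j(y)\leq l(j(x))$, and similarly $x\leq r(j(y))$ is equivalent to $j(x)\leq r(j(y))$; the equivalence between these two then follows from the Galois connection $(l,r)$ on $Q_{j}$. Next, for commutation, a direct computation gives
\begin{align*}
(l\circ j)\circ(r\circ j)(x) &= l(j(r(j(x)))) = l(r(j(x))) = j(x),
\end{align*}
using that $r(j(x))\in Q_{j}$ and $l\circ r = \id$ on $Q_{j}$; symmetrically $(r\circ j)\circ(l\circ j)(x)=j(x)$. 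Thus the associated closure operator equals $j$ itself.

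For the shift relation, I would use that $j$ is a quantale morphism together with the shift relation on $Q_{j}$: since $r(j(z))\in Q_{j}$,
\begin{align*}
x\qmult y\leq r(j(z)) &\Tiff j(x\qmult y)\leq r(j(z)) \Tiff j(x)\qmult_{j} j(y)\leq r(j(z))\\
&\Tiff j(z)\qmult_{j} j(x)\leq l(j(y)) \Tiff j(z\qmult x)\leq l(j(y))\\
&\Tiff z\qmult x\leq l(j(y)),
\end{align*}
where the middle equivalence is the shift relation for the Serre duality $(l,r)$ on $Q_{j}$. This establishes that $(l\circ j,r\circ j)$ is a Serre Galois connection on $(Q,\qmult)$.

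Finally, applying Proposition~\ref{prop:SerreGC1} to $(l\circ j, r\circ j)$, the resulting nucleus is $j'=(l\circ j)\circ(r\circ j)=j$ by the commutation step above, so its fixed points form exactly $Q_{j}$ with multiplication $\qmult_{j}$. The induced antitone maps are the restrictions of $l\circ j$ and $r\circ j$ to $Q_{j}$, which for $x\in Q_{j}$ evaluate to $l(j(x))=l(x)$ and $r(j(x))=r(x)$. Hence the Frobenius quantale obtained from Proposition~\ref{prop:SerreGC1} coincides on the nose with $(Q_{j},\qmult_{j},l,r)$. The only point requiring any care is the bookkeeping between closed and non-closed elements, which is systematically handled by the remark that $l,r$ take values in $Q_{j}$ and that $j$ is a quantale morphism.
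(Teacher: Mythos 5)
Your proof is correct and takes essentially the same route as the paper: both verify the Galois connection on $Q$ by exploiting the equivalence $z\leq w \Tiff j(z)\leq w$ for $w\in Q_{j}$, derive the shift relation via the surjective quantale morphism $j$ and the shift relation on $Q_{j}$, compute that the associated closure operator is $j$ itself, and finally observe that the restrictions of $l\circ j$ and $r\circ j$ to $Q_{j}$ recover $l$ and $r$.
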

\begin{proof}
  Let $j,l,r$ be as stated. For $x,y,z \in Q$, we have
  $x \leq r(j(y))$ iff $j(x) \leq r(j(y))$ iff $j(y) \leq l(j(x))$ iff
  $y \leq l(j(x))$.
  Moreover
  \begin{align*}
    x \qmult y \leq r(j(z)) & \Tiff j(x \qmult y) = j(x) \qmult_{j} j(y)
    \leq
    r(j(z)) \\
    & \Tiff j(z \qmult x) = j(z) \qmult_{j} j(x) \leq l(j(y)) \Tiff z \qmult
    x \leq l(j(y))\,.
  \end{align*}
  Notice now that $j \circ r = r$ and $j \circ l = l$, since by
  assumption the codomain of $r$ and $l$ is $Q_{j}$. Then
  \begin{align*}
    (l \circ j) \circ (r \circ j) & = l \circ r \circ j = j
  \end{align*}
  and, similarly, $(r \circ j) \circ (l \circ j) = j$.  Finally, the
  restriction of $r \circ j$ (resp., $l \circ j$) to $Q_{j}$ is $r$
  (resp., $l$), for example, for $x \in Q_{j}$, we have
  $(r \circ j)(x) = r(j(x)) = r(x)$.
\end{proof}

We add next some  remarks.
\begin{definition}
  If $0$ is an element of a quantale $(Q,\qmult)$, then we say that $0$
  is \emph{weakly cyclic} if
  \begin{align*}
    0 \rlimpl (x \lrimpl 0) & = (0 \rlimpl x) \lrimpl 0\,, \quad \text{for
      all $x \in Q$.}
  \end{align*}
  We say that a Serre Galois connection $(l,r)$ on $(Q,\qmult)$ is
  representable by $0$ if
  \begin{align*}
    r(x) & = x \lrimpl 0\,,\qquad l(x) = 0 \rlimpl x\,, \qquad \text{for all
      $x \in Q$.}
  \end{align*}
\end{definition}
Observe that if a Serre Galois connection $(l,r)$ is representable by
$0\in Q$, then $0$ is a weakly cyclic element.

\begin{lemma}
  If $(Q,\qmult)$ is unital, then every Serre Galois connection is  
  representable.
\end{lemma}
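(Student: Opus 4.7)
The plan is to identify the representing element as $0 \eqdef r(1) = l(1)$ and then read off the formulas for $r$ and $l$ by specializing the shift relation with $1$ on one side.

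First I would show that $r(1) = l(1)$. Applying the shift relation with $z = y = 1$ gives $x \qmult 1 \leq r(1)$ iff $1 \qmult x \leq l(1)$; since $1$ is a unit, this collapses to $x \leq r(1)$ iff $x \leq l(1)$ for every $x \in Q$, hence $r(1) = l(1)$. Call this common value $0$.

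Next I would instantiate the shift relation with $z = 1$: for all $x,y \in Q$,
\begin{align*}
x \qmult y \leq r(1) = 0 & \Tiff 1 \qmult x \leq l(y)\,,
\end{align*}
which, using unitality, reads $x \qmult y \leq 0$ iff $x \leq l(y)$. Taking the join over all such $x$ and using the definition of $\rlimpl$ gives $l(y) = 0 \rlimpl y$. Dually, instantiating the shift relation with $y = 1$ yields $x \leq r(z)$ iff $z \qmult x \leq 0$, hence $r(z) = z \lrimpl 0$.

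There is no real obstacle here: the only conceptual point is discovering the right element $0$, which is forced by taking $y = z = 1$ in the shift relation, and everything else is a direct rewriting using the unit. The proof will be a few lines long.
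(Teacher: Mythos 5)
Your proof is correct and follows essentially the same approach as the paper's: both specialize the Serre condition to the unit $1$ to read off the representing element $0 = r(1) = l(1)$ and the formulas for $l$ and $r$. The only cosmetic differences are that you work from the shift relation and the explicit join-definition of the residuals, whereas the paper works directly from the identity form $r(z) \rlimpl y = z \lrimpl l(y)$ together with $x \rlimpl 1 = x$; and you establish $r(1)=l(1)$ first rather than last.
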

\begin{proof}
  If $1$ is the unit of $(Q,\qmult)$, then
  \begin{align*}
    r(y) & = r(y) \rlimpl 1 = y \lrimpl l(1)
    \qquad \text{and}\qquad
    l(y)  = 1 \lrimpl l(y) = r(1) \rlimpl  y\,.
  \end{align*}
  Moreover
  \begin{align*}
    r(1) & = r(1) \rlimpl 1 = 1 \lrimpl l(1) = l(1)\,,
  \end{align*}
  and therefore $(l,r)$ is representable by $0 = r(1) = l(1) \in
  Q$. 
\end{proof}
\begin{lemma}
  \label{lemma:quotients}
  Let $0$ be a weakly cyclic element of a quantale $(Q, \qmult)$ and
  set $r(x) \eqdef x \lrimpl 0$ and $l(x) \eqdef 0 \rlimpl x$. Then
  $(l,r)$ is a representable Serre Galois connection. If $0$ is
  $j$-closed (with $j = r \circ l = l \circ r$), then
  $(Q_{j},\qmult_{j})$ is unital. If $(Q, \qmult)$ is unital, then $0$
  is $j$-closed.
\end{lemma}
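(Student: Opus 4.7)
The proof naturally splits into the three claims in the statement, which I would prove in order.

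\medskip
\noindent\textbf{Part 1 (Serre Galois connection and representability).}
Representability is immediate from the definitions. Antitonicity of $l,r$ is standard. For the Galois connection, I would note that both inequalities $y \leq r(x)$ and $x \leq l(y)$ are equivalent to $x \qmult y \leq 0$ via the universal properties of $\lrimpl$ and $\rlimpl$. For the shift relation I would chain
\begin{align*}
x \qmult y \leq r(z) \;\;\iff\;\; z\qmult x\qmult y \leq 0 \;\;\iff\;\; z \qmult x \leq l(y)\,,
\end{align*}
again using the same universal properties together with associativity of $\qmult$. Finally, $l\circ r = r \circ l$ is literally the weak cyclicity of $0$.

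\medskip
\noindent\textbf{Part 2 (unit of $(Q_j,\qmult_j)$ when $0$ is $j$-closed).}
The key observation is that $r$ maps $Q$ onto $Q_j$: indeed $j\circ r = r\circ l\circ r = r$, and every $y\in Q_j$ satisfies $y=j(y)=r(l(y))$. Let $e \eqdef r(0) = 0\lrimpl 0 \in Q_j$. I would show $e$ is a right unit for $\qmult_j$ by computing, using the shift relation and the assumption $l(e)=l(r(0))=j(0)=0$:
\begin{align*}
x\qmult e \leq r(z) \;\;\iff\;\; z\qmult x \leq l(e)=0 \;\;\iff\;\; x\leq r(z)\,.
\end{align*}
Since every element of $Q_j$ has the form $r(z)$, this yields $x\qmult e \leq y \iff x\leq y$ for all $y\in Q_j$. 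Taking $y=x\in Q_j$ gives $x\qmult e \leq x$, so $x\qmult_j e = j(x\qmult e)\leq x$; taking $y = j(x\qmult e)$ gives the reverse inequality. Thus $x\qmult_j e=x$. By the mirror argument (using the symmetric shift $a\qmult b\leq l(c)\iff b\qmult c\leq r(a)$ and $r(l(0))=j(0)=0$), $f\eqdef l(0)$ is a left unit of $\qmult_j$. A left and right unit in a semigroup coincide, so $e=f$ is the unit of $(Q_j,\qmult_j)$.

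\medskip
\noindent\textbf{Part 3 (if $Q$ is unital, $0$ is $j$-closed).}
Let $1$ be the unit of $Q$. From $1\qmult 0 = 0 \leq 0$ I get $1 \leq 0\lrimpl 0 = r(0)$. On the other hand, the Galois connection gives $l(y)\qmult y \leq 0$ for every $y$ (since $l(y)\leq l(y)$), and applied at $y=r(0)$ this reads $j(0)\qmult r(0)\leq 0$. Combining,
\begin{align*}
j(0) \;=\; j(0)\qmult 1 \;\leq\; j(0)\qmult r(0) \;\leq\; 0\,,
\end{align*}
and since $0\leq j(0)$ holds by increasingness of $j$, equality follows.

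\medskip
The plan involves no real obstacle; if anything, the subtlety is in Part~2, where one must keep track of the fact that the shift relation allows one to absorb the closure $j$ only because $0$ itself lies in $Q_j$, and that $r(Q)\subseteq Q_j$ is what lets the argument over $Q_j$ reduce to an inequality over $Q$.
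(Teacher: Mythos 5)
Your proof is correct, but Part~2 takes a genuinely different route from the paper's. For Part~1 the paper just says ``obvious,'' and for Part~3 it writes $0 = r(1) = rlr(1) = j(0)$, which is a slightly slicker packaging of the same computation you do via $1 \leq r(0)$ and $j(0) \qmult r(0) \leq 0$; there is no meaningful difference. For Part~2, the paper's argument is indirect: observe that when $0$ is $j$-closed the restrictions of $r$ and $l$ to $Q_j$ coincide with the implications $\fun\lrimpl_j 0$ and $0\rlimpl_j\fun$ of the quotient (since $j(x\qmult y)\leq 0 \Leftrightarrow x\qmult y\leq 0$), and these restrictions are inverse to each other by Proposition~\ref{prop:SerreGC1}; hence $0$ is a dualizing element of $(Q_j,\qmult_j)$ in the sense of Definition~\ref{defi_A}, and unitality then follows from the classical fact cited earlier that a quantale with a dualizing element is unital. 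You instead produce the candidate unit $r(0)=0\lrimpl 0$ explicitly, and verify via the shift relation, the identity $l(r(0))=j(0)=0$, and the fact that $r$ and $l$ surject onto $Q_j$, that it is a two-sided unit of $\qmult_j$. Your approach is longer but self-contained---it avoids invoking the dualizing-implies-unital theorem and effectively re-derives it in this setting---whereas the paper's is shorter by leaning on the established result. Both are valid; you might want to note explicitly that $e = r(0) \in Q_j$ and $f = l(0)\in Q_j$ (immediate, since $r,l$ land in $Q_j$), which is what makes the final left-unit/right-unit coincidence argument legitimate inside $(Q_j,\qmult_j)$.
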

\begin{proof}
  The first statement is obvious.  If $0$ is $j$-closed, then the
  relations $r(x) = x \lrimpl 0$ and $l(x) = 0 \rlimpl x$ hold within
  $Q_{j}$, whence $0$ is a dualizing element of $Q_{j}$ which
  therefore is unital.  Finally, if $1$ is the unit of $(Q,\qmult)$,
  then $0 = r(1) = rlr(1)= l(1) = lrl(1)$, showing that $0$ is
  $j$-closed.
\end{proof}

\begin{lemma}
  \label{lemma:commuting}
  For $(l,r)$ a Galois connection on $Q$, the condition
  $l \circ r = r \circ l$ holds if and only if the images of $l$ and
  $r$ in $Q$ coincide.
\end{lemma}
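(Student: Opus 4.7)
The plan is to reduce the claim to two standard facts about antitone Galois connections, and then note that neither direction requires any calculation beyond these facts.

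First I would recall the structural picture. Since $(l,r)$ is a Galois connection in the sense of the paper (antitone on both sides), both composites $j_{rl} \eqdef r \circ l$ and $j_{lr} \eqdef l \circ r$ are closure operators on $Q$. Moreover, using the basic identities $l \circ r \circ l = l$ and $r \circ l \circ r = r$, one sees that the fixed points of $j_{rl}$ are exactly the elements of the image $r(Q)$, and the fixed points of $j_{lr}$ are exactly the elements of the image $l(Q)$. I would state this cleanly up front rather than reprove it in place.

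Next I would invoke the fact that a closure operator on a complete lattice is fully determined by its set of fixed points, since for any closure $c$ the value $c(x)$ is the least fixed point of $c$ above $x$. Combined with the previous paragraph, this immediately gives the \emph{only if} direction: if $l \circ r = r \circ l$, the two closures have the same fixed points, so $l(Q) = r(Q)$. For the \emph{if} direction, suppose $l(Q) = r(Q)$; then $j_{rl}$ and $j_{lr}$ are two closure operators on $Q$ with identical fixed-point sets, so they must coincide, yielding $r \circ l = l \circ r$.

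There is no real obstacle here: the proof is essentially bookkeeping once one has the two facts above in hand. The only pitfall worth watching is to be careful about the antitone convention in the paper's definition of Galois connection, so that the identifications $\mathrm{Fix}(r\circ l) = r(Q)$ and $\mathrm{Fix}(l\circ r) = l(Q)$ come out on the correct sides; verifying these from $l \circ r \circ l = l$ and $r \circ l \circ r = r$ is the one small computation I would actually spell out.
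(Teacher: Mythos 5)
Your proof is correct and follows exactly the same route as the paper: identify $\mathrm{Fix}(r\circ l)=r(Q)$ and $\mathrm{Fix}(l\circ r)=l(Q)$ via $r\circ l\circ r=r$ and $l\circ r\circ l=l$, then invoke the fact that a closure operator is determined by its fixed-point set. Nothing to add.
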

\begin{proof}
  For a map $f$ let $Img(f)$ denote its image.  Notice that
  $Img(r) = Q_{r\circ l}$, since $r(x) = r(l(r(x)))$ and similarly
  $Img(l) = Q_{l\circ r}$. Therefore, if $r \circ l = l \circ r $,
  then $Img(r) = Img(l)$. Conversely, if $Img(r) = Img(l)$, then
  $Q_{r\circ l} = Q_{l\circ r}$ and since a closure operator is
  determined by the set of its fixed points, then
  $l \circ r = r \circ l$.
\end{proof}

\begin{Example}[Trivial quantales]
  As we do not require units, each
  complete lattice $Q$ can be endowed with the trivial quantale
  structure, defined by:
  \begin{align*}
    x \qmult y & \eqdef \bot\,, \quad \text{for each $x,y \in Q$.}
  \end{align*}
  The trivial quantale structure clearly is not unital, unless $Q$ is
  a singleton.  Notice that, for the trivial structure,
  \begin{align*}
    x \lrimpl y = y \rlimpl x & = \top\,, \quad \text{for each
      $x,y \in Q$.}
  \end{align*}
  If $Q$ is autodual, say with $l,r : Q \rto Q$ inverse to each other,
  then $(Q,\qmult,l,r)$ is a unitless Frobenius quantale.
\end{Example}

The trivial structure is a source of counter-examples to quick
conjectures.  Consider $(\two,\qmult)$, the two-element Boolean algebra
equipped with the trivial quantale structure.  Then we have
$x \qmult y \leq r(z)$ if and only if $y \qmult z \leq l(x)$ no matter how
we choose $(l,r)$. Thus, we might chose $(l,r)$ not antitone and yet
satisfying the shift relation (take $r = l$ be the identity of
$\two$), or choose $(l,r)$ antitone but not a Galois connection (take
$r = l$ be the constant function taking $\bot$ as its unique value).

Also, consider the diamond lattice with 3 atoms (see
figure~\ref{fig:pentagonDiamond}) and endow it with the trivial
quantale structure.
Take for $r$ a duality that cyclically permutes the 3 atoms and let
$l$ be its inverse, so to have $l \neq r$. Then $(l,r)$ is a Serre
duality on the commutative quantale over $M_3$. On the other hand, every
representable Serre duality on a commutative quantale is necessarily
self-adjoint (i.e. we have $r = l$).

It might be asked whether $(l,r)$ representable on $Q$ implies that
$(l,r)$ is representable on $Q_{j}$. This is not the case. On the
chain $0 < 1 < 2$, consider the following commutative quantale structure:
$$
\begin{array}{r|ccc}
  \qmult & 0 & 1 & 2 \\
  \hline
  0 & 0 & 0 & 0 \\
  1 & 0 & 0 & 0 \\
  2 & 0 & 0 & 1 
\end{array}
$$
By commutativity, $0$ is weakly cyclic and, letting
$r(x) = x \lrimpl 0$, we have $r(0) = r(1) = 2$ and $r(2) =
1$. Therefore $Q_{j} = \set{1,2}$ and the $\qmult_{j}$ is the trivial
quantale structure on $Q_{j}$. As the implication of the trivial
structure is constant, it cannot yield the (unique) duality of
$\set{1,2}$. Considering Lemma~\ref{lemma:quotients}, notice here that
$0$ is not $j$-closed.

It is not difficult to reproduce similar counter-examples with the
multiplication being non-trivial (i.e. $x \qmult y \neq \bot$, for some
$x,y \in Q$).

\Myparagraph{Phase quantales.}
We explore next the construction described in
Propositons~\ref{prop:SerreGC1} and~\ref{prop:SerreGC2} when
$(Q,\qmult)$ is the free quantale $(P(S), \bullet)$ over a semigroup
$(S,\cdot)$. Recall (see e.g. \cite{EGHK2018}) that
\begin{align*}
  X\bullet Y & \eqdef \set{ x \cdot y  \mid  x\in X, \,y\in Y}\,,
  \\
  X\lrimpl Y & = \set{s \in S  \mid  s \cdot x \in Y, \text{ for all } x\in
    X}\,, \\
  Y\rlimpl X & = \set{ s\in S  \mid  x\cdot s \in Y, \text{ for all } x\in X}\,.
\end{align*}
It is well-known (see e.g. \cite{Ore1944}) that Galois connections
$(l,r)$ on $P(S)$ bijectively correspond to binary relations $R$ via the
correspondence
\begin{align*}
  x R y & \Tiff x \in l(\,\set{y}\,) \Tiff y \in r(\,\set{x}\,)
  \intertext{so} 
  r(Z) & = \set{ u \in S \mid z R u, \text{ for all } z \in Z }\,,
  \quad l(Y) = \set{ u \in S \mid u R y, \text{ for all } y \in Y
  }\,.
\end{align*}

We aim ro characterize Serre Galois connections via properties of
their corresponding relation.
\begin{proposition}
  A Galois connection $(l,r)$ on $(P(S), \bullet)$
  satisfies the shift relation~\eqref{eq:shift} if and only if its corresponding binary
  relation $R$ satisfies 
  \begin{align}
    \label{eq:defRassoc}
    x \cdot y R z   \Tiff x R y \cdot z  \,, \quad
    \text{for
    all $x,y,z \in S$}.
  \end{align}

\end{proposition}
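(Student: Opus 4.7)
\medskip

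The plan is to translate the shift relation and the relational condition into pointwise statements about elements of $S$, using the well-known fact that on $P(S)$ the inequalities $X \bullet Y \subseteq Z$ unfold into universally quantified membership statements. Both directions then reduce to the observation that $X \bullet Y \subseteq r(Z)$ is equivalent, via the correspondence between $(l,r)$ and $R$, to $z \, R \, (x \cdot y)$ for all $x \in X$, $y \in Y$, $z \in Z$, and dually $Z \bullet X \subseteq l(Y)$ to $(z \cdot x) \, R \, y$ for the same elements.

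For the \emph{only if} direction, I would specialise~\eqref{eq:shift} to singletons. Taking $X = \{x\}$, $Y = \{y\}$, $Z = \{z\}$, we get $\{x \cdot y\} \subseteq r(\{z\})$ iff $\{z \cdot x\} \subseteq l(\{y\})$, that is, $z \, R \, (x \cdot y) \iff (z \cdot x) \, R \, y$. Relabelling, this is exactly~\eqref{eq:defRassoc}.

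For the \emph{if} direction, one unpacks the inequalities. The chain
\begin{align*}
  X \bullet Y \subseteq r(Z)
  & \Tiff \forall x \in X,\, \forall y \in Y,\;\; x\cdot y \in r(Z) \\
  & \Tiff \forall x \in X,\, \forall y \in Y,\, \forall z \in Z,\;\; z \, R \, (x \cdot y) \\
  & \Tiff \forall x \in X,\, \forall y \in Y,\, \forall z \in Z,\;\; (z \cdot x) \, R \, y \\
  & \Tiff Z \bullet X \subseteq l(Y)
\end{align*}
yields the shift relation, where the third equivalence uses~\eqref{eq:defRassoc} applied to the triple $(z, x, y)$ and the remaining equivalences only unfold the definitions of $\bullet$ and of the correspondence $R \leftrightarrow (l,r)$.

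There is no real obstacle here: the content is precisely the observation that on a \emph{free} quantale both the multiplication and the Galois connection are determined by their behaviour on singletons, so an associativity-like condition between $\bullet$ and the pair $(l,r)$ is forced to come from the same condition relating $\cdot$ and $R$. The only care needed is to keep the order of the arguments of $R$ aligned with the non-commutative convention $xRy \iff y \in r(\{x\}) \iff x \in l(\{y\})$, which is exactly what makes the variable $z$ appear on the left of $R$ in both intermediate lines above.
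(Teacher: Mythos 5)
Your proposal is correct and follows essentially the same route as the paper's own proof: both directions come from the observation that $X \bullet Y \subseteq r(Z)$ unfolds to $z \mathrel{R} (x\cdot y)$ for all $x\in X,\,y\in Y,\,z\in Z$ and dually $Z \bullet X \subseteq l(Y)$ to $(z\cdot x)\mathrel{R}y$, with the shift relation specialised to singletons giving~\eqref{eq:defRassoc} and the converse obtained by lifting that pointwise condition back to arbitrary subsets. The paper merely presents the chain of equivalences once and reads it in both directions, whereas you split the two implications; this is a cosmetic difference only.
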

\begin{proof}
  If $(l,r)$ satisfie ~\eqref{eq:shift}, then the following is a
  chain of equivalent statements 
  (where $X,Y,Z$ are subsets of $S$):
  \begin{align*}
    \forall x \in X,y \in Y , z \in Z, \;z R x \cdot y & \Tiff
    X \bullet Y  \subseteq r(Z)\\
    & \Tiff
    Z \bullet X  \subseteq l(Y)\\
    & \Tiff \forall x \in X,y \in Y , z \in Z,\; z \cdot x R y\,.
  \end{align*}
  Considering singletons we obtain that $R$ satisfies
  \begin{align*}
    \forall x,y,z \in S\;\; z R x \cdot y \Tiff z \cdot x R y\,,
  \end{align*}
  and clearly the latter condition suffices for \eqref{eq:shift} to
  hold.
\end{proof}
We call \emph{associative} a relation satisfying~\eqref{eq:defRassoc}.
To complete the characterisation of Serre Galois connections, we need
to characterise the condition $l \circ r = r\circ l$ in terms of the
corresponding relation $R$. Notice first that the condition trivially
holds if $R$ is symmetric, in which case $r = l$.

By Lemma~\ref{lemma:commuting}, $r\circ l = l \circ r$ if and only if
$Im(r) =Im(l)$, that is
\begin{align*}
  \forall X \exists Y\, \Tst r(X) = l(Y)\,,\quad\tand\quad
  \forall Y \exists X\, \Tst r(X) = l(Y)\,.
\end{align*}
Clearly, these two conditions hold if and only if they hold with $X$
(in the fist case, or $Y$ in the second) restricted to singleton,
yielding
\begin{align}
  \label{cond:comm1}
  &\forall x \exists Y_{x} \forall z \; (x R z \Tiff z R y,\, \forall
  y \in Y_{x})\,, \\
  \label{cond:comm2}
  & \forall y \exists X_{y} \forall z \; (z R y \Tiff x R z,\, \forall
  x \in X_{y})\,.
\end{align}
We call \emph{weakly-symmetric} a relation
satisfying~\eqref{cond:comm1} and~\eqref{cond:comm2}.

\medskip

We exemplify our previous observations, in particular the use of
Proposition~\ref{prop:SerreGC1}, in different ways.
\begin{Example}
  Our first example illustrates how to construct ad-hoc unitless
  Frobenius quantales with almost no effort.
  
  Consider $(\Sigma^{+},\cdot)$, the free semigroup over the alphabet
  $\Sigma$.
  Let $R$ be the binary relation total on letters and empty otherwise:
  $w R u$ iff $w,u \in \Sigma$. Due to the absence of units, condition
  \eqref{eq:defRassoc} is trivially satisfied. Notice also that $R$ is
  symmetric.
  The closed subsets of $\Sigma^{+}$ are $\emptyset$, $\Sigma$, and
  $\Sigma^{+}$ and the table for the multiplication is as follows:
  $$
  \begin{array}{r|ccc}
    &\emptyset & \Sigma & \Sigma^{+} \\
    \hline
    \emptyset & \emptyset & \emptyset & \emptyset \\
    \Sigma & \emptyset & \Sigma^{+}& \Sigma^{+}\\
    \Sigma^{+} & \emptyset &\Sigma^{+}& \Sigma^{+}
  \end{array}
  $$
  This quantale does not have units but it has a Serre duality, the
  unique duality of the chain
  $\emptyset \subseteq \Sigma \subseteq \Sigma^{+}$.
\end{Example}

\begin{Example}[Unital Frobenius quantales from pregroups]
  Here we illustrate how the same tools can be used to build Frobenius
  quantales that are not Girard quantales.
  Recall that a pregroup, see e.g. \cite{Buszkowski2001}, is an
  ordered monoid $(M,1,\cdot, \leq)$ coming with functions
  $l,r : M \rto M$ satisfying
  \begin{align*}
    x \cdot x^{r} &\leq 1\,, & x^{l} \cdot x & \leq 1\,,&
    1& \leq x^{r} \cdot x\,,& 1 \leq  x
    \cdot x^{l}\,,
  \end{align*}
  for all $x \in M$.  That is, a pregroup is a posetal rigid
  category. Consider the relation $R$ defined by $x R y$ iff
  $x \cdot y \leq 1$. Clearly, $R$ is associative.  Observe now that
  $x \cdot z \leq 1$ if and only if $z \cdot x^{r}{}^{r}\leq 1$. That
  is, the condition in \eqref{cond:comm1} is satisfied by letting
  $Y_{x} = \set{x^{r}{}^{r}}$. Similarly, \eqref{cond:comm2} is
  satisfied by letting $X_{y} = \set{y^{l}{}^{l}}$. Since $M$ has a
  unit, both $P(M)$ and $P(M)_{j}$ are unital.
\end{Example}

\begin{Example}[Unitless Girard quantales from \Cstar-algebras]
  Let us consider an algebra $A$ coming with an associative symmetric
  pairing (i.e. a bilinear form) $\pairing{-,-}$ into the base field
  $\K$.  We do not assume that $A$ has a unit. We recall that a
  pairing is said to be associative if it satisfies
  $\langle x\cdot y,z\rangle = \langle x,y\cdot z\rangle$, for each
  $x,y,z \in A$ (in which $A$ is called a Frobenius algebra).

  The binary relation $R$, defined by $xR y$ if and only if
  $\langle x,y\rangle = 0$, is then an associative relation on the
  semigroup reduct $(A,\cdot)$ of $A$, so we can consider the powerset
  quantale $(P(A),\bullet)$ and the Serre Galois connection $(l,r)$ the
  relation $R$ gives rise to. Since $R$ is symmetric, we denote
  $l(X) = r(X) = \Rneg{X}$, as usual from standard algebra, and so
  $j(X) = \Dneg{X}$. Clearly, if $A$ has a unit, then $P(A)$ and its
  quotient $P(A)_{j}$ have a unit as well.

  \medskip
  
  We argue next that the converse holds, if we can transform the
  pairing into a sort of inner product (as for example with
  \Cstar-algebras). Namely, suppose that $A$ comes with an involution
  $\Star{\fun} : A \rto A$ such that, for each $f \in A$,
  \begin{align}
    \label{cond:preinner}
    \pairing{f,\Star{f}} = 0 & \Timplies f = 0\,.
  \end{align}
  In particular, assuming \eqref{cond:preinner}, we have
  $\Rneg{A} = \set{0}$. Under these assumptions, the following
  statement holds:
  \begin{proposition}
    If $P(A)_{j}$ has a unit, then $A$ has a
    unit.
  \end{proposition}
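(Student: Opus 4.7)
Let $U \subseteq A$ denote the unit of $P(A)_{j}$. The plan is to first observe that, thanks to the non-degeneracy $A^{\perp} = \set{0}$ derived earlier, the $j$-closure $\Dneg{\set{a}}$ of a non-zero singleton equals the one-dimensional subspace $\K a$; the unit identity will then force each $u \in U$ to act on $A$ as scalar multiplication by some $c_{u} \in \K$, and the linear-subspace structure of $U$ will allow me to rescale one such $u$ into a genuine unit.

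I will first verify $\Dneg{\set{a}} = \K a$ for non-zero $a$. One inclusion is trivial; for the converse, if $w \in \Dneg{\set{a}}$ then the linear functional $\pairing{w,-}$ vanishes on $\ker \pairing{a,-} = \set{a}^{\perp}$, so by the standard fact that two linear functionals with nested kernels are proportional (and $\pairing{a,-} \neq 0$ because $A^{\perp} = \set{0}$) there exists $\lambda \in \K$ with $\pairing{w,-} = \lambda \pairing{a,-}$, whence $w - \lambda a \in A^{\perp} = \set{0}$ and $w \in \K a$.

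Next I will turn the unit identity into a pointwise statement. For non-zero $a$, the identity $U \qmult_{j} \Dneg{\set{a}} = \Dneg{\set{a}}$ together with the nucleus relation $\Dneg{U \bullet \Dneg{\set{a}}} = \Dneg{U \bullet \set{a}}$ yields $\Dneg{U \bullet \set{a}} = \Dneg{\set{a}} = \K a$, and hence $U \cdot a \subseteq \K a$. Thus for each $u \in U$ one may write $u \cdot a = \chi_{a}(u)\, a$ for a scalar $\chi_{a}(u) \in \K$; expanding $u \cdot (a + b)$ in two ways (treating the collinear case by scaling) reveals that $\chi_{a}(u)$ is independent of $a$. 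Writing the common value as $c_{u}$, I obtain the identity $u \cdot a = c_{u}\, a$ for every $u \in U$ and every $a \in A$.

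Finally, because $\Dneg{U \bullet \set{a}} = \K a$ is non-zero whenever $a \neq 0$, some $u_{0} \in U$ must have $c_{u_{0}} \neq 0$. Since $U = (U^{\perp})^{\perp}$ is a linear subspace of $A$, the rescaled element $u_{*} := c_{u_{0}}^{-1} u_{0}$ still lies in $U$, and by construction $u_{*} \cdot a = a$ for every $a \in A$, producing a left unit. The symmetric argument applied to the right-unit identity $\Dneg{\set{a} \bullet U} = \Dneg{\set{a}}$ yields a right unit $u'_{*}$, after which the customary $u_{*} = u_{*} \cdot u'_{*} = u'_{*}$ delivers a two-sided unit of $A$. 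The step on which everything hinges is the identification $\Dneg{\set{a}} = \K a$: once this ``one-dimensional closure'' is in place, the rest is routine bookkeeping with bilinearity and with the fact that $j$-closed sets are linear subspaces.
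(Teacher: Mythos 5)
Your proposal is correct and shares the overall architecture of the paper's proof: first establish that one-dimensional subspaces are $j$-closed, deduce from the unit law that $U\cdot a\subseteq \K a$, show the scalar is independent of $a$, then rescale one element of $U$ into a genuine left (and by symmetry right) unit of $A$. Where you diverge is in the key lemma $\Dneg{\set{a}}=\K a$. The paper obtains it by exhibiting $\K f$ and $\Rneg{\set{\Star{f}}}$ as complementary subspaces of $A$ and then invoking modularity of the subspace lattice; you instead observe that $w\in\Dneg{\set{a}}$ forces $\ker\pairing{a,-}\subseteq\ker\pairing{w,-}$, apply the elementary ``nested kernels'' fact to get $\pairing{w,-}=\lambda\pairing{a,-}$, and conclude $w-\lambda a\in\Rneg{A}=\set{0}$. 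This is a genuinely cleaner route: it avoids modularity altogether, does not manipulate the involution $\Star{\fun}$ directly in the closure computation, and isolates exactly which consequence of condition \eqref{cond:preinner} is used, namely non-degeneracy of the pairing. The paper additionally notes that $\dim U=1$, which your argument correctly treats as dispensable: all you need is a single $u_{0}\in U$ with nonzero scalar, which you extract from $\Dneg{U\bullet\set{a}}=\K a\neq\set{0}$. Both proofs handle the collinear/independent case split and the final combination of left and right units in the same way.
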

  \begin{proof}
    Obviously, every set of the form $\Rneg{X}$ is a subspace of
    $A$. Let us say that a subpace is closed if it is of the form
    $j(Y)$ or, equivalently, $\Rneg{X}$ (for some $X$ or some $Y$).

    We firstly observe that every one dimensional subspace of $A$ is
    closed. That is, we have $j(\set{f}) = \K f$, for each
    $f \in A \setminus \set{0}$. Indeed, for $g \in A$ and
    $k = \frac{\pairing{g,\Star{f}}}{\pairing{f,\Star{f}}}$, we can
    write $g = (g - kf) + kf$ with $g - kf \in
    \Rneg{\set{\Star{f}}}$. Because of \eqref{cond:preinner}, we also
    have $f \not \in \Rneg{\set{\Star{f}}}$. That is, we have and
    $\K f \vee \Rneg{\set{\Star{f}}} = A$ and
    $\K f \land \Rneg{\set{\Star{f}}} = \set{0}$ in the lattice of
    subspaces of $A$.  The relation
    $\K f \vee \Rneg{\set{\Star{f}}} = A$ a fortiori implies the
    relation $j(\K f) \vee_{j} \Rneg{\set{\Star{f}}} = A$, taken in
    the lattice of closed subpaces of $A$.  Using the duality, we
    derive
    $\set{0} = \Rneg{A} = \Rneg{(j(\K f) \vee_{j}
      \Rneg{\set{\Star{f}}})} = \Rneg{j(\K f)} \land_{j}
    \Dneg{\set{\Star{f}}} = \Rneg{\set{f}} \land
    \Dneg{\set{\Star{f}}}$, where we have used that both meets
    $\land$ (the meet in the lattice of subspaces of $A$) and
    $\land_{j}$ (the meet in the lattice of closed subspace of $A$)
    are computed as intersection.
    Then we also have $\set{0} = \Rneg{\set{\Star{f}}} \land
    \Dneg{\set{\SStar{f}}} = \Rneg{\set{\Star{f}}} \land
    \Dneg{\set{f}}$.
    It follows that both $\K f$ and $j(\set{f}) = \Dneg{\set{f}}$ are
    complements of $\Rneg{\set{\Star{f}}}$ in the lattice of
    subspaces of $A$.  Since $\K f \subseteq j(\set{f})$, we derive
    $\K f = j(\set{f})$ using modularity of this lattice.

    Let now $U$ be the unit of $P(A)_{j}$. Then
    $U \bullet \set{f} = j(U) \bullet \set{f} \subseteq j(U \bullet
    \set{f}) \subseteq j(U \bullet j(\set{f})) = U \bullet_{j}
    j(\set{f}) = j(\set{f}) = \K f$ and, similarly,
    $\set{f} \bullet U \subseteq \K f$.
    Thus, for each $u \in U$ and $f \in A$, $u \cdot f = k f$ and
    $f \cdot u = k' f$ for some $k,k' \in \K$. In particular, for
    $u,v \in U$, then $k u = u \bullet v = k' v$, showing that $U$ has
    dimension $1$ (the unit $U$ cannot be $\set{0}$ unless
    $A = \set{0}$).  We claim now that, given $u \in U$, there exists
    $k \in \K$ such that, for all $f \in A$, $u \cdot f = k f$.
    Indeed, let $f,g \in A \setminus \set{0}$ and write
    $u\cdot f = k_{f}f $ and $u \cdot g = k_{g} g$.  If $f = k g$,
    then
    $k_{f}f = u \cdot f = u \cdot k g = k (u \cdot g) = kk_{g}g =
    k_{g}kg = k_{g} f$, thus $k_{f} = k_{g}$.  Suppose now that $f,g$
    are linearly independent.  Then, for some $k \in \K$,
    $k f + k g = k (f + g) = u\cdot (f + g) = u \cdot f + u \cdot g =
    k_{f}f + k_{g}g$ which yields the relation
    $(k- k_{f})f = (k_{g} - k)g$ and $k_{f} = k = k_{g}$.
    We have proved the claim from which it readily follows that
    $u' \eqdef \frac{1}{k} u$ is a left unit of $A$. Similarly, we can
    find a right unit $\tilde{u}$ and, as usual, $\tilde{u} = u'$.
    Thus $A$ is unital.
  \end{proof}

  The \Cstar-algebra of $n\times n$ matrices over the complex numbers
  $\C$ comes with the pairing $\pairing{A,B} = tr(A \cdot B)$, where
  $\Star{A}$ is the conjugate transpose of $A$. This algebra is unital
  and gives rise to a well known Girard quantale, see e.g \cite[\S
  2.6.15]{EGHK2018}.
  We can adapt this construction to consider classes of linear
  operators on an infinite dimensional Hilbert space $H$. A continuous
  linear mapping $f : H \rto H$ is trace class, see e.g. \cite[\S
  18]{Conway1990}, 
  if
  \begin{align*}
    \sum_{e \in {\mathcal{E}}} \innerH{|f|e,e} < \infty\,,
  \end{align*}
  where $\innerH{-,-}$ above is the inner product of the Hilbert
  space $H$, ${\mathcal{E}}$ is an orthonormal basis, and $|f|$ is the
  unique positive operator such that $\Star{f} \circ f = |f|^{2}$. For
  such an operator, we can define
  \begin{align}
    tr(f) & \eqdef \sum_{e \in {\mathcal{E}}} \innerH{f(e),e}\,,
    \qquad \pairing{f,g} \eqdef tr(f\cdot g)\,,
    \label{def:tracePairing}
  \end{align}
  yielding an associative symmetric pairing.  Now, $f$ is trace class
  if and only if its adjoint $\Star{f}$ is trace class. With respect
  to the involution given by the adjoint,
  this pairing
  satisfies
  \eqref{cond:preinner}.
  The algebra of trace class operators does not have a unit. First of
  all, it is easily seen that the identity is not trace class. Indeed,
  this algebra is a well-known proper (when $H$ is infinite
  dimensional) ideal of the algebra of bounded linear operators on
  $H$. Most importantly, this algebra cannot have a unit. In fact, for
  each $h \in H$ there exists a trace class operator $c_{h}$ and
  $p_{h} \in H$ such that $c_{h}(p_{h}) = h$. 
  For example, if $\mathcal{E}$ is a basis for $H$, we can let
  $e_{0} \in \mathcal{E}$, and define $c_{h}(e_{0}) = h$, and
  $c_{h}(e) = 0$, for $e \in \mathcal{E} \setminus \set{e_{0}}$.  
  Then a unit $u$ is
  forcedly the identity:
  \begin{align*}
    u(h) & = u(c_{h}(p_{h})) = (u \circ c_{h})(p_{h}) = c_{h}(p_{h}) =
    h\,, \qquad\text{for all $h \in H$.}
  \end{align*}
  We collect these observations into a formal statement.
  \begin{theorem}
    The collection of closed subspaces of the algebra of trace class
    operators is a unitless Girard quantale.
  \end{theorem}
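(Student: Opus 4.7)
The plan is to observe that essentially everything has been set up in the preceding discussion, so the theorem follows by assembling the pieces and invoking the contrapositive of the previous proposition.

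First, I would verify that the algebra $A$ of trace class operators on $H$, equipped with the pairing $\pairing{f,g} = tr(f\cdot g)$ of~\eqref{def:tracePairing} and the involution $\Star{(-)}$ given by the Hilbert-space adjoint, satisfies all the hypotheses required by the previous Example: namely, $A$ is an associative algebra closed under the adjoint, the pairing is bilinear, symmetric (since $tr(f\cdot g) = tr(g\cdot f)$ on the trace-class ideal), associative in the Frobenius sense ($tr((f\cdot g)\cdot h)=tr(f\cdot(g\cdot h))$), and condition~\eqref{cond:preinner} holds. The latter is the only computation requiring a moment of care: expanding
\begin{align*}
  \pairing{f,\Star{f}} & = tr(f \cdot \Star{f}) = \sum_{e \in \mathcal{E}} \innerH{f\Star{f}e,e} = \sum_{e \in \mathcal{E}} \normH{\Star{f}e}^{2}\,,
\end{align*}
one sees that vanishing of this sum forces $\Star{f} = 0$, hence $f = 0$.

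Next, applying the general machinery: the binary relation $R$ on the semigroup $(A,\cdot)$ defined by $fRg \iff \pairing{f,g} = 0$ is associative (from associativity of the pairing) and symmetric. By Proposition~\ref{prop:SerreGC1} the induced Serre Galois connection $(l,r)$ with $l=r=\Rneg{(-)}$ yields a nucleus $j = \Dneg{(-)}$ on the free quantale $(P(A),\bullet)$, and $(P(A)_{j}, \bullet_{j}, \Rneg{(-)})$ is a Girard quantale whose underlying lattice is precisely the lattice of $j$-closed subsets of $A$. As noted in the preceding paragraphs, these $j$-closed subsets are exactly the closed subspaces referred to in the theorem statement, since every set of the form $\Rneg{X}$ is linear.

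Finally, I invoke the contrapositive of the proposition stated just above: since the algebra $A$ of trace class operators on an infinite dimensional $H$ is not unital (by the surjectivity argument using the rank-one operators $c_{h}$ already given in the text), the quotient $P(A)_{j}$ cannot have a unit. Thus $(P(A)_{j},\bullet_{j},\Rneg{(-)})$ is a Girard quantale without unit, which is precisely what the theorem claims. There is no real obstacle; the only step that takes any thought is the verification of~\eqref{cond:preinner}, and that verification is essentially the standard identity $tr(f\Star{f}) = \sum \normH{\Star{f}e}^{2}$ for trace-class operators.
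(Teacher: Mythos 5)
Your proposal is correct and follows essentially the same route as the paper: it gathers the preceding observations (the trace pairing is a symmetric associative bilinear form, the adjoint involution makes~\eqref{cond:preinner} hold, the algebra of trace class operators on an infinite-dimensional $H$ has no unit via the rank-one operators $c_h$), feeds them through the general Serre--Galois-connection machinery on $P(A)$, and invokes the contrapositive of the proposition to conclude that $P(A)_j$ has no unit. The explicit verification of~\eqref{cond:preinner} via $tr(f\Star{f})=\sum_{e}\normH{\Star{f}e}^{2}$ is a welcome detail that the paper leaves implicit.
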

  Similar considerations can be developed for Hilbert-Schmidt
  operators, see \cite{Conway1990}. Indeed, the composition of two
  such operators is trace class, showing that the formula for the
  pairing in \eqref{def:tracePairing} also applies to these operators.
\end{Example}

Our next goal is to give a representation theorem, analogous to the
representation theorem for Girard quantales, see e.g. \cite[Theorem
2]{Rosenthal1990b}. The representation theorem is here extended to
Frobenius and unitless quantales.
\begin{proposition}
  Given a Frobenius quantale $(Q, \ast, \Lneg{\fun},\Rneg{\fun})$,
  define $x R y$ iff $x \leq \Lneg{y}$.  Then $R$ is an associative
  weakly symmetric relation yielding a Serre Galois connection $(l,r)$ on
  $(P(Q), \bullet)$, whose quotient $(P(Q)_{j},\bullet_{j},l,r)$ is
  isomorphic to $(Q, \ast, \Lneg{\fun},\Rneg{\fun})$.
\end{proposition}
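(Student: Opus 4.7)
My plan is to verify in turn each of the four claims in the statement: that $R$ is associative, that $R$ is weakly symmetric, that the induced $j$-closed sets admit an explicit description, and finally that sending $x \in Q$ to $\downset{x} \in P(Q)_{j}$ is an isomorphism of Frobenius quantales. The whole argument is driven by the fact that $\Lneg{\fun}$ and $\Rneg{\fun}$ are inverse antitone bijections, so that $x \leq \Lneg{y}$ is symmetric under swapping the two sides via the substitution $x \leftrightarrow \Rneg{x}$, $y \leftrightarrow \Lneg{y}$.

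For associativity, the equivalence $x \ast y \leq \Lneg{z} \Tiff x \leq \Lneg{y \ast z}$ follows by chaining the shift relation $x \ast y \leq \Lneg{z} \Tiff y \ast z \leq \Rneg{x}$ with the antitone inversion $y \ast z \leq \Rneg{x} \Tiff x \leq \Lneg{y \ast z}$. For weak symmetry, I would exhibit the witness sets $Y_{x} \eqdef \set{\Rneg{\Rneg{x}}}$ and $X_{y} \eqdef \set{\Lneg{\Lneg{y}}}$: the equivalence $x \leq \Lneg{z} \Tiff z \leq \Rneg{x} = \Lneg{\Rneg{\Rneg{x}}}$ gives~\eqref{cond:comm1}, and the dual calculation gives~\eqref{cond:comm2}. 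For the third step I would compute $l(X) = \set{u \mid u \leq \Lneg{x},\forall x \in X} = \downset{\Lneg{\bigvee X}}$ using that $\Lneg{\fun}$ converts suprema to infima, and then $r(l(X)) = \downset{\Rneg{\Lneg{\bigvee X}}} = \downset{\bigvee X}$. Thus $j$-closed subsets of $Q$ are precisely the principal downsets, and $j(X) = \downset{\bigvee X}$.

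For the isomorphism, define $\varphi(x) \eqdef \downset{x}$. It is clearly an order-isomorphism between $Q$ and the poset of principal downsets, hence a lattice isomorphism onto $P(Q)_{j}$. Distributivity of $\ast$ over suprema gives
\begin{align*}
  \downset{x} \bullet_{j} \downset{y} & = j(\downset{x} \bullet \downset{y}) = \downset{\,\bigvee\set{a \ast b \mid a \leq x,\,b \leq y}\,} = \downset{x \ast y}\,,
\end{align*}
so $\varphi$ is a semigroup homomorphism. Preservation of negations amounts to $r(\downset{x}) = \downset{\Rneg{x}}$ and $l(\downset{x}) = \downset{\Lneg{x}}$, both of which are immediate since $\Rneg{\fun}$ and $\Lneg{\fun}$ are antitone, so taking $z = x$ gives the principal bound and monotonicity on $\downset{x}$ fills the rest. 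The only genuine subtlety, and the step I would double-check most carefully, is the identification of $j$-closed sets as principal downsets; this depends crucially on $\Lneg{\fun},\Rneg{\fun}$ being \emph{bijections} of $Q$ (so that the image of $\Lneg{\fun}$ sweeps all of $Q$) and on $Q$ being a complete lattice (so that $\bigvee X$ exists for arbitrary $X \subseteq Q$). Once this characterisation of $Q_{j}$ is secured, the isomorphism $\varphi$ and preservation of all structure are essentially forced.
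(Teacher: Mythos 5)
Your proposal is correct and follows essentially the same line as the paper: associativity via the shift relation, identification of the $j$-closed sets as the principal downsets of $Q$, and the isomorphism given by $\downset{\fun}$ and $\bigvee$. The one point of stylistic divergence is the verification of the commuting condition: you exhibit explicit witness singletons $Y_x = \set{\Rneg{\Rneg{x}}}$ and $X_y = \set{\Lneg{\Lneg{y}}}$ to check weak symmetry directly, whereas the paper instead shows that the images of $l$ and $r$ coincide (both being the set of principal downsets, using that every element of $Q$ lies in the image of $\Lneg{\fun}$ and of $\Rneg{\fun}$) and invokes Lemma~\ref{lemma:commuting}; the two are interchangeable here, and your route has the minor merit of making the witness sets concrete and of explicitly confirming that the embedding preserves the negations, which the paper leaves to the reader.
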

\begin{proof}
  Let us verify that $R$ is associative:
  \begin{align*}
    z R x \ast y & \Tiff z \leq \Lneg{(x \ast y)}
    \Tiff x \ast y \leq \Rneg{z} 
    \Tiff z \ast x \leq \Lneg{y}
    \Tiff
    z \ast x R y \,.
  \end{align*}
  Let $(l,r)$ be the Galois connection corresponding to $R$.  Let us
  argue that $r \circ l = l \circ r$ by showing that the images of $l$
  and $r$ coincide, cf. Lemma~\ref{lemma:commuting}.  By definition, a
  subset $X \subseteq Q$ in the image of $l$ is of the form
  $ \set{y \mid y \leq \Lneg{x}, \text{ for all } x \in X}$.
  As the condition $y \leq \Lneg{x}, \text{ for all } x \in X$, is
  equivalent to  $y  \leq \bigwedge_{x \in X} \Lneg{x} = \Lneg{(\,\bigvee X\,)}$,
  we deduce that such set equals $\downset \Lneg{(\,\bigvee X\,)}$.
  Since every element $y \in Q$ is of the form $\Lneg{z}$ for some
  $z \in Q$, we deduce that the image of $l$ is the set of principal
  downsets of $Q$. Similalry, a
  subset $X \subseteq Q$ in the image of $r$ is of the form
  $ \set{y \mid y \leq \Rneg{x}, \text{ for all } x \in X} =  \downset
  \Rneg{(\,\bigvee X\,)}$ and the image of $r$ is again the set of
  principal downsets of $Q$.

  Finally, $j(X) = \downset \bigvee X$, since
  \begin{align*}
    j(X) = l(r(X)) & =\; \downset (\Lneg{\bigvee (\downset \Rneg{(\,\bigvee
        X\,)})})
    = \downset( \Lneg{\Rneg{(\,\bigvee X\,)}}) = \downset (\,\bigvee X\,)\,.
  \end{align*}
  It is then easily seen that the mapping $\downset : Q \rto P(Q)_{j}$
  is inverted by $\bigvee\, : P(Q)_j \rto Q$. These maps are
  quantale homomorphisms, indeed we have, for all $x$ and $y$ in $Q$,
  \begin{align*}
    \downset{x} \,\,\bullet_j \,\downset{y} & =
    \,\downset{(\bigvee(\downset{x}\,\bullet \downset{y})) = \{z\in Q \mid
      z\leq x\ast y\}} =\, \,\downset{(x\ast y)}.
    \tag*{\qedhere}
  \end{align*}
\end{proof}

\color{black}

\section{The Girard quantale of tight maps}

We present in this section the main example of a unitless Girard
quantale, the one that prompted this research.

We denote by $L^L$ the lattice of all function from $L$ to $L$,
by $\LLs$ the lattice of all join-preserving endomaps of $L$, and by
$\LLi$ the lattice of all meet-preserving endomaps of $L$.
For a function $f: L\rto L$, its Raney's transforms $\rans{f}$ and
$\rani{f}$ are defined by
\begin{align*}
  \rans{f}(x) & \eqdef \bigvee_{x\nleq t}f(t)\,, 
  &
  \rani{f}(x) & \eqdef \bigwedge_{t\nleq x}f(t)\,.
\end{align*}
Let us remark that, in the definition above, we do not require that
$f$ has any property, such as being monotone. 

\begin{definition}
  An endomap $f : L \rto L$ is \emph{tight} if $f=\rand{f}$. We write
  $\LLst$ for the set of tight endomaps of $L$.  We say that
  $f : L \rto L$ is \emph{\cotight} if $f = \ranD{f}$ and write
  $\LLit$ for the set of \cotight maps from $L$ to $L$.
\end{definition}

Most of the properties of the Raney's transforms appear already in
\cite{HiggsRowe1989}, we add proofs of lemmas. We aim at
demonstrating the following theorem.
\begin{theorem}\label{tight_girard}
  For any complete lattice $L$, the tuple $(\LLt, \circ, \Star{\fun})$
  is a Girard quantale (as defined in Definition~\ref{defi_B}), 
  where $\circ$ is the function composition and the duality $\Star{\fun}$
  defined by 
  $\Star{f} \eqdef \rans{\radj{f}}$.
\end{theorem}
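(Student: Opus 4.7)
The plan is to verify, in order, the three conditions of Definition~\ref{defi_B}: $(\LLt,\circ)$ is a quantale, $\Star{\fun}:\LLt\rto\LLt$ is an antitone involution, and the contraposition identity holds. Since left and right negations then coincide with $\Star{\fun}$, the structure will automatically be Girard.

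The key preliminary is that $\rans{f}$ is always sup-preserving (because $\bigvee_{i}x_{i}\nleq t$ iff $x_{i}\nleq t$ for some $i$), and dually $\rani{f}$ is always inf-preserving. Consequently every tight $f=\rans{(\rani{f})}$ lies in $\LLs$, giving $\LLt\subseteq\LLs$. I would then check that $\LLt$ is closed under the pointwise arbitrary suprema computed in $\LLs$, using the compatibility of $\rans{\fun}$ and $\rani{\fun}$ with joins and meets, and closed under composition; distributivity of $\circ$ over suprema in each argument is then inherited from $\LLs$.

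For the duality, for $f\in\LLt$ the right adjoint $\radj{f}$ is inf-preserving, so $\Star{f}=\rans{\radj{f}}$ lies in $\LLs$; I would verify that $\Star{f}$ is in fact tight and that $f\mapsto\Star{f}$ is antitone, the latter following from antitonicity of $f\mapsto\radj{f}$ combined with monotonicity of $\rans{\fun}$. The crux is involutivity $\Star{\Star{f}}=f$ for tight $f$: expanding, this reads $\rans{\radj{(\rans{\radj{f}})}}=f$, and should reduce, via a compatibility lemma expressing $\radj{(\rans{g})}$ (for inf-preserving $g$) in terms of $\rani{g}$ and $\ladj{g}$, to the tightness identity $\rand{f}=f$.

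Finally, to establish contraposition it suffices, given that $\Star{\fun}$ is a single self-inverse antitone, to verify the shift relation $f\circ g\le \Star{h}\Tiff g\circ h\le \Star{f}$ for all tight $f,g,h$. Unfolding $\Star{h}(x)=\bigvee_{x\nleq t}\radj{h}(t)$ and invoking the adjunctions $g\adj\radj{g}$ and $h\adj\radj{h}$, both inequalities should translate to a single symmetric condition on the triple $(f,g,h)$, invariant under the relevant swap. I expect the main technical obstacles to be the closure of $\LLt$ under composition and the involutivity of $\Star{\fun}$; both rest on the behaviour of Raney's transforms under taking adjoints, i.e.\ on identities relating $\rans{(\radj{f})}$ and $\rani{(\ladj{g})}$ for $f,g\in\LLs$.
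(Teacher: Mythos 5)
Your high-level plan mirrors the paper's: show $(\LLt,\circ)$ is a subquantale of $(\LLs,\circ)$, show $\Star{\fun}$ is an antitone involution on $\LLt$, then verify the shift relation. The ingredients you correctly identify for the first two parts (sup-preservation of $\rans{\fun}$, the adjunction $\rans{\fun}\adj\rani{\fun}$ making $\rand{\fun}$ an interior operator for closure under sups, the identity $\rho(\rans{g})=\rani{\ladj{g}}$ for meet-preserving $g$ to collapse $\Star{\Star{f}}$ to $\rand{f}=f$) are exactly the paper's Lemma~\ref{lemma:jp}, Proposition~\ref{prop:raniadjointrans} and their corollaries.

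The genuine gap is in your treatment of the shift relation. You propose to ``unfold $\Star{h}(x)=\bigvee_{x\nleq t}\radj{h}(t)$'' and ``invoke the adjunctions $g\adj\radj{g}$, $h\adj\radj{h}$'' to arrive at a manifestly symmetric condition on $(f,g,h)$. If you try this, you obtain statements of the shape $\forall y\colon \radj{g}(\radj{f}(y))\geq\bigwedge_{t\nleq y}h(t)$, which is \emph{not} visibly invariant under the cyclic rotation $(f,g,h)\mapsto(h,f,g)$, and no elementary rearrangement produces such invariance --- the cyclic symmetry is hidden and needs more machinery to surface. What the paper actually does is a one-sided chain of equivalences: starting from $f\circ g\le\Star{h}$, adjoint $f$ to the other side, use that $g$ is tight to replace $\radj{f}\circ\Star{h}$ by its tight interior $\rand{(\radj{f}\circ\Star{h})}$, and then show this interior \emph{equals} $\Star{(h\circ f)}$. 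That last identification needs three facts you have not named: (i) the Raney transform commutes with post-composition by suitably-typed maps ($\rani{(m\circ k)}=m\circ\rani{k}$ for $m$ meet-preserving, the dual of equation~\eqref{eq:rani_sort_parenthese}); (ii) $\rho(h)$ is \cotight when $h$ is tight, so $\rani{\Star{h}}=\rho(h)$; and (iii) $\rho$ reverses composition. Without these, you only get as far as $g\le\rand{(\radj{f}\circ\Star{h})}$ and cannot close the loop. You flag closure under composition and involutivity as the likely obstacles, but the shift relation is at least as technical, and your sketch for it does not contain the idea that actually makes it go through.

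Two smaller points. First, you call the target condition ``invariant under the relevant swap''; the relevant symmetry is a cyclic permutation of the triple, not a transposition, and it is worth being precise because the Serre identity is exactly this cyclicity. Second, tightness of $\Star{f}$ is immediate rather than something to ``verify'': $\Star{f}=\rans{\radj{f}}$ lies in the image of $\rans{\fun}$ by definition, and the image of $\rans{\fun}$ is precisely $\LLt$ (paper's Lemma~\ref{lemma:image}).
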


In order to prove the theorem, let us recall some elementary
properties of Raney's transforms. In the following, $L$ shall be an
arbitrary but fixed complete lattice.
\begin{lemma}
  \label{lemma:jp}
  For any function $f : L \rto L$,  $\rans{f}$ has a \ra, 
  and, in particular $\rans{f}$ is \jp and monotone. If $f$ is \mp,
  then the following relation holds:
  \begin{equation}\label{right_adjoint_rans}
    \rho(\rans{f})=\rani{(\ladj{f})}\,.
  \end{equation}
  For $f : L \rto L$, $\rani{f}$ is \mp and, when $f$ is \jp, his \la
  satisfies
  \begin{equation}\label{left_adjoint_rani} 
    \ladj{\rani{f}} = \rans{(\radj{f})}\,.
\end{equation}
\end{lemma}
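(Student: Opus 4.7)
The plan is to produce an explicit right adjoint for $\rans{f}$ that works for every function $f$, then simplify that formula in the inf-preserving case to match $\rani{\ladj{f}}$; the two statements about $\rani{f}$ then fall out by passing to $L^{\op}$.

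For the first claim, I would just unfold the definitions: $\rans{f}(x) \leq y$ is equivalent to $f(t) \leq y$ for every $t$ with $x \nleq t$, and by contraposition to $x \leq t$ whenever $f(t) \nleq y$. Setting
\[
g(y) \eqdef \bigwedge \set{t \mid f(t) \nleq y}
\]
(so $g(y) = \top$ when the set is empty), this reads $\rans{f}(x) \leq y \Leftrightarrow x \leq g(y)$, exhibiting $g$ as $\radj{\rans{f}}$. No hypothesis on $f$ is needed, and by standard adjunction theory $\rans{f}$ is therefore \jp and monotone.

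Next, assuming $f$ is \mp and writing $k \eqdef \ladj{f}$, one has $k(t) = \bigwedge \set{s \mid t \leq f(s)}$. Substituting into the definition of $\rani{k}$ and exchanging the two meets yields
\[
\rani{k}(y) \;=\; \bigwedge_{t \nleq y} \bigwedge \set{s \mid t \leq f(s)} \;=\; \bigwedge \set{s \mid \exists\, t,\; t \nleq y \text{ and } t \leq f(s)}\,.
\]
The condition on $s$ on the right is equivalent to $f(s) \nleq y$: choosing $t \eqdef f(s)$ gives one direction, and the other holds because $t \leq f(s) \leq y$ would contradict $t \nleq y$. Hence $\rani{k}(y) = g(y) = \radj{\rans{f}}(y)$, which is~\eqref{right_adjoint_rans}.

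Finally, the statements about $\rani{f}$ follow by duality. Passing to $L^{\op}$ swaps $\rans{\fun}$ with $\rani{\fun}$ and swaps left with right adjoints, and a map is \mp in $L^{\op}$ iff it is \jp in $L$. Applying the first claim to $f \colon L^{\op} \rto L^{\op}$ then gives that $\rani{f}$ has a left adjoint in $L$ (so is \mp), and that this left adjoint equals $\rans{\radj{f}}$, which is~\eqref{left_adjoint_rani}. The only genuinely non-routine step is the equivalence $\exists t\,(t \nleq y \wedge t \leq f(s)) \Leftrightarrow f(s) \nleq y$ used to match the two meets; everything else is bookkeeping with definitions and the standard formula for an adjoint as an inf.
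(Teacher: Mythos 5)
Your proof is correct and follows essentially the same route as the paper: unfold the definition of $\rans{f}$ to exhibit $g(y)=\bigwedge\set{t \mid f(t)\nleq y}$ as the right adjoint, identify it with $\rani{\ladj{f}}$, and dispatch the $\rani{}$ half by passing to $L^{\op}$. The only stylistic difference is in establishing~\eqref{right_adjoint_rans}: you compute $\rani{\ladj{f}}(y)$ explicitly via the meet formula $\ladj{f}(t)=\bigwedge\set{s\mid t\leq f(s)}$ and an interchange of meets, whereas the paper argues directly that $\rans{f}(x)\leq y$ iff $x\leq \rani{\ladj{f}}(y)$ by a contraposition using the unit $t\leq f(\ladj{f}(t))$; both hinge on the same adjunction facts and are equally valid.
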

\begin{proof}
  For each $x, y \in L$, we have
  \begin{align*}
    \rans{f}(x) = \bigvee_{x \not\leq t} f(t) \leq y \Tiff & \text{for
      all $t$}\,,\; x \not\leq t \Timplies f(t) \leq y\,, \\
    \Tiff & \text{for all $t$}\,,\; f(t) \not\leq
    y \Timplies x \leq t \,, \\
    \Tiff &  x \leq \bigwedge_{f(t) \not\leq y } t\,.
  \end{align*}
  Therefore, if we define
  $g(y) \eqdef \bigwedge_{f(t) \not\leq y } t$, then $g$ is \ra to
  $\rans{f}$. Let us argue that equality \eqref{right_adjoint_rans}
  holds if $f$ is \mp. 
  We shall show that, for all $x$ and $y$
  in $L$, we have
  $$
  \rans{f}(x)\leq y \Tiff 
  x\leq \rani{(\ladj{f})}(y)\,.
  $$
  On the one hand we have
\begin{align}
  \nonumber
  \tag{a}
  \bigvee_{x\nleq t}f(t) \leq y \Tiff \text{ for all } t\in L \st
  x\nleq t, \,f(t) \leq y\,.
\end{align}
On the other hand we have
\begin{align}
  \nonumber
  \tag{b}
  x\leq \bigwedge_{t\nleq y}\ladj{f}(y) \Tiff \text{ for all } t\in L \st
  t\nleq y, \,x \leq \ladj{f}(t).
\end{align}
For the implication \Impl{a}{b}, let $t \in L$ be such that
$t\nleq y$.  Suppose that $x\nleq \ladj{f}(t)$. Then, by (a),
$f(\ladj{f}(t)) \leq y$ and, by the adjunction,
$t \leq f(\ladj{f}(t))$, whence we deduce $t \leq y$,
\contr. Therefore $x\leq \ladj{f}(t)$. The implication \Impl{b}{a} is
proved in a similar way.

The second part of the statement follows by duality.
\end{proof}

Recall that the set $\LL$ is pointwise ordered, i.e. we have
$f \leq g$ iff $f(t) \leq g(t)$, for all $t \in L$.
\begin{proposition}\label{prop:raniadjointrans}
  The operation $\rani{(-)}: \LL \rto \LL$ is \ra to
  $\rans{(-)}:\LL\rto \LL$.
\end{proposition}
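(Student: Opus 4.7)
The plan is to verify the adjunction condition $\rans{f} \leq g \Tiff f \leq \rani{g}$ directly, for every $f,g \in \LL$, by unfolding both inequalities into their pointwise form and observing that each reduces to the same universally quantified statement over pairs $(x,t)$ with $x \not\leq t$.

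First, I would unfold the left-hand side. By definition of the pointwise order and of the supremum, $\rans{f} \leq g$ holds if and only if for every $x \in L$ we have $\bigvee_{x \not\leq t} f(t) \leq g(x)$, which is equivalent to the condition
\begin{align*}
  \text{for all $x,t \in L$ with $x \not\leq t$,} \quad f(t) \leq g(x)\,.
\end{align*}
Second, I would unfold the right-hand side. By definition of the pointwise order and of the infimum, $f \leq \rani{g}$ holds if and only if for every $t \in L$ we have $f(t) \leq \bigwedge_{s \not\leq t} g(s)$, which is equivalent to
\begin{align*}
  \text{for all $t,s \in L$ with $s \not\leq t$,} \quad f(t) \leq g(s)\,.
\end{align*}
After renaming $s$ to $x$, the two conditions coincide, so the equivalence is established.

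There is no real obstacle here: the statement is an immediate reformulation of the definitions of $\rans{(-)}$ and $\rani{(-)}$, in the spirit of the calculation already performed in the proof of Lemma~\ref{lemma:jp}. The only point worth care is the symmetric role played by the variables $x$ and $t$ in the quantification "$x \not\leq t$", which allows the two pointwise conditions to match after renaming; no monotonicity or preservation hypothesis on $f$ or $g$ is required.
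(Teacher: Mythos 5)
Your proof is correct and matches the paper's argument essentially verbatim: both unfold $\rans{f}\leq g$ and $f\leq\rani{g}$ pointwise into the same condition ``for all $x,t$ with $x\nleq t$, $f(t)\leq g(x)$,'' the only cosmetic difference being that you make the variable renaming explicit.
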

\begin{proof}
  Let $f$ and $g$ be two endomaps of $L$, we have
\begin{align*}
  \rans{f}\leq g &\Tiff \text{ for all } x\in L, \;\rans{f}(x) =
  \bigvee_{x\nleq t}f(t) \leq g(x) \\
  &\Tiff \text{ for all } x,t\in L,\;
  x\nleq t \Timplies f(t)\leq g(x) \\
  &\Tiff \text{ for all } t\in L,\;
  f(t)\leq \bigwedge_{x\nleq t} g(x)  = \rani{g}(t)
  \Tiff f\leq \rani{g}\,.\tag*{\qedhere}
\end{align*}
\end{proof}

Let us briefly analyse the typing of the Raney's
transforms. Lemma~\ref{lemma:jp} exhibits the Raney's transforms as
overloaded operators: for example, the codomain of $\rans{\fun}$ can
be taken as any of the complete lattices among $\LL$, $\LLs$, $\LLst$,
as suggested in the left diagram in
Figure~\ref{fig:typingRaney}. 

To further analyse the situation, recall that a map $f : L \rto L$ is
\emph{\cotight} if $f = \ranD{f}$ and that we use use $\LLit$ for the
set of \cotight maps from $L$ to $L$. 
The adjunction $\rans{\fun} \adj \rani{\fun}$ of
Lemma~\ref{prop:raniadjointrans} restricts from $\LL$ as suggested in
the diagram on the left of Figure~\ref{fig:typingRaney}. This is
further illustrated in the diagram on the right of the figure where
the vertical inclusions on the left are \mp and the vertical
inclusions on the right are \jp. The two Raney's transforms in the
bottom row are inverse isomorphisms, as a standard consequence of the
characterisation of factorizarion systems in the category of
sup-lattices and \jp maps, see~\cite{EGHK2018}.

\smallskip

We recall next some usual consequences of the adjunction $\rans{\fun}
\adj \rani{\fun}$ established in Proposition~\ref{prop:raniadjointrans}.
\begin{lemma}
  \label{lemma:image}
  A map $f : L \rto L$ is tight if and only if it lies in the image of
  the Raney's transform $\rans{\fun}$.
\end{lemma}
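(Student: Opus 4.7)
The plan is to deduce this immediately from the adjunction $\rans{\fun} \dashv \rani{\fun}$ established in Proposition~\ref{prop:raniadjointrans}, using the standard fact that, for an adjoint pair $F \dashv G$, the image of $F$ coincides with the fixed points of $F\circ G$ (equivalently, $F G F = F$).

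First, I would observe that by the very definition of tightness, $f$ tight means $f = \rand{f} = \rans{\rani{f}}$, so every tight map is in the image of $\rans{\fun}$. This direction is trivial and requires no work.

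For the converse, I would suppose $f = \rans{g}$ for some $g \in \LL$. Using the unit and counit of the adjunction $\rans{\fun} \dashv \rani{\fun}$, one has $g \leq \rani{\rans{g}}$ and $\rans{\rani{h}} \leq h$ for every $h$. Applying the monotone $\rans{\fun}$ to the first inequality yields $\rans{g} \leq \rans{\rani{\rans{g}}}$, and instantiating the second at $h = \rans{g}$ gives $\rans{\rani{\rans{g}}} \leq \rans{g}$. Therefore $\rans{\rani{\rans{g}}} = \rans{g}$, that is, $\rand{f} = \rans{\rani{f}} = \rans{\rani{\rans{g}}} = \rans{g} = f$, so $f$ is tight.

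The only point to be careful about is that Proposition~\ref{prop:raniadjointrans} is stated as an adjunction in $\LL$ (all endomaps, not just the \jp ones), so the triangular identities apply unrestrictedly and no hypothesis on $g$ is needed. There is no real obstacle here; the statement is purely formal once the adjunction is in place.
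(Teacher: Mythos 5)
Your proof is correct and follows essentially the same route as the paper: the forward direction is immediate from the definition of tightness, and the converse is the standard consequence $FGF = F$ of the adjunction $\rans{\fun} \dashv \rani{\fun}$ from Proposition~\ref{prop:raniadjointrans}. The paper simply invokes this as a ``general property of adjunctions'' in one line, whereas you spell out the unit/counit derivation explicitly; the content is the same.
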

\begin{proof}
  By general properties of adjunctions, $\rans{f} = \rans{f}\rand{}$,
  showing that if $f$ lies in the image of $\rans{\fun}$, then it is
  tight. The converse is obvious.
\end{proof}
\begin{lemma}
  \label{lemma:greatestAndLeast}
  For each map $f : L \rto L$, $\rand{f}$ is the greatest tight map
  below $f$, and $\ranD{f}$ is the least \cotight map above $f$.
\end{lemma}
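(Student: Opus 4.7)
The plan is to read this off from the adjunction $\rans{\fun} \adj \rani{\fun}$ established in Proposition~\ref{prop:raniadjointrans}. From that adjunction one gets, by general abstract nonsense, that $\rand{\fun} = \rans{\fun} \circ \rani{\fun}$ is the comonad (interior operator) of the adjunction, and $\ranD{\fun} = \rani{\fun} \circ \rans{\fun}$ is the monad (closure operator). All of the content of the statement is packaged in those two facts.

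More concretely, I would first deduce from the counit of $\rans{\fun} \adj \rani{\fun}$ that $\rand{f} = \rans{\rani{f}} \leq f$, and from the triangular identity $\rans{\fun} = \rans{\fun} \circ \rani{\fun} \circ \rans{\fun}$ that $\rand{\rand{f}} = \rans{\rani{\rans{\rani{f}}}} = \rans{\rani{f}} = \rand{f}$, so that $\rand{f}$ is tight. (Alternatively one can invoke Lemma~\ref{lemma:image}: $\rand{f}$ is in the image of $\rans{\fun}$, hence tight.) This shows $\rand{f}$ is a tight map below $f$.

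For maximality, let $g$ be any tight map with $g \leq f$. Both $\rani{\fun}$ and $\rans{\fun}$ are monotone on $\LL$ (being adjoints, or directly from their pointwise definitions in Lemma~\ref{lemma:jp}); therefore $\rani{g} \leq \rani{f}$ and then $\rans{\rani{g}} \leq \rans{\rani{f}}$. Since $g$ is tight, $g = \rand{g} = \rans{\rani{g}}$, giving $g \leq \rand{f}$ as required. The statement about $\ranD{f}$ being the least cotight map above $f$ is proved by the exact dual argument, using the unit $f \leq \rani{\rans{f}} = \ranD{f}$ and the triangular identity $\rani{\fun} = \rani{\fun} \circ \rans{\fun} \circ \rani{\fun}$.

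There is essentially no obstacle here; the only thing to be careful about is checking monotonicity of $\rani{\fun}$ and $\rans{\fun}$ on the pointwise-ordered set $\LL$ (needed for the maximality step), and this follows immediately either from their being adjoints on $\LL$ or from a one-line inspection of the formulas $\rans{f}(x) = \bigvee_{x\nleq t} f(t)$ and $\rani{f}(x) = \bigwedge_{t\nleq x} f(t)$.
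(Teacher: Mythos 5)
Your proof is correct and takes essentially the same route as the paper's: $\rand{f}\leq f$ from the counit/interior property, tightness of $\rand{f}$ from being in the image of $\rans{\fun}$, and maximality by monotonicity of $\rand{\fun}$ applied to $g=\rand{g}\leq\rand{f}$. The paper's proof is a one-liner compressing exactly these observations (and leaves tightness of $\rand{f}$ implicit), so you have simply spelled out the same argument in more detail.
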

\begin{proof}
  We have $\rand{f} \leq f$, and if $g$ is tight and below $f$, then
  $g = \rand{g} \leq \rand{f}$.
\end{proof}

\begin{figure}
  \centering
  \includegraphics{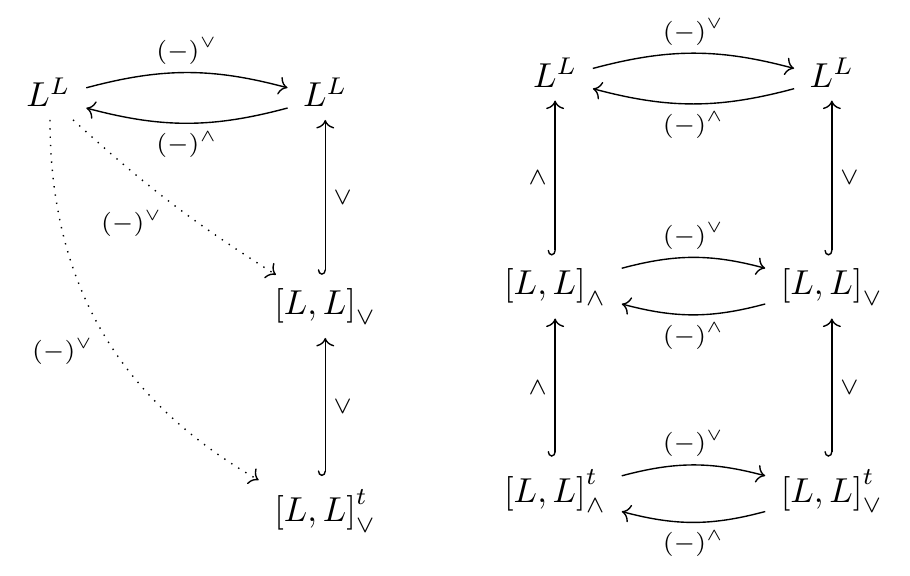}
  \caption{Typing of the Raney's transforms}
  \label{fig:typingRaney}
\end{figure}
\begin{corollary}
  \label{cor:closed_suprema}
  For every complete lattice $L$, the set $\LLt$ is closed under
  arbitrary suprema.
\end{corollary}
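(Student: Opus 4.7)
The plan is to reduce the corollary to the sup-preservation of Raney's upper transform $\rans{\fun}$. By Lemma~\ref{lemma:image}, the set $\LLt$ of tight endomaps coincides with the image of the map $\rans{\fun} : \LL \rto \LL$. By Proposition~\ref{prop:raniadjointrans}, the operator $\rans{\fun}$ is left adjoint to $\rani{\fun}$, hence it is sup-preserving as a map between the complete lattices $\LL$ (pointwise ordered) and itself. So it suffices to observe that the image of any sup-preserving map between complete lattices is closed under the suprema of the codomain.

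Concretely, I would take a family $\set{f_{i} \in \LLt \mid i \in I}$ and, using Lemma~\ref{lemma:image}, write $f_{i} = \rans{g_{i}}$ for some $g_{i} \in \LL$. Then applying sup-preservation of $\rans{\fun}$ gives
\begin{align*}
  \bigvee_{i \in I} f_{i} & = \bigvee_{i \in I} \rans{g_{i}} = \rans{\Bigl(\,\bigvee_{i \in I} g_{i}\,\Bigr)}\,,
\end{align*}
which lies again in the image of $\rans{\fun}$, hence is tight by Lemma~\ref{lemma:image}.

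There is essentially no obstacle; the only subtle point is to be clear that the pointwise suprema in $\LL$ (which are the ambient suprema in which we take the join) are also the suprema computed within $\LLt$, and this is automatic once the image is shown to be closed under them. The corollary may be viewed more conceptually by noting that $\rani{\fun} \circ \rans{\fun}$ is a closure-type operator on $\LL$ (in fact $\rans{\fun} \circ \rani{\fun}$ is interior, with fixed points the image of $\rans{\fun}$), but the direct image-of-a-left-adjoint argument sketched above is the cleanest route.
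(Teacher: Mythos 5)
Your proof is correct and takes essentially the same route as the paper, both hinging on Proposition~\ref{prop:raniadjointrans}. The paper phrases the conclusion as ``the fixed points of the interior operator $\rand{\fun}$ are closed under suprema,'' while you phrase it equivalently as ``the image of the sup-preserving map $\rans{\fun}$ is closed under suprema''; in the context of the adjunction $\rans{\fun} \adj \rani{\fun}$ these are the same observation.
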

\begin{proof}
  Proposition~\ref{prop:raniadjointrans} implies that the operation
  $\rand{\fun}$ is an interior operator, that is, the dual of a
  closure operator. As the set of fixed-points of a closure operator
  is closed under arbitrary infima, the set of fixed-points of an
  interior operator is closed under arbitrary suprema.
\end{proof}

\begin{lemma}
  If $f$ is \jp and  $g$ is any function, then
  \begin{align}
    \rans{(f \circ g)} & = f \circ \rans{(g)} 
    \,.
    \label{eq:rani_sort_parenthese}
  \end{align}
\end{lemma}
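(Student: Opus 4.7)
The plan is to prove the identity by unfolding the definition of $\rans{(-)}$ on the left-hand side and then using the fact that $f$ preserves arbitrary suprema to pull it out of the join. Concretely, for any $x \in L$,
\[
\rans{(f \circ g)}(x) = \bigvee_{x \not\leq t} f(g(t)),
\]
directly from the definition of the $\rans{\fun}$ transform. Since the indexing set $\set{t \mid x \not\leq t}$ is just a family of elements of $L$ and $f$ is \jp, we may commute $f$ with the supremum to obtain
\[
\bigvee_{x \not\leq t} f(g(t)) = f\bigl(\,\bigvee_{x \not\leq t} g(t)\,\bigr) = f(\rans{g}(x)) = (f \circ \rans{g})(x).
\]

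The only subtle point is the edge case $x = \bot$, where the indexing family $\set{t \mid x \not\leq t}$ is empty. In that case the left-hand side evaluates to $\bot$, and the right-hand side is $f(\bot)$, which also equals $\bot$ since a \jp map preserves the empty supremum. So no separate argument is needed beyond noting that join-preservation is understood to include preservation of $\bot$. There is no real obstacle here; the statement is essentially a one-line computation, and the only thing to be careful about is that $g$ need not be monotone, which is why we apply $f$ to the raw family $\set{g(t) \mid x \not\leq t}$ rather than to $g$ of some supremum.
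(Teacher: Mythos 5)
Your proof is correct and follows the same one-line computation as the paper: unfold $\rans{(-)}$ and commute the \jp map $f$ past the supremum. The paper computes starting from the right-hand side, but this is the same argument, and your extra remark about the empty indexing family at $x=\bot$ is sound (it is subsumed by $f$ preserving the empty join).
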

\begin{proof}
  Let $f$ and $g$ be as stated, and compute as follows:
  \begin{align*}  
    (f\circ \rans{g})(x) &  = f(\rans{g}(x))
    = f(\bigvee_{x\nleq t} g(t))
    = \bigvee_{x\nleq
      t}(f\circ g)(t) = \rans{(f\circ g)}(x)\,.
    \tag*{\qedhere}
  \end{align*}
\end{proof}

\begin{corollary}
  \label{cor:closed_composition}
  Tight maps are closed under composition.
\end{corollary}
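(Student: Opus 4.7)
The plan is to combine Lemma~\ref{lemma:image}, which says that the tight maps are exactly the image of the Raney transform $\rans{\fun}$, with the identity $\rans{(f\circ g)} = f\circ \rans{g}$ valid whenever $f$ is \jp. The key preliminary observation is that every tight map is automatically \jp: indeed, if $f$ is tight then $f = \rand{f} = \rans{(\rani{f})}$ lies in the image of $\rans{\fun}$, and by Lemma~\ref{lemma:jp} any map of the form $\rans{h}$ is \jp.

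With this in hand, suppose $f$ and $g$ are tight. By Lemma~\ref{lemma:image}, there is some map $h : L \rto L$ with $g = \rans{h}$ (e.g.\ $h = \rani{g}$). Since $f$ is tight it is \jp by the observation above, so the identity \eqref{eq:rani_sort_parenthese} applies and gives
\begin{align*}
  f \circ g & = f \circ \rans{h} = \rans{(f \circ h)}\,,
\end{align*}
which exhibits $f \circ g$ as a map in the image of $\rans{\fun}$. Applying Lemma~\ref{lemma:image} once more, $f \circ g$ is tight.

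There is essentially no obstacle: the whole argument is a one-line computation once the two supporting lemmas are in place, and the only mildly non-formal step is noticing that tightness forces a map to be \jp, which is an immediate consequence of the factorisation $f = \rans{(\rani{f})}$. Note in particular that no hypothesis on $g$ beyond lying in the image of $\rans{\fun}$ is used, so the same reasoning would show the slightly stronger statement that $f \circ g$ is tight whenever $f$ is \jp and $g$ is tight.
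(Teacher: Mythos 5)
Your proof is correct and follows essentially the same route as the paper's: write $g=\rans{(\rani{g})}$, use that tight maps are \jp, and apply equation~\eqref{eq:rani_sort_parenthese} to get $f\circ g=\rans{(f\circ\rani{g})}$, concluding via Lemma~\ref{lemma:image}. You make explicit the small step that tight maps are \jp (since they lie in the image of $\rans{\fun}$), which the paper states without comment.
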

\begin{proof}
  If $f$ and $g$ are tight maps, then $f$ is \jp, and so
  equation~\eqref{eq:rani_sort_parenthese} ensures that
  $f \circ g = f \circ \rand{g} = \rans{(f \circ \rani{g})}$, so
  $f \circ g$ belongs to the image of $\rans{\fun}$. As we have seen
  in Lemma~\ref{lemma:image}, this is enough to ensure that this map
  is tight.
\end{proof}

Corollaries~\ref{cor:closed_suprema} and~\ref{cor:closed_composition}
ensure that $(\LLt, \circ)$ is a subquantale of
$(\homm{L}{L}, \circ)$. The following proposition suffices to prove
that $(\LLt, \circ,\Star{\fun})$ is a Girard quantale, as stated in
Theorem~\ref{tight_girard}, where we recall that
$\Star{f} \eqdef \rans{\radj{f}}$.

\begin{proposition}
  The map $\Star{\fun}$ is an involutive Serre duality on $(\LLt, \circ)$.
\end{proposition}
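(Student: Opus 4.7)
The plan is to verify in turn that $\Star{\fun}$ is an endomap of $\LLt$, that it is antitone and involutive, and that the pair $(\Star{\fun},\Star{\fun})$ satisfies the shift relation of Definition~\ref{defi_B}.

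First, $\Star{f}=\rans{\radj{f}}$ lies in the image of $\rans{\fun}$, so belongs to $\LLt$ by Lemma~\ref{lemma:image}. The pivotal calculation is then the identity $\radj{\Star{f}}=\rani{f}$: for $f\in\LLt$ the right adjoint $\radj{f}$ is \mp, and identity~\eqref{right_adjoint_rans} of Lemma~\ref{lemma:jp} applied to $\radj{f}$ gives
\[\radj{\Star{f}} \;=\; \radj{\rans{\radj{f}}} \;=\; \rani{(\ladj{\radj{f}})} \;=\; \rani{f}.\]
Involutivity follows at once: $\SStar{f}=\rans{\radj{\Star{f}}}=\rans{\rani{f}}=\rand{f}=f$ by tightness of $f$. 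Antitonicity is automatic, since $\Star{\fun}=\rans{\fun}\circ\radj{\fun}$ composes the antitone passage to right adjoint with the manifestly monotone $\rans{\fun}$.

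The main step is the shift relation $f\circ g \leq \Star{h} \iff g\circ h \leq \Star{f}$ for all $f,g,h\in\LLt$. The identity $\radj{\Star{h}}=\rani{h}$ yields the adjunction $\Star{h}\dashv\rani{h}$, i.e.\ $\Star{h}(x)\leq v$ iff $x\leq\rani{h}(v)$. Combined with monotonicity of $f\circ g$, the inequality $f\circ g\leq\Star{h}$ rewrites as the pointwise condition $(f\circ g)(\rani{h}(v))\leq v$ for every $v$, and then via $f\dashv\radj{f}$ and $g\dashv\radj{g}$ as $\rani{h}\leq\radj{g}\circ\radj{f}=\radj{(f\circ g)}$. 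By the analogous derivation, $g\circ h\leq\Star{f} \iff \rani{f}\leq\radj{(g\circ h)}$.

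The main obstacle is to prove the equivalence of these two reformulations. My plan is to exploit the identity $\radj{(f\circ g)}=\rani{\Star{(f\circ g)}}$, obtained by applying $\radj{\Star{k}}=\rani{k}$ to the tight $k=\Star{(f\circ g)}$ and then using involution, so that $\radj{(f\circ g)}$ and $\radj{(g\circ h)}$ are both \cotight. Under the order-isomorphism between $\LLt$ and $\LLit$ induced by $\rans{\fun}$ and $\rani{\fun}$, the conditions $\rani{h}\leq\radj{(f\circ g)}$ and $\rani{f}\leq\radj{(g\circ h)}$ translate respectively into $h\leq\Star{(f\circ g)}$ and $f\leq\Star{(g\circ h)}$; by the involutive antitonicity of $\Star{\fun}$ these are dual reformulations of the original shift relation. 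Closing the cyclic symmetry then rests on a direct computation using the formula $\Star{k}(x)=\bigvee\{u\mid x\not\leq k(u)\}$, combining the associativity of composition with the joint tightness of $f$, $g$ and $h$.
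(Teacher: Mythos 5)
The preliminary parts of your argument are correct and match the paper's strategy: you verify that $\Star{f}$ lands in $\LLt$, compute $\radj{\Star{f}}=\rani{f}$ from \eqref{right_adjoint_rans}, obtain involutivity from $\rans{\rani{f}}=\rand{f}=f$, and observe antitonicity. However, the verification of the shift relation has a genuine gap. You reformulate $f\circ g\leq\Star{h}$ as $\rani{h}\leq\radj{(f\circ g)}$, identify $\radj{(f\circ g)}$ as the cotight map $\rani{\Star{(f\circ g)}}$, and then use the order isomorphism between $\LLst$ and $\LLit$ to translate back to $h\leq\Star{(f\circ g)}$. But $h\leq\Star{(f\circ g)}$ is just $f\circ g\leq\Star{h}$ restated through the antitone Galois connection $\Star{\fun}\dashv\Star{\fun}$: you have travelled in a circle and are back where you started. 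The same happens with the other side, which you bring to $f\leq\Star{(g\circ h)}$. The remaining (and essential) task is therefore exactly the shift relation itself, namely $h\leq\Star{(f\circ g)}\iff f\leq\Star{(g\circ h)}$, which your last sentence only gestures at without carrying out; it is not at all clear that the pointwise formula $\Star{k}(x)=\bigvee\set{u\mid x\not\leq k(u)}$ plus associativity closes this, since that expression is manifestly asymmetric in $f,g,h$.

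The paper gets through precisely this bottleneck via a computation you never reach: it shows that the greatest tight map $g$ with $f\circ g\leq\Star{h}$ is $\Star{(h\circ f)}$, by the chain $\rand{(\radj{f}\circ\Star{h})}=\rans{(\radj{f}\circ\rani{\Star{h}})}=\rans{(\radj{f}\circ\rho(h))}=\rans{\rho(h\circ f)}=\Star{(h\circ f)}$, using the dual of \eqref{eq:rani_sort_parenthese}, the cotightness of $\rho(h)$, and the rule for right adjoints of composites. That identity is where the cyclic shuffle of $f,g,h$ actually happens, and it (or an equivalent) must be proved; your proposal sets up the machinery around it but does not supply it.
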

\begin{proof}
  First the map $\Star{(-)}$ is an involution. Indeed using the
  equality~\eqref{right_adjoint_rans}, we have
  $$
  \SStar{f} = \rans{\rho(\rans{\radj{f}})} =
  \rans{\rani{\ladj{\radj{f}}}} = \rans{\rani{f}} =  f\,. 
  $$
  Since this map is the composition of a monotone map with an
  antitone one, it is antitone.  Let us now show that the shift
  relation~\eqref{eq:shift} holds, so $\Star{(-)}$ is a Serre duality.
  \begin{align*}
    f \circ g \leq \Star{h} 
    \Tiff & g \leq  \radj{f} \circ \Star{h} \\
    \Tiff & g \leq \rand{(\radj{f} \circ \Star{h})}\,, \tag*{since
      $\rand{k}$ is the greatest tight map below
      $k$, see Lemma~\ref{lemma:greatestAndLeast},}
    \\
    & = \rans{(\radj{f} \circ (\rani{\Star{h}}))} = \rans{(\radj{f}
      \circ \rho(h))} = \rans{\rho(h \circ f)} = \Star{(h \circ f)}\,,
    \tag{*}
    \\
    \Tiff & h \circ f \leq \Star{g}\,, \tag*{since
      $\Star{\fun}$ is an antitone involution,}
  \end{align*}
  where in (*) we have used the dual of equation
  \eqref{eq:rani_sort_parenthese}, the fact that $\ranD{g} = g$ if $g$
  is \cotight, the fact that $\rho(h)$ is \cotight when $h$ is tight
  (since $\rho(h) = \rho(\rand{h}) = \rani{\ladj{\rani{h}}}$), and usual
  properties of adjoints.
\end{proof}

Let us recall at this point \RT:
\begin{theorem}[Raney \cite{Raney60}]
  A complete lattice is \cd if and only if $id_{L}$ is tight.
\end{theorem}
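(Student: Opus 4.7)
The plan is to reduce the statement to Raney's original characterisation of \cd lattices via the \emph{totally below} relation, by unfolding what tightness of $id_{L}$ concretely asserts.

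First, I would unfold the definition. Since $\rani{id_L}(t) = \bigwedge_{s \not\leq t} s$, we have
$$\rand{id_L}(x) = \rans{(\rani{id_L})}(x) = \bigvee_{x \not\leq t}\bigwedge_{s \not\leq t} s.$$
By definition, tightness of $id_{L}$ amounts to the condition that, for every $x \in L$,
$$x = \bigvee_{x \not\leq t}\bigwedge_{s \not\leq t} s.$$

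Next, I would introduce the relation $y \ll x \Tiff y \leq \rand{id_L}(x)$, which unfolds to: there exists $t \in L$ with $x \not\leq t$ such that $y \leq s$ for every $s$ with $s \not\leq t$. The key step is to identify $\ll$ with Raney's totally-below relation $\prec$, where $y \prec x$ iff every $S \subseteq L$ with $x \leq \bigvee S$ contains an element $s$ with $y \leq s$. For $y \ll x \Rightarrow y \prec x$: if $t$ witnesses $y \ll x$ and $x \leq \bigvee S$, then $\bigvee S \not\leq t$, so some $s \in S$ satisfies $s \not\leq t$, hence $y \leq s$. Conversely, if $y \prec x$, set $t \eqdef \bigvee \set{s \mid y \not\leq s}$. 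If $x \leq t$ held, then some $s$ in that family would satisfy $y \leq s$, a contradiction; so $x \not\leq t$. Moreover, any $s$ with $s \not\leq t$ lies outside the family and therefore satisfies $y \leq s$, showing that $t$ witnesses $y \ll x$.

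With this identification, $\rand{id_L}(x) = \bigvee \set{y \mid y \ll x} = \bigvee \set{y \mid y \prec x}$, and tightness of $id_{L}$ becomes the requirement that every $x \in L$ is the join of the elements totally below it, which is exactly Raney's original characterisation of \cd lattices \cite{Raney60}. The only non-routine step is the bridging argument identifying $\ll$ with $\prec$; everything else is unfolding of definitions.
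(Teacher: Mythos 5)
The paper offers no proof of this statement: it cites it directly from Raney's 1960 paper, so there is no in-text argument to compare against. That said, your reduction to Raney's original characterisation of complete distributivity (every $x$ is the join of the elements totally below it) is sound. The bridging argument identifying the existential condition ``there is $t$ with $x \not\leq t$ and $y \leq \bigwedge_{s\not\leq t}s$'' with $y \prec x$ is correct in both directions, and the choice of witness $t = \bigvee\set{s \mid y \not\leq s}$ in the converse direction is exactly the right one. Together with $\rand{id_L}(x)\leq x$ (which holds because $\rand{\fun}$ is an interior operator, or directly because $y\prec x$ forces $y\leq x$), you get that tightness of $id_L$ is equivalent to $x=\bigvee\set{y\mid y\prec x}$ for all $x$, which is Raney's criterion.

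One phrasing needs tightening. You define $y \ll x$ by $y \leq \rand{id_L}(x)$ and then say this ``unfolds to'' the existential condition. That is not an equivalence: $y$ may lie below the join $\bigvee_{x\not\leq t}\bigwedge_{s\not\leq t}s$ without lying below any single summand, so the implication from $y\leq\rand{id_L}(x)$ to the existence of a witnessing $t$ is unjustified. What is true, and what your argument actually uses, is that $\rand{id_L}(x)$ equals the join of the set of all $y$ satisfying the existential condition: each summand $\bigwedge_{s\not\leq t}s$ with $x\not\leq t$ belongs to that set, and every member of the set lies below some summand. If you take the existential condition as the definition of $\ll$, then your identification of $\ll$ with $\prec$ together with $\rand{id_L}(x) = \bigvee\set{y \mid y \ll x}$ gives the result cleanly; as written, the ``unfolds to'' asserts a biconditional you neither prove nor need.
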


The following statement appears in \cite{KrumlPaseka2008,EggerKruml2010} and, with few
subtle differences, in \cite{S_RAMICS2020}:
\begin{theorem}
  \label{thm:KP}
  The quantale $(\LLs,\circ)$ is a Girard quantale if and only if $L$
  is a \cdlatt.
\end{theorem}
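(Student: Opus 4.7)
My proof proceeds by double implication, pivoting on Raney's theorem which reduces complete distributivity of $L$ to tightness of $id_L$.

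For the implication from right to left, I begin with Raney's theorem to conclude that $id_L$ is tight, i.e., $\rand{id_L} = id_L$, whence $\rans{\rani{id_L}} = id_L$. Invoking equation~\eqref{eq:rani_sort_parenthese} with an arbitrary $f \in \LLs$ and $g = \rani{id_L}$, I obtain $\rans{(f \circ \rani{id_L})} = f \circ \rans{\rani{id_L}} = f \circ id_L = f$. Thus every $f \in \LLs$ lies in the image of $\rans{\fun}$, so $f \in \LLt$ by Lemma~\ref{lemma:image}. Combined with the opposite inclusion (tight maps are \jp, Lemma~\ref{lemma:jp}), this yields $\LLs = \LLt$. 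Theorem~\ref{tight_girard} then endows $(\LLs, \circ, \Star{\fun})$ with the structure of a Girard quantale in the sense of Definition~\ref{defi_B}; since the semigroup $(\LLs, \circ)$ carries $id_L$ as its unit, it is in fact a Girard quantale in the classical sense of Definition~\ref{defi_A}.

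For the converse, I assume $(\LLs, \circ)$ is a Girard quantale. It is unital with unit $id_L$, endowed with a cyclic dualizing element $d \in \LLs$ inducing the involutive negation $\Rneg{f} = f \lrimpl d = d \rlimpl f$. By Raney's theorem, it suffices to show that $id_L$ is tight. My plan is to compare this Girard involution with the map $\rans{\radj{\fun}} : \LLs \to \LLs$, which is well-defined since $\rans{h}$ is \jp for any $h$ by Lemma~\ref{lemma:jp}. A direct computation along the lines of the proof of Theorem~\ref{tight_girard}, using~\eqref{right_adjoint_rans}, shows that $\rans{\radj{\rans{\radj{f}}}} = \rand{f}$ for every $f \in \LLs$, so $\rans{\radj{\fun}}$ is involutive precisely on the subset $\LLt$. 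Moreover, specialising the dualizing identity $d \rlimpl (f \lrimpl d) = f$ to $f = id_L$ yields $d \rlimpl d = id_L$, which encodes the strong constraint that $id_L$ is the largest \jp endomap $g$ satisfying $g \circ d \leq d$.

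The main obstacle is to leverage these identities to force the cyclic dualizing element $d$ to coincide with $\rans{\radj{id_L}}$, thereby identifying $\Rneg{\fun}$ with $\rans{\radj{\fun}}$. Once this identification is achieved, involutivity of $\Rneg{\fun}$ on all of $\LLs$ forces $\rand{f} = f$ for every $f \in \LLs$, i.e., $\LLs = \LLt$, and in particular $\rand{id_L} = id_L$. To carry out this identification I plan to exploit the shift relation~\eqref{eq:shift} for the Serre duality on $(\LLs, \circ)$ together with formula~\eqref{right_adjoint_rans} for the right adjoint of Raney's upper transform, analysing the action of $d$ on principal downsets of $L$. Should the direct identification prove delicate, a backup strategy is to invoke the phase-quantale representation theorem: realise $(\LLs, \circ, \Rneg{\fun})$ as the closed subsets of the free quantale over $\LLs$ with respect to the associative weakly symmetric relation $f R g$ iff $f \leq \Rneg{g}$, and read off tightness of $id_L$ from the induced closure.
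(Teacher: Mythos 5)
Your right-to-left direction (if $L$ is completely distributive then $(\LLs, \circ)$ is a Girard quantale) is correct and coincides with the argument the paper sketches immediately after the theorem statement: Raney's theorem gives $\rand{id_L}=id_L$, the factorization $f = f \circ \rand{(id_L)} = \rans{(f \circ \rani{(id_L)})}$ shows $\LLs = \LLt$, and Theorem~\ref{tight_girard} together with the unit $id_L$ yields the Girard structure in the sense of Definition~\ref{defi_A}.

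The left-to-right direction is not a proof, and your own text says as much: ``my plan is to\ldots'', ``the main obstacle is\ldots'', ``should the direct identification prove delicate, a backup strategy is\ldots''. The proposal halts at exactly the step that carries the content of this direction, namely showing that a cyclic dualizing element $d$ of $(\LLs, \circ)$, whose existence you merely assume, must coincide with $\rans{\radj{id_L}}$. Nothing you have written forces this. The observation that $d \rlimpl d = id_L$ is not a ``strong constraint''; it is the universal fact, valid in every unital quantale with a dualizing element, that the unit equals $d \lrimpl d = d \rlimpl d$ (cf.\ \cite[Corollary to Proposition 1]{Rosenthal1990b}). It carries no information specific to $\LLs$, and in particular cannot by itself rule out an exotic cyclic dualizing element when $L$ fails to be completely distributive. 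The phase-quantale fallback likewise re-encodes the problem rather than solving it: to ``read off tightness of $id_L$ from the induced closure'' you would need precisely the structural input that is missing. Note, finally, that the paper does not give a self-contained proof of this direction either: it presents Theorem~\ref{thm:KP} as a known result with citations to \cite{KrumlPaseka2008,EggerKruml2010,S_RAMICS2020}, and from Theorem~\ref{thm:TightIsUnital} derives only the ``if'' half. A genuine proof of the ``only if'' half must probe the internal structure of $\LLs$ --- for instance via the generators $c_y \circ a_x$ of Lemma~\ref{lemma:generated} and how an alleged duality must act on them --- as is done in the cited references.
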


We refine this result by arguing that the Girard quantale
$(\LLst,\circ,\Star{\fun})$ has a unit if and only if $L$ is a
\cdlatt, see Theorem~\ref{thm:TightIsUnital}. If the quantale of tight
maps is unital, then its unit necessarily is the identity, and then
every map $f \in \LLs$ is tight,
$f = f \circ id_{L} = f \circ \rand{(id_{L})} = \rans{(f \circ
  \rani{(id_{L})})}$. We obtain therefore Theorem~\ref{thm:KP} as a
corollary.

\smallskip

To achieve our goals, we recall some elementary
properties of the quantale $(\LLs,\circ)$.  In particular, a
characterisation of tights maps shall be given using the fundamental
morphisms $c_y$ and $a_x$.  For given $x,y \in L$, these maps are
defined by
\begin{align*}
  c_y(t) \eqdef \begin{cases}
    y\,, &t\neq \bot\,, \\
    \bot\,, &t = \bot\,,
  \end{cases}
  \quad\text{~~~ and ~~~}\quad
  a_x(t) \eqdef
  \begin{cases} \top\,, & t\nleq x\,, \\
    \bot\,, & t\leq x\,.
  \end{cases}
\end{align*}
These two maps are tight. Indeed, $c_{y} = \rans{(K_{y})}$, where
$K_{y} : L \rto L$ is the constant function with value $y \in L$, and
$a_{x} = \rans{(\delta_{x})}$ where $\delta_{x}(t) = \top$ if $t = x$
and, otherwise, $\delta_{x}(t) = \bot$.  Therefore $c_{y}$ and $a_{x}$
are tight and then, using Corollary~\ref{cor:closed_composition},
$c_y\circ a_x$ is tight as well. Let us remind the following fact:
\begin{lemma}
  \label{lemma:generated}
  The set of tight maps is generated under (arbitrary) joins by the maps $c_y\circ a_x$, for $x$ and $y$ in $L$.
\end{lemma}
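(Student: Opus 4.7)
The plan is to exhibit, for each tight $f \in \LLt$, an explicit decomposition
$$f = \bigvee_{t \in L} c_{\rani{f}(t)} \circ a_{t}\,,$$
from which the statement follows immediately. First I would unfold the definitions of $c_{y}$ and $a_{x}$ to check that
$$(c_{y} \circ a_{x})(t) = \begin{cases} y\,, & t \not\leq x\,, \\ \bot\,, & t \leq x\,, \end{cases}$$
so that $c_{y} \circ a_{x}$ takes the value $y$ exactly on the complement of $\downset x$ and $\bot$ elsewhere.

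With this formula in hand, the verification is essentially a one-line computation. Since $f$ is tight, $f = \rans{\rani{f}}$, which by the very definition of $\rans{\fun}$ yields $f(x) = \bigvee_{x \not\leq t} \rani{f}(t)$. On the other hand, computing the candidate decomposition pointwise and reading off the formula above gives
$$\Bigl(\bigvee_{t \in L} c_{\rani{f}(t)} \circ a_{t}\Bigr)(x) = \bigvee_{t \,:\, x \not\leq t} \rani{f}(t) = f(x)\,.$$
To conclude I would invoke Corollary~\ref{cor:closed_suprema}: since $\LLt$ is closed under arbitrary suprema in $L^{L}$, the pointwise join displayed above is already the join taken in $\LLt$, so $f$ is realised as a join in $\LLt$ of maps of the required form.

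There is essentially no obstacle: the claim is almost a reformulation of tightness, once one recognises $c_{y} \circ a_{x}$ as the ``step map'' of height $y$ supported on the complement of $\downset x$. The only minor subtlety worth double-checking is that indexing by $L$ suffices rather than taking joins over all pairs $(t,y)$ with $y \leq \rani{f}(t)$; this is immediate because $c_{y} \circ a_{t}$ is monotone in $y$, so choosing $y = \rani{f}(t)$ already subsumes the contributions from all smaller $y$.
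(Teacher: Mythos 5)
Your proposal is correct and takes essentially the same approach as the paper: both decompose a tight map $f$ as the pointwise join $\bigvee_{t\in L} c_{g(t)}\circ a_{t}$ for some $g$ with $\rans{g}=f$, the only cosmetic difference being that you pick $g=\rani{f}$ concretely while the paper works with an arbitrary such $g$ coming from Lemma~\ref{lemma:image}. Your explicit appeal to Corollary~\ref{cor:closed_suprema} to identify the pointwise join with the join in $\LLt$ is a reasonable small clarification that the paper leaves implicit.
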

\begin{proof}
  Recall that $f$ is a tight map if and only if there exists
  $g \in \LL$ such that $\rans{g} = f$. 
  Now for such a $g$, we have
  \begin{equation}\label{ransgalwaystighteq}
    \rans{g}(x) = \bigvee_{x\nleq t}g(t) = \bigvee\set{(c_{g(t)}\circ a_t)(x) \mid t\in L}\,,
  \end{equation}
  since
  \begin{align*}
    (c_{g(t)}\circ a_t)(x) & =
    \begin{cases}
      c_{g(t)}(\top) = g(t)\,, & x\nleq t\,,\\
      \bot\,,& x \leq t\,.
    \end{cases}
    \tag*{\qedhere}
  \end{align*}
\end{proof}

\begin{proposition}
  \label{prop:TightIsUnital}
  If $u$ is a left or right unit of the quantale
  $(\LLt, \circ)$, then $u$ is the identity of $L$. 
\end{proposition}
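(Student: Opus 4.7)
The plan is to probe the hypothetical unit $u$ with selected tight maps from the generating family of Lemma~\ref{lemma:generated}; because those maps take only very specific values, the unit equations force $u$ to agree pointwise with $\id_L$. The key ingredients are the facts, established just above the statement, that each $c_y$ and each $a_x$ is a tight map (being of the form $\rans{K_y}$ and $\rans{\delta_x}$ respectively), and that tight maps send $\bot$ to $\bot$.

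For the left-unit case, I would apply $u\circ c_y = c_y$ for every $y\in L$. Evaluated at any $t\in L$ with $t\neq\bot$, this reads $u(y)=y$, and the equality $u(\bot)=\bot$ is automatic since $u$ itself, being a unit of $(\LLt,\circ)$, belongs to $\LLt$ and every tight map vanishes at $\bot$. Hence $u = \id_L$.

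For the right-unit case, I would apply $a_x\circ u = a_x$ for every $x\in L$. At a fixed $t\in L$ the equation $a_x(u(t)) = a_x(t)$ unfolds, by the very definition of $a_x$, to the biconditional
\begin{align*}
u(t)\leq x & \Tiff t\leq x\,,\qquad\text{for every $x\in L$.}
\end{align*}
Specialising once to $x=t$ gives $u(t)\leq t$, and once to $x=u(t)$ gives $t\leq u(t)$; whence $u(t)=t$.

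No real obstacle arises; the argument is a short saturation argument, relying only on the fact that the family of tight maps is rich enough (through the generators $c_y$ and $a_x$) to detect every element of $L$. The degenerate case of a one-element lattice needs no separate treatment, and the symmetry between the two sides would allow a uniform presentation if preferred.
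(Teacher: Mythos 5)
Your argument is correct and is essentially the same as the paper's: both proofs probe $u$ against the generating tight maps $c_y$ and $a_x$, using the left-unit law with $c_y$ to force $u(y)=y$ and the right-unit law with $a_x$ to force $u(t)\leq x\Tiff t\leq x$ and hence $u(t)=t$. The only cosmetic difference is that the paper evaluates $u\circ c_y$ at $\top$ (avoiding the separate remark that $u(\bot)=\bot$), whereas you evaluate at an arbitrary $t\neq\bot$ and invoke that tight maps preserve $\bot$.
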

\begin{proof}
  Let $u \in \LLt$ be a left unit, then, for each $y \in L$,
  \begin{align*}
    u(y) & = u(c_{y}(\top)) = (u \circ c_{y})(\top) = c_{y}(\top) = y\,,
  \end{align*}
  so $u$ is the identity of $L$. In a similar way, if $u \in \LLt$ is a right
  unit,  then, for each $x,t \in L$,
  \begin{align*}
    t \leq x & \Tiff \bot = a_{x}(t)  =  (a_{x} \circ u)(t)
    \Tiff u(t) \leq x\,.
  \end{align*}
  Thus $u(t) = t$, for each $t \in L$, and
  we conclude again that $u$ is the identity of $L$.
\end{proof}

\begin{theorem}
  \label{thm:TightIsUnital}
  The Girard quantale $(\LLt, \circ, \Star{\fun})$ is unital if and
  only if $L$ is a \cdlatt.
\end{theorem}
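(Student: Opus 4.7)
The plan is to combine the two results that have just been stated, namely \RT and Proposition~\ref{prop:TightIsUnital}, so the proof is essentially a one-paragraph argument in two directions.

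For the ``if'' direction, I would assume $L$ is a \cdlatt. By \RT, this is equivalent to saying that $id_{L}$ is tight, so $id_{L} \in \LLt$. Since $id_{L}$ is obviously a two-sided unit for function composition, it is in particular a unit for $(\LLt,\circ)$, so the quantale is unital.

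For the ``only if'' direction, I would suppose that $(\LLt,\circ)$ has a unit $u$. By Proposition~\ref{prop:TightIsUnital}, any left or right unit of $(\LLt,\circ)$ must coincide with $id_{L}$, so $u = id_{L}$. In particular, $id_{L}$ belongs to $\LLt$, i.e.\ $id_{L}$ is tight. Applying \RT once more in the reverse direction, $L$ is a \cdlatt.

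I do not expect any serious obstacle: both halves of the equivalence are now immediate consequences of results already established in the section. The only small point to make explicit is that a unit in $(\LLt,\circ)$ is automatically both a left unit and a right unit, so Proposition~\ref{prop:TightIsUnital} applies to force it to be $id_{L}$; this identification of the unit is what lets \RT be invoked to transfer unitality to complete distributivity. As noted in the paragraph preceding Theorem~\ref{thm:TightIsUnital}, this also recovers Theorem~\ref{thm:KP} of Krum{l}--Paseka and Egger--Krum{l} as a corollary, since once $id_{L} \in \LLt$ every $f \in \LLs$ is forced to be tight via $f = f \circ id_{L} = \rans{(f \circ \rani{(id_{L})})}$.
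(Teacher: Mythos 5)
Your proof is correct and follows essentially the same route as the paper's: both directions rest on Raney's theorem plus Proposition~\ref{prop:TightIsUnital}. The only slight difference is in the ``if'' direction, where the paper invokes the externally cited result that $\LLs$ is a Girard quantale when $L$ is \cd (so that $\LLst = \LLs$ is unital), whereas you argue more directly that \RT gives $\id_L \in \LLt$ and $\id_L$ is trivially a two-sided unit for composition; your version is arguably a bit more self-contained and elementary. The ``only if'' direction is identical to the paper's.
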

\begin{proof}
  If $L$ is a \cdlatt, then $\LLt = \LL$ and $\LL$ is a Girard
  quantale \cite{KrumlPaseka2008,EGHK2018,S_RAMICS2020}. Conversely,
  If $\LLt$ has a unit, then by the previous Proposition this unit is
  the identity. Thus $\id_{L} \in \LLt$ and, by \RT, $L$ is a \cdlatt.
\end{proof}

\section{Tight maps from Serre Galois connections}

In this section we further illustrate the use of Serre Galois
connections by arguing that the Girard quantale of tight maps arises
from a Serre self-adjoint Galois connection on the set of \mp
functions $\LLi$.

If $f$ is a monotone map, then we let $\closure{f}$ be the least
$g \in \LLi$ such that $f \leq g$.

\begin{lemma}
  If $g$ is \jp, then
  \begin{align}
    \label{eq:cotightOne}
    \closure{g \circ \closure{f}} & = \closure{g \circ f}\,.
  \end{align}
\end{lemma}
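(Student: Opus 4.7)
The plan is to prove the two inequalities separately, exploiting that $\closure{\fun}$ is (by construction) a closure operator on monotone maps landing in $\LLi$, and that the set $\LLi$ is closed under pointwise infima and under composition.

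For the easy direction $\closure{g \circ f} \leq \closure{g \circ \closure{f}}$, I would just note that $f \leq \closure{f}$ and $g$, being \jp, is monotone; hence $g \circ f \leq g \circ \closure{f}$ pointwise, and applying $\closure{\fun}$ (which is monotone as a closure operator) preserves this inequality.

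The non-trivial direction $\closure{g \circ \closure{f}} \leq \closure{g \circ f}$ is where I expect to do the real work, and the main trick is to bring in the right adjoint $\radj{g}$ of $g$, which exists because $g$ is \jp and which is itself \mp. First I would observe that since $\closure{g\circ f} \in \LLi$, it suffices to show $g \circ \closure{f} \leq \closure{g \circ f}$ as monotone maps (then apply $\closure{\fun}$ and use idempotence). From $g \circ f \leq \closure{g \circ f}$ the adjunction $g \adj \radj{g}$ yields $f \leq \radj{g} \circ \closure{g \circ f}$. The right-hand side is a composition of \mp maps, hence \mp, so by minimality of $\closure{f}$ we get $\closure{f} \leq \radj{g} \circ \closure{g \circ f}$. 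Composing on the left with $g$ and using the counit inequality $g \circ \radj{g} \leq \id$ then gives $g \circ \closure{f} \leq \closure{g \circ f}$, as desired.

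The main potential obstacle is only a bookkeeping one: making sure that $\closure{f}$ is well-defined for monotone $f$ (which follows because $\LLi$ contains the constant $\top$ map and is closed under arbitrary pointwise infima, a short direct check), and that composition of \mp maps is \mp. Beyond these routine facts, the proof is just the standard adjunction manoeuvre applied inside the monoid $(\LL,\circ)$.
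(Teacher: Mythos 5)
Your proof is correct and follows essentially the same route as the paper's: the easy direction by monotonicity of $\closure{\fun}$, and for the converse the same adjunction manoeuvre — reduce to $g \circ \closure{f} \leq \closure{g \circ f}$, pass to $f \leq \radj{g} \circ \closure{g \circ f}$, invoke that the right-hand side lies in $\LLi$ so $\closure{f} \leq \radj{g} \circ \closure{g \circ f}$, and compose back with $g$. You have merely been slightly more explicit than the paper about the final step via the counit $g \circ \radj{g} \leq \id$ and about the well-definedness of $\closure{\fun}$.
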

\begin{proof}
  Since $f \leq \closure{f}$, then
  $\closure{g \circ f} \leq \closure{g \circ \closure{f}}$ by
  monotonicity.

  To derive the converse inclusion, it will be enough to
  show that $g \circ \closure{f} \leq \closure{g \circ f}$.  From
  $g \circ f \leq \closure{g \circ f}$ it follows
  $f \leq \rho(g) \circ \closure{g \circ f}$. Since
  $\rho(g) \circ \closure{g \circ f} \in \LLi$, we have
  $\closure{f} \leq \rho(g) \circ \closure{g \circ f}$ and
  $g \circ \closure{f} \leq \closure{g \circ f}$. 
\end{proof}

\begin{lemma}
  We have
  \begin{align}
    \label{eq:cotightTwo}
    \Rans{\closure{\rans{g} \circ f}}
    & = \rans{g} \circ \rans{f}\,.
  \end{align}
\end{lemma}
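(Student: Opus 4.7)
My plan is first to reduce the identity to a simpler one. The right-hand side $\rans{g} \circ \rans{f}$ equals $\rans{(\rans{g} \circ f)}$ by a direct application of equation~\eqref{eq:rani_sort_parenthese}, since $\rans{g}$ is \jp by Lemma~\ref{lemma:jp}. Setting $h \eqdef \rans{g} \circ f$, the claim therefore amounts to the identity $\rans{\closure{h}} = \rans{h}$, i.e., to the fact that the \mp closure of $h$ is indistinguishable from $h$ itself under the Raney transform $\rans{\fun}$.

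One inequality, $\rans{h} \leq \rans{\closure{h}}$, follows at once from $h \leq \closure{h}$ and monotonicity of $\rans{\fun}$. For the reverse inequality, my idea is to sandwich $\closure{h}$ from above by a conveniently chosen \mp map. The natural candidate is $\ranD{h} = \rani{\rans{h}}$: it is \mp because $\rani{\fun}$ always lands in $\LLi$ by Lemma~\ref{lemma:jp}, and it lies above $h$ since $\ranD{\fun}$ is the unit (a closure operator) of the adjunction $\rans{\fun} \adj \rani{\fun}$ established in Proposition~\ref{prop:raniadjointrans}. Since $\closure{h}$ is by definition the least \mp map above $h$, this forces $\closure{h} \leq \ranD{h}$, whence $\rans{\closure{h}} \leq \rans{\ranD{h}} = \rans{\rani{\rans{h}}} = \rans{h}$, the last equality being the triangle identity for the adjunction $\rans{\fun} \adj \rani{\fun}$.

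I do not anticipate any real obstacle. The main conceptual step is recognising that $\ranD{h}$ provides an \mp envelope of $h$ already lying above the minimal one $\closure{h}$, after which the triangle identity does the work and there is no need to unfold the sup/inf formulas defining the Raney transforms or to compute $\closure{h}$ explicitly.
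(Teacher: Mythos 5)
Your proof is correct and takes essentially the same route as the paper: both reduce the non-trivial inequality to the observation that $\ranD{h}$ (with $h = \rans{g}\circ f$) is an \mp map above $h$ whence $\closure{h}\leq\ranD{h}$, the paper phrasing this via the adjunction $\rans{\fun}\adj\rani{\fun}$ and you via monotonicity of $\rans{\fun}$ plus the triangle identity $\rans{\rani{\rans{h}}}=\rans{h}$.
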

\begin{proof}
  By monotonicity, we have $\rans{g} \circ \rans{f}
  = \rans{(\rans{g} \circ f)}
  \leq \Rans{\closure{\rans{g} \circ f}}$.
  Let us show that the converse inclusion holds:
  \begin{align*}
    \Rans{\closure{\rans{g} \circ f}} \leq \rans{g} \circ \rans{f} &
    \Tiff \closure{\rans{g} \circ f} \leq \rani{(\rans{g} \circ
      \rans{f})} = \ranD{(\rans{g} \circ
      f)} \\
    & \Tiff \rans{g} \circ f \leq \ranD{(\rans{g} \circ f)} \,,
  \end{align*}
  where the latter relation holds, since $\ranD{k}$ is the least
  \cotight map above $k$.
\end{proof}

Consider the following operation on  the set $\LLi$:
\begin{align*}
  g \bullet f & \eqdef \closure{\rans{g} \circ f}\,.
\end{align*}
\begin{lemma}
  $\bullet$ is a semigroup operations on $\LLi$.
\end{lemma}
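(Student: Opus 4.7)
The plan is to verify well-definedness and associativity by combining the two auxiliary identities \eqref{eq:cotightOne} and \eqref{eq:cotightTwo} just established. Well-definedness is immediate: for $g,f \in \LLi$, the composition $\rans{g} \circ f$ is a monotone map (since $\rans{g}$ is \jp, hence monotone, by Lemma~\ref{lemma:jp}, and $f$ is \mp, hence monotone), so its closure $\closure{\rans{g} \circ f}$ lies in $\LLi$ by the very definition of $\closure{\fun}$.

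For associativity I would compute the two sides $(h \bullet g) \bullet f$ and $h \bullet (g \bullet f)$ and show both reduce to $\closure{\rans{h} \circ \rans{g} \circ f}$. On the left, unfolding the definition gives
\[
(h \bullet g) \bullet f
= \closure{\,\rans{(h \bullet g)} \circ f\,}
= \closure{\,\Rans{\closure{\rans{h} \circ g}} \circ f\,},
\]
and equation \eqref{eq:cotightTwo}, applied with $h$ in place of $g$ and $g$ in place of $f$, rewrites $\Rans{\closure{\rans{h} \circ g}}$ as $\rans{h} \circ \rans{g}$; hence $(h \bullet g) \bullet f = \closure{\rans{h} \circ \rans{g} \circ f}$. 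On the right, unfolding gives
\[
h \bullet (g \bullet f) = \closure{\,\rans{h} \circ \closure{\rans{g} \circ f}\,},
\]
and equation \eqref{eq:cotightOne}, whose hypothesis applies because $\rans{h}$ is \jp by Lemma~\ref{lemma:jp}, rewrites this as $\closure{\rans{h} \circ (\rans{g} \circ f)}$. The two expressions coincide.

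There is no genuine obstacle: the non-trivial work is already concentrated in the preceding two lemmas, and the proof of associativity is a matter of chaining them correctly, being careful that each application of \eqref{eq:cotightOne} requires its left-hand factor to be \jp (which holds here because that factor always has the form $\rans{(\cdot)}$).
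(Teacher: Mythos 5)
Your proof is correct and follows essentially the same route as the paper: both sides of the associativity identity are reduced to $\closure{\rans{h} \circ \rans{g} \circ f}$ by applying \eqref{eq:cotightOne} on one side and \eqref{eq:cotightTwo} on the other. The only addition is your brief (and correct) well-definedness check, which the paper leaves implicit.
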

\begin{proof}
  Using the previous two lemmas, compute as follows:
  \begin{align*}
    h \bullet (g \bullet f) & = \closure{\rans{h} \circ \closure{\rans{g}
        \circ f}}  = \closure{\rans{h} \circ \rans{g} \circ f}\,,
    \tag*{by \eqref{eq:cotightOne}}
    \\
    (h \bullet g) \bullet f & = \closure{\Rans{\closure{\rans{h} \circ g}}
      \circ f}  = \closure{\rans{h} \circ \rans{g} \circ
      f}\,, \tag*{by \eqref{eq:cotightTwo}. \qquad \qedhere}
  \end{align*}
\end{proof}

For $f \in \LLi$, we define
\begin{align*}
  \Perp{f} & \eqdef \rani{\ladj{f}}  \,= \rho(\rans{f})\,.
\end{align*}
\begin{proposition}
  $(\LLi,\bullet)$ is a quantale and $\Perp{\fun}$ is a self-adjoint
  Serre Galois connection on $\LLi$.
\end{proposition}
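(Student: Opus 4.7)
The plan is to dispatch this in four steps: (a) $\LLi$ is a complete lattice, (b) $\bullet$ distributes over arbitrary joins on both sides, (c) $\Perp{\fun}$ is self-adjoint (antitone with $f \leq \Perp{g} \iff g \leq \Perp{f}$), and (d) $\Perp{\fun}$ satisfies the shift relation of \eqref{eq:shift}. The condition $\Perp{\fun} \circ \Perp{\fun} = \Perp{\fun} \circ \Perp{\fun}$ in the definition of Serre Galois connection is automatic in the self-adjoint case.

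For (a), the pointwise meet of \mp maps is \mp and the constant $\top$ map is \mp, so $\LLi$ is a complete lattice with pointwise meets and joins computed as $\bigvee_{i}^{\LLi} f_{i} = \closure{\bigvee_{i} f_{i}}$. The pivotal algebraic fact I would establish next is the auxiliary identity
\[
  \rans{\closure{h}} = \rans{h}\,, \qquad \text{for every } h \in \LL,
\]
obtained by sandwiching $h \leq \closure{h} \leq \ranD{h} = \rani{\rans{h}}$ (since $\ranD{h}$ is \mp and lies above $h$) and using that $\rans{h}$ is tight, so $\rans{\ranD{h}} = \rand{\rans{h}} = \rans{h}$. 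With this in hand, (b) reduces to routine computation: right-distributivity uses equation~\eqref{eq:cotightOne} together with $\rans{g}$ being \jp; left-distributivity uses $\rans{\closure{\bigvee_{i} g_{i}}} = \rans{\bigvee_{i} g_{i}} = \bigvee_{i} \rans{g_{i}}$ (invoking Proposition~\ref{prop:raniadjointrans}) followed again by closure.

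For (c), the adjunction $\rans{g} \adj \rho(\rans{g})$ rewrites $f \leq \Perp{g}$ pointwise as $\rans{g} \circ f \leq \id_{L}$, which unfolds to the condition $\forall x, t \colon f(x) \nleq t \Rightarrow g(t) \leq x$; the contrapositive (after relabelling) is exactly the characterization of $g \leq \Perp{f}$, yielding self-adjointness. For (d), I would first observe that $\rans{(g \bullet f)} = \rans{g} \circ \rans{f}$ by combining \eqref{eq:rani_sort_parenthese} with the auxiliary identity above, and also note that $\rans{h} \circ \closure{k} \leq \id_{L} \iff \rans{h} \circ k \leq \id_{L}$ because $\rho(\rans{h}) \in \LLi$, so $k \leq \rho(\rans{h})$ and $\closure{k} \leq \rho(\rans{h})$ are equivalent. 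Chaining these observations gives
\[
  g \bullet f \leq \Perp{h} \iff \rans{h} \circ \rans{g} \circ f \leq \id_{L} \iff f \leq \Perp{(h \bullet g)} \iff h \bullet g \leq \Perp{f},
\]
where the middle equivalence uses the adjunction applied to $\rans{h \bullet g} = \rans{h} \circ \rans{g}$, and the last equivalence is self-adjointness from step~(c).

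The main obstacle I expect is the auxiliary identity $\rans{\closure{h}} = \rans{h}$: its proof is short but it is the technical pivot that makes $\rans{\fun}$ behave as a semigroup morphism from $(\LLi,\bullet)$ to $(\LLst,\circ)$, and every subsequent reduction—both the distributivity and the shift relation—implicitly depends on it.
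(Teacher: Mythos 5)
Your proof is correct, and the two halves of it diverge from the paper's proof in slightly different ways. For the Serre Galois connection part (self-adjointness and the shift relation), you take essentially the same route as the paper but package it more cleanly: where the paper establishes $f \leq \rho(\rans{g}) \iff \rans{g} \leq \ladj{f} \iff g \leq \rani{\ladj{f}}$ by passing twice through adjoint maps, you reduce to the single condition $\rans{g} \circ f \leq \id_L$ and finish by unfolding $\rans{g}(f(x)) = \bigvee_{f(x)\nleq t}g(t)$ pointwise; and for the shift relation the paper's chain of five equivalences is replaced by the observation that $\rans{\fun}$ sends $\bullet$ to composition, so everything reduces to $\rans{h}\circ\rans{g}\circ f \leq \id_L$. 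Your key lemma $\rans{\closure{h}} = \rans{h}$ is exactly what makes that packaging work; it is not isolated in the paper, but it is implicit in the proof of \eqref{eq:cotightTwo}, and your sandwich argument $h \leq \closure{h} \leq \ranD{h}$ followed by tightness of $\rans{h}$ is a tidy way of extracting it. For the quantale-structure part, the routes genuinely differ: the paper never argues distributivity of $\bullet$ over joins directly; instead it exhibits the two residuals $\lrimpl$ and $\rlimpl$ and asserts the residuation equivalences (from which join-preservation in each variable follows formally). You instead prove the distributivity laws by hand, using the auxiliary lemma to commute $\rans{\fun}$ past the $\LLi$-closure and then appealing to join-preservation of $\rans{\fun}$ (Proposition~\ref{prop:raniadjointrans}) and to the closure-operator identity $\closure{\bigvee_i k_i} = \closure{\bigvee_i \closure{k_i}}$. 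Your route is more elementary and self-contained; the paper's route has the side benefit of producing the residuals explicitly, which it needs for later computations. Both are valid, and yours is arguably the more transparent path to showing $(\LLi,\bullet)$ is a quantale.
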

\begin{proof}
  We firstly  argue that $\Perp{\fun}$ is a self-adjoint Galois connection
  on $\LLi$ which, moreover, satisfies the shift relations
  \eqref{eq:shift} w.r.t. the operation $\bullet$. We have
  \begin{align*}
    f \leq \Perp{g} = \rho(\rans{g}) \Tiff \rans{g} \leq \ladj{f} \Tiff g
    \leq \rani{\ladj{f}} = \rho(\rans{f}) = \Perp{f} 
  \end{align*}
  and
  \begin{align*}
    g \bullet f  = \closure{\rans{g} \circ f} \leq \rho(\rans{h}) = \Perp{h}
    & \Tiff \rans{g} \circ f \leq \rho(\rans{h})
    \\
    & \Tiff f \leq \rho(\rans{g}) \circ  \rho(\rans{h}) =
    \rho(\rans{h} \circ \rans{g}) \\
    & \Tiff  \rans{(\rans{h} \circ g)} =  \rans{h} \circ \rans{g} \leq \ladj{f}
    \\
    & \Tiff   \rans{h} \circ g  \leq \rani{\ladj{f}} 
    \\ 
    & \Tiff  h \bullet g = \closure{\rans{h} \circ g}  \leq
    \rani{\ladj{f}} = 
    \Perp{f}\,.
  \end{align*}
  Let now
  \begin{align*}
    g \lrimpl h & \eqdef \rho(\rans{g}) \circ h\,,
    & h \rlimpl f & \eqdef \Perp{h} \lrimpl \Perp{f}\,.
  \end{align*}
  It is immediate to see that $g \bullet f \leq h$ if and only if
  $f \leq g \lrimpl h$. To see that these two relations are equivalent to
  $g \leq h \rlimpl f$, use the shift relations.
\end{proof}

\begin{proposition}
  The nucleus induced by the self-adjoint Serre Galois connection
  $\Perp{\fun}$ is $\ranD{\fun}$. The structure
  $(\LLit,\bullet_{\ranD{}},\Perp{\fun})$ is that of a Girard quantale on the
  set of \cotight maps of $L$ and $\rans{\fun} : \LLit \rto \LLst$
  yields a Girard quantale isomorphism from
  $(\LLit,\bullet_{\ranD{}},\Perp{\fun})$ to $(\LLst,\circ,\Star{\fun})$.
\end{proposition}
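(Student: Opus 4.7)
The plan is to establish the three claims of the proposition in sequence, deducing the Girard quantale structure on $\LLit$ from Proposition~\ref{prop:SerreGC1} and then transporting it along $\rans{\fun}$.

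First I would identify the nucleus $j = \Perp{}\circ\Perp{}$. For $f \in \LLi$, unfolding yields $\Perp{\Perp{f}} = \rho(\rans{\rho(\rans{f})})$. Since $\rans{f}$ is \jp, equation~\eqref{left_adjoint_rani} gives $\rans{\rho(\rans{f})} = \ladj{\rani{\rans{f}}} = \ladj{\ranD{f}}$; then since $\ladj{\ranD{f}}$ is a left adjoint, applying $\rho$ recovers $\ranD{f}$, so $\Perp{\Perp{f}} = \ranD{f}$. Thus the nucleus is $\ranD{\fun}$, and Proposition~\ref{prop:SerreGC1} furnishes a Frobenius quantale structure on the fixed-point set $\LLit$ with multiplication $\bullet_{\ranD{}}$ and negations obtained by restricting $\Perp{\fun}$. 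Self-adjointness of $\Perp{\fun}$ collapses the two negations, so this structure is in fact a Girard quantale.

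For the third claim, I would first check that $\rans{\fun}$ restricts to a lattice isomorphism $\LLit \rto \LLst$: by Lemma~\ref{lemma:image} its image lies in $\LLst$, and the inverse is the restriction of $\rani{\fun}$, since $\rani{\rans{f}} = \ranD{f} = f$ for $f \in \LLit$ and $\rans{\rani{g}} = \rand{g} = g$ for $g \in \LLst$. Multiplication is preserved by the computation
\begin{align*}
  \rans{(g \bullet_{\ranD{}} f)} & = \Rans{\ranD{(\closure{\rans{g}\circ f})}} = \Rans{\closure{\rans{g}\circ f}} = \rans{g}\circ\rans{f}\,,
\end{align*}
whose middle equality rests on the triangle identity $\rans{\rani{\rans{k}}} = \rans{k}$ (applied with $k = \closure{\rans{g}\circ f}$, so that $\rani{\rans{k}} = \ranD{k}$) and whose last equality is~\eqref{eq:cotightTwo}. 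The duality is preserved outright: $\rans{(\Perp{f})} = \rans{\rho(\rans{f})} = \Star{(\rans{f})}$, by definition of $\Star{\fun}$.

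The delicate part is the first step, where one must carefully track which subexpressions are \jp so as to apply~\eqref{left_adjoint_rani} legitimately and recognise $\ladj{\ranD{f}}$ as a left adjoint in order to invoke $\rho\circ\ladj{\fun} = \id$. The remainder reduces to routine use of the adjunction $\rans{\fun} \adj \rani{\fun}$ and of the two preceding computational lemmas.
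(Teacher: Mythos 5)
Your proof is correct and takes essentially the same route as the paper: you compute $\Perp{\Perp{f}} = \ranD{f}$ via \eqref{left_adjoint_rani}, invoke Proposition~\ref{prop:SerreGC1} for the Girard structure on $\LLit$, and verify that $\rans{\fun}$ preserves multiplication (via \eqref{eq:cotightTwo} and the triangle identity, where the paper instead notes $\rand{(\rans{g}\circ\rans{f})} = \rans{g}\circ\rans{f}$ since the composite is tight) and the duality. The only cosmetic addition is the explicit check that $\rans{\fun}$ and $\rani{\fun}$ restrict to inverse bijections between $\LLit$ and $\LLst$, which the paper leaves implicit.
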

\begin{proof}
  First we compute the nucleus induced by $\Perp{\fun}$:
  \begin{align*}
    \Perp{f}\Perp{}
    & = \rho(\rans{\rho(\rans{f})})
    = \rho(\ladj{\rani{\rans{f}}}) = \ranD{f}\,.
  \end{align*}
  Secondly, we observe that
  \begin{align*}
    g \bullet_{\ranD{}} f
    & = \ranD{\closure{\rans{g} \circ f}}
    = \rani{(\rans{g} \circ \rans{f})}\,,
  \end{align*}
  using equation \eqref{eq:cotightTwo}.
  Then, for $f,g$ \cotight, we have
  \begin{align*}
    \rans{(\Perp{f})} & = \rans{\rho(\rans{f})} = \Star{(\rans{f})}\,,
  \end{align*}
  and
  \begin{align*}
    \rans{(g \bullet_{\ranD{}} f)} & = \rand{(\rans{g} \circ \rans{f})} = \rans{g} \circ
    \rans{f}\,.
    \tag*{\qedhere}
  \end{align*}
\end{proof}

\begin{remark}
  Recall that $\LLi$ is isomorphic in the category of \SLatt to the
  tensor product $L^{op} \tensor L$.  The binary operation $\bullet$
  described here is, up to isomorphism, the structure described in
  \cite[Theorem 3.3.5]{KrumlPaseka2008}. Indeed, up to isomorphism,
  the elementary tensors are, for $x,y \in L$,
  \begin{align*}
    (y \tensor x)(t) & \eqdef
    \begin{cases}
      \top, & t = \top,\\
      y\,, & x \leq t < \top, \\
      \bot\,, & \text{otherwise}.
    \end{cases}
  \end{align*}
  Observing now that $\rans{(y \tensor x)} = c_{y} \circ a_{x}$, it is
  then immediate to verify that the formula used in
  \cite{KrumlPaseka2008} to define the semigroup structure holds:
  \begin{align*}
    (v \tensor u) \bullet (y \tensor x) 
    & = \closure{c_{v} \circ a_{u} \circ (y \tensor x)}
    =
    \begin{cases}
      \bot\,, & y \leq u\,, \\
      v \tensor x\,, & \text{otherwise}\,.
    \end{cases}
  \end{align*}
\end{remark}

\section{No unital extensions}

Let $L$ be a complete lattice which is not \cd.  In view of
Theorem~\ref{thm:TightIsUnital}, implying that the Girard quantale
$\LLt$ has no unit, it might be asked whether it is possible to embed
$\LLt$ into a unital Girard quantale. Without further constraints,
this is always possible, by firstly adding a unit, see
e.g. \cite[Lemma 1.1.11]{KrumlPaseka2008}, and then by embedding the
resulting quantale into its Chu construction.  Such an embedding,
however, does not preserve the duality.

\smallskip

We defined unitless Frobenius quantales as certain semigroups coming
with a notion of duality.  Given this choice of elementary operations,
a more appropriate demand is to find an embedding that preserve sups,
the binary multiplication, and the dualities.
Given that
\begin{align*}
  x \lrimpl y & = \Rneg{(\Lneg{y} \qmult x)}\,, & x \rlimpl y & =
  \Lneg{(y \qmult \Rneg{x})}\,, &
  \bigwedge_{i \in I} x_{i} & = \Rneg{(\bigvee_{i \in I} \Lneg{x_{i}})}
\end{align*}
such an embedding necessarily preserves the implications and the
infima. 
\begin{definition}
  We say that  a quantale embedding $\iota : (Q_{0},\qmult_{0}) \rto
  (Q_{1},\qmult_{1})$ is \emph{\strong} if it preserves infima and the
  two implications. 
\end{definition}

Theorem~\ref{thm:noexpansion} proves that such an embedding into a
unital quantale never exists, neither for $\LLt$, nor for any
Frobenius quantale with no unit.
\begin{theorem}
  \label{thm:noexpansion}
  If a Frobenius quantale has a \strong embedding into a unital
  Frobenius quantale, then it has a unit.
\end{theorem}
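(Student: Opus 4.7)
The plan is to pull a left unit of $Q_1$ back to $Q_0$ by using exactly the preservation properties that constitute \strong embeddability. Let $\iota : Q_0 \hookrightarrow Q_1$ be the hypothesised embedding and $1$ the unit of $Q_1$. In any unital Frobenius quantale, $x \qmult 1 = x$ gives $1 \leq x \lrimpl x$ for every element $x$. Applying this inside $Q_1$ to the elements $\iota(x)$ and using that $\iota$ preserves the implication $\lrimpl$, we obtain $1 \leq \iota(x) \lrimpl \iota(x) = \iota(x \lrimpl x)$ for every $x \in Q_0$. Since $\iota$ also preserves infima, the element
\[
u \eqdef \bigwedge\, \set{ x \lrimpl x \mid x \in Q_0 }
\]
of $Q_0$ (which exists because $Q_0$ is a complete lattice) satisfies $1 \leq \iota(u)$.

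Next I check that $u$ is a left unit of $Q_0$. The inequality $u \qmult x \leq x$ is the very meaning of $u \leq x \lrimpl x$. For the reverse inequality, work in $Q_1$: using $1 \leq \iota(u)$, monotonicity of $\qmult$, and preservation of the multiplication, we get $\iota(x) = 1 \qmult \iota(x) \leq \iota(u) \qmult \iota(x) = \iota(u \qmult x)$. Any injective \jp map between complete lattices reflects the order (since $a \leq b$ iff $a \vee b = b$, and injectivity lifts this equivalence back from the image), so $x \leq u \qmult x$ holds in $Q_0$, whence $u \qmult x = x$.

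By the symmetric argument, using $\rlimpl$ in place of $\lrimpl$, the element $v \eqdef \bigwedge\, \set{ x \rlimpl x \mid x \in Q_0}$ is a right unit of $Q_0$. The standard semigroup identity $u = u \qmult v = v$ then produces a two-sided unit of $Q_0$, as required.

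There is no substantial obstacle here: once one writes the candidate unit as the infimum of the elements $x \lrimpl x$, each step is forced by one of the preservation properties defining a \strong embedding (infima, implications, multiplication) together with order-reflection, which comes for free from injectivity. Conceptually, the theorem isolates the role of infima: the failure of unitality in a Frobenius quantale is detected precisely by the fact that the class of elements dominating every $x \lrimpl x$ need not be nonempty in a way compatible with negation, so that no extension can absorb this defect while preserving infima and implications.
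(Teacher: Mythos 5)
Your idea is the same as the paper's (take $u = \bigwedge\set{x \lrimpl x \mid x \in Q_{0}}$, observe that preservation of infima and implications forces $1 \leq \iota(u)$, and then use this to upgrade the absorption inequality to an equality), but you have the two residuations swapped. Under the paper's conventions, $y \leq x \lrimpl z$ means $x \qmult y \leq z$, so $u \leq x \lrimpl x$ says $x \qmult u \leq x$, \emph{not} $u \qmult x \leq x$. Thus $u$ is a candidate \emph{right} unit. The embedding then gives $\iota(x) = \iota(x) \qmult 1 \leq \iota(x) \qmult \iota(u) = \iota(x\qmult u)$, hence $x \leq x \qmult u$ by order reflection, and $x \qmult u = x$. (As written, you pair $u \qmult x \leq x$ coming from the wrong residuation with $x \leq u \qmult x$ coming from the embedding, which is fine but is the \emph{left}-unit property.) Symmetrically, $v = \bigwedge\set{x \rlimpl x}$ is a left unit, and $u = v\qmult u = v$ is the two-sided unit. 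So the argument goes through after you swap every ``left'' and ``right''.

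It is worth noting that you never invoke the Frobenius structure itself. The paper instead argues with a single element: $u \leq x \lrimpl x$ gives $x \qmult u \leq x$, and the Serre identity $\Lneg{x}\lrimpl\Lneg{x} = x \rlimpl x$ (instantiating the infimum at $\Lneg{x}$) gives $u \leq x\rlimpl x$ and hence $u\qmult x \leq x$ too, so the same $u$ is two-sided from the start. Your two-element variant buys you a proof that would apply to \emph{any} quantale with a strong embedding into a unital one, not just Frobenius quantales; the paper's variant is tighter because it exploits the Serre duality to dispense with the auxiliary $v$.
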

\begin{proof}
  Let $(Q,\qmult,\Lneg{\fun},\Rneg{\fun})$ be a Frobenius quantale and
  let $u \eqdef \bigwedge \set{x \lrimpl x \mid x \in Q}$. Then, for each
  $x \in Q$, $u \leq x \lrimpl x$ and so $x \qmult u \leq x$. Also,
  $u \leq \Lneg{x} \lrimpl \Lneg{x} = x \rlimpl x$, so
  $u \qmult x \leq x$.

  Let us suppose next that this quantale has a \strong embedding into a
  unital quantale $(Q',1,\qmult)$.  Let us regard $Q$ as a subset of
  $Q'$ closed under suprema, infima, the binary multiplication, and
  the two implications. Then, for each $x \in Q$, $1 \leq x \lrimpl x$
  and so $1 \leq u$. It follows that, for each $x \in Q$,
  $x = x \qmult 1 \leq x \qmult u$, thus $x = x \qmult u$. Similarly,
  $x \leq u \qmult x$ and $x = u \qmult x$.
\end{proof}

Notice in the previous proof that we might have $1 < u$. That is,
\strong embeddings need not to preserve units. Examples of \strong
embeddings among unital Girard quantales arise from chain refinements.
Let $C_{0},C_{1}$ be two complete chains with $C_{0} \subseteq C_{1}$
and $C_{0}$ being closed under meets and suprema of $C_{1}$. If
$C_{1} $ is finite, $C_{0}$ being closed under meets and suprema
simply means that the endpoints of $C_{1}$ belongs to $C_{0}$. Under
these assumptions, there is a \strong embedding of $\Homs{C_{0}}$ into
$\Homs{C_{1}}$ which does not preserve the unit, while it preserves
composition and the duality, see \cite[Proposition 81]{SAN-2021-JPAA}.

\section{Tigth endomaps of $M_{n}$}

As a last step, we investigate tight maps for some simple
non-distributive finite lattices.

\begin{lemma}
  \label{lemma:tightFromImage}
  Let $L$ be a finite lattice.
  Suppose $f : L \rto L$ is a \jp map such that $f(L)$ is
  distributive. Then $f$ is tight.  
\end{lemma}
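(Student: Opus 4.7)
My plan is to exhibit $f$ explicitly as a supremum of tight generators of the form $c_{y} \circ a_{x}$: the text preceding Lemma~\ref{lemma:generated} already records that each such generator is tight, and Corollary~\ref{cor:closed_suprema} ensures that a supremum of tight maps remains tight, so this will suffice.

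The hypothesis that $f(L)$ is distributive (and finite) will enter through a standard fact: in a finite distributive lattice every join-irreducible is join-prime. Concretely, for each $j \in J(f(L))$ the set $\set{y \in f(L) \mid y \not\geq j}$ is closed under the (finite) joins available in $f(L)$ and therefore has a maximum element $n_{j} \in f(L)$, characterised by $j \leq y \iff y \not\leq n_{j}$ for all $y \in f(L)$.

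For each $j \in J(f(L))$ I will then set
\[
t_{j} \;\eqdef\; \bigvee \set{x \in L \mid f(x) \leq n_{j}}\,.
\]
Since $f$ is \jp, $f(t_{j}) \leq n_{j}$, and hence for every $x \in L$ one has $f(x) \leq n_{j}$ iff $x \leq t_{j}$. Combined with the characterisation of $n_{j}$, this yields the key equivalence
\[
x \not\leq t_{j} \;\Longleftrightarrow\; j \leq f(x)\,, \qquad \text{for every } x \in L.
\]

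Finally, since $f(L)$ is finite and distributive, $f(x) = \bigvee \set{j \in J(f(L)) \mid j \leq f(x)}$ for every $x$, so
\[
f(x) \;=\; \bigvee \set{j \in J(f(L)) \mid x \not\leq t_{j}} \;=\; \bigvee_{j \in J(f(L))}\,(c_{j} \circ a_{t_{j}})(x)\,,
\]
with both sides trivially equal to $\bot$ at $x = \bot$. Thus $f$ decomposes as the join of the tight maps $c_{j} \circ a_{t_{j}}$, as required. I anticipate no serious obstacle in executing this; the one genuinely substantive step is the passage from join-irreducible to join-prime in $f(L)$, which is exactly where distributivity of the image is used and which breaks down, for example, for $\id_{M_{n}}$ with $n \geq 3$, accounting for its failure to be tight.
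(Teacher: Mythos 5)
Your proposal is correct and follows essentially the same route as the paper: both proofs exhibit $f$ as a join of generators $c_{y}\circ a_{x}$, relying on Lemma~\ref{lemma:generated}, Corollary~\ref{cor:closed_suprema}, and the fact that join-irreducibles of the finite distributive lattice $f(L)$ are join-prime. The only notable cosmetic difference is that you produce one generator $c_{j}\circ a_{t_{j}}$ per join-irreducible $j$ of $f(L)$ uniformly, using that $\set{z \mid j\not\leq f(z)}$ is closed under joins (because $f$ is \jp and $j$ is join-prime) and hence has a maximum $t_{j}$, whereas the paper picks, for each $x$, a \emph{maximal} $x_{i}$ in the same set and verifies the identity pointwise at $x$; your observation that the set in fact has a maximum makes the decomposition global and slightly streamlines the argument.
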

\begin{proof}
  We use Lemma~\ref{lemma:generated} and show that
  $f = \bigvee c_{y_{i}} \circ a_{x_{i}}$ for some family
  $\set{(y_{i},x_{i}) \mid i \in I}$.

  That is, we are going to show that, given $x \in L$, there exists a
  family $\set{(y_{i},x_{i}) \mid i \in I}$ such that
  $c_{y_{i}} \circ a_{x_{i}} \leq f$, for each $i \in I$, and
  $\bigvee_{i \in I}c_{y_{i}} \circ a_{x_{i}}(z) = f(x)$.

  Write $f(x) = \bigvee_{i} f(y_{i})$ with $f(y_{i})$ join-prime in
  $f(L)$. For each $i \in I$, let $x_{i} \in L$ be maximal in the set
  $\set{z \mid f(y_{i}) \not \leq f(z)}$.

  Suppose $t \not \leq x_{i}$. Then $x_{i} < x_{i} \vee t$, so
  $f(y_{i}) \leq f(x_{i} \vee t) = f(x_{i}) \vee f(t)$, and since
  $f(y_{i}) \not \leq f(x_{i})$, then $f(y_{i}) \leq f(t)$.
  This shows that $c_{f(y_{i})}\circ a_{x_{i}} \leq f$.

  Observe that, for each $i \in I$, $x \not\leq x_{i} $, since
  otherwise $f(y_{i}) \leq f(x) \leq f(x_{i})$.
  Consequently
  \begin{align*}
    (\bigvee_{i \in I} c_{f(y_{i})}\cdot a_{x_{i}})(x) & = \bigvee_{i
      \in I} c_{f(y_{i})}(a_{x_{i}}(x)) = \bigvee_{i \in I} f(y_{i}) =
    f(x)\,.
    \tag*{\qedhere}
  \end{align*}

\end{proof}

\begin{figure}
  \centering
  \includegraphics[scale=0.8]{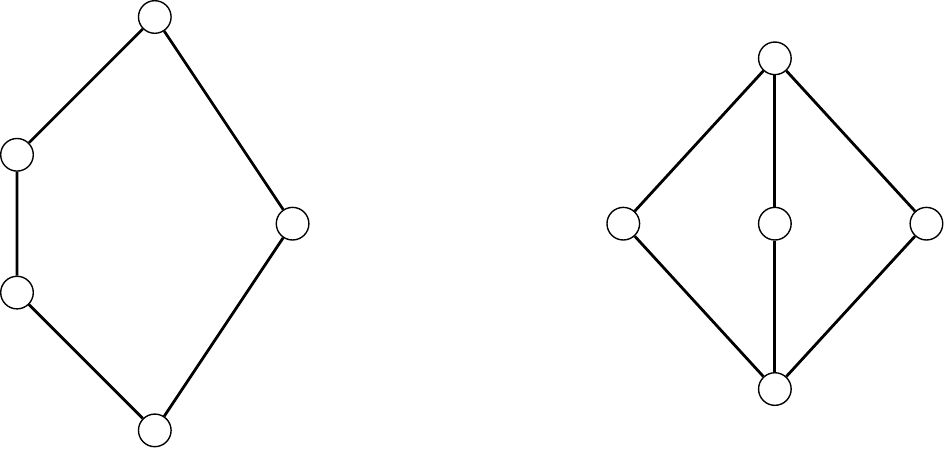}
  \caption{The pentagon $N_{5}$ (left) and the diamond $M_{3}$ (right).}
  \label{fig:pentagonDiamond}
\end{figure}

\begin{proposition}
  If $L$ is either the pentagon $N_{5}$ or the diamond lattice $M_{3}$
  (see Figure~\ref{fig:pentagonDiamond}), then
  $\LLt = \LLs \setminus \HLL^{\times}$, where $\HLL^{\times}$ is the
  set of order isomorphisms of $L$.
\end{proposition}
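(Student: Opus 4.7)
The plan is to prove the two inclusions of the equality separately: first that no order isomorphism is tight, then that every non-isomorphism in $\LLs$ has a distributive image and is therefore tight by Lemma~\ref{lemma:tightFromImage}.

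For the first inclusion, let $\phi \in \HLL^{\times}$. Its inverse $\phi^{-1}$ is also an order isomorphism, hence \jp. Were $\phi$ tight, I could apply equation~\eqref{eq:rani_sort_parenthese} (with the \jp map $\phi^{-1}$ on the left and $\phi = \rand{\phi}$ on the right) to conclude
\[
\id_{L} = \phi^{-1} \circ \phi = \phi^{-1} \circ \rand{\phi} = \rans{(\phi^{-1} \circ \rani{\phi})}\,,
\]
so $\id_{L}$ would be tight by Lemma~\ref{lemma:image}. By \RT, this would force $L$ to be \cd, contradicting the fact that neither $N_{5}$ nor $M_{3}$ is distributive.

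For the second inclusion, I take $f \in \LLs$ and show that if $f(L)$ is \emph{not} distributive (as a lattice in its own right, with induced order), then $f$ must be an order isomorphism. By Lemma~\ref{lemma:tightFromImage} this is exactly what is needed. I split on the lattice:

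For $L = M_{3}$ with atoms $a_{1}, a_{2}, a_{3}$, I will case-split on $f(\top)$. If $f(\top) < \top$, then $f(L)$ sits inside $\downset f(\top)$, a chain, and is distributive. If $f(\top) = \top$, then $f(a_{i}) \vee f(a_{j}) = \top$ for $i \neq j$, and I will argue that if any $f(a_{i}) \in \set{\bot, a_{1}, a_{2}, a_{3}}$, then $f(L)$ is contained in a sublattice consisting of $\bot$, $\top$, and at most two atoms, hence Boolean and distributive. The remaining case is that $f(a_{1}), f(a_{2}), f(a_{3})$ are all atoms, and the constraint $f(a_{i}) \vee f(a_{j}) = \top$ forces these three atoms to be pairwise distinct. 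Then $f$ acts as a permutation on atoms while fixing $\bot$ and $\top$, so it is a monotone bijection whose inverse is also monotone, i.e.\ an order isomorphism.

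For $L = N_{5}$, with notation as in Figure~\ref{fig:pentagonDiamond}, I first observe that the only order isomorphism is $\id_{L}$: the atom $a$ is the unique atom that is also a coatom, so it is fixed, and then $b,c$ are forced. Next I show that if $f(L) = N_{5}$, then $f$ is the identity. Indeed, surjectivity on a finite set gives bijectivity; such a \jp bijection is monotone, and the atom/coatom constraint (together with $b \prec c$) forces $f = \id_{L}$. Finally, if $f$ is not the identity, then $f(L)$ is a proper sub-join-semilattice of $N_{5}$ containing $\bot$; I will enumerate all such subsets to check that each is a distributive lattice with its induced meets (they are all chains or the two obvious $4$-element Boolean sublattices $\set{\bot,a,b,\top}$ and $\set{\bot,a,c,\top}$, since subsets like $\set{\bot,a,b}$ fail to be closed under $\vee$).

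I expect the main obstacle to be the bookkeeping in the two surjectivity arguments: ruling out that a \jp bijection on $M_{3}$ or $N_{5}$ can be anything other than an order isomorphism. For $M_{3}$ this reduces to checking that a permutation of the three atoms lifts to a monotone bijection whose inverse is monotone; for $N_{5}$ one must exploit the asymmetry of the poset to show only the identity works. The rest of the argument is a finite case analysis guided by the structure of sub-join-semilattices.
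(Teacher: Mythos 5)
Your argument for the inclusion $\HLL^{\times} \cap \LLt = \emptyset$ is exactly the paper's: compose a tight order isomorphism with its \jp inverse and invoke \RT to contradict non-distributivity. For the converse inclusion, however, the paper is considerably shorter. It observes that if $f \in \LLs \setminus \HLL^{\times}$ then $f$ cannot be injective, since any \jp bijection of a lattice is automatically an order isomorphism (from $f(a)\vee f(b) = f(a\vee b)$ and injectivity one gets $f(a)\leq f(b) \Timplies a\leq b$); hence $\card{f(L)}\leq 4$, and every lattice of cardinality at most $4$ is distributive. Lemma~\ref{lemma:tightFromImage} then finishes the proof in one stroke, uniformly for $N_5$ and $M_3$. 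Your approach instead verifies the same conclusion by a lattice-by-lattice case analysis on the possible values of $f$; this works, but it expends effort where only the size bound $\card{L}=5$ matters, and it does not obviously scale to other small lattices the way the cardinality argument does.

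There is also a bug to fix in your $M_3$ case split. You dichotomize between ``some $f(a_i) \in \set{\bot, a_1, a_2, a_3}$'' and ``all $f(a_i)$ are atoms,'' but these two cases are not mutually exclusive: when $f=\id_{M_3}$, every $f(a_i)$ is an atom and in particular lies in $\set{\bot,a_1,a_2,a_3}$, yet $f(L)=M_3$ is not Boolean, contradicting the conclusion you draw in the first case. The dichotomy you want, once $f(\top)=\top$ is assumed, is: either some $f(a_i)\in\set{\bot,\top}$, in which case $f(L)$ has at most four elements, or else all $f(a_i)$ are atoms, which the constraints $f(a_i)\vee f(a_j)=\top$ then force to be pairwise distinct. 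With that correction the argument closes, though at that point you are in effect re-deriving the cardinality bound; it is cleaner to use it directly.
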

\begin{proof}
  If $f \in \HLL^{\times} \cap \LLt$, then
  $id_{L} = f^{-1} \circ f \in \LLt$, thus $L$ is
  distributive by Raney theorem. Therefore, since neither $N_5$ nor
  $M_3$ are distributive, if $L$ is any of these two lattices, 
  then
  $\HLL^{\times} \cap \LLt = \emptyset$.
  On the other hand, if
  $L \in \set{N_{5},M_{3}}$ and $f \not \in \HLL^{\times}$, then $f$
  is not injective and its image has cardinality at most $4$. Since
  every lattice of cardinality at most $4$ is distributive, then $f$
  is tight, using Lemma~\ref{lemma:tightFromImage}.
\end{proof}

Notice that, for $L = N_{5}$, $\LLs^{\times} = \set{\id_{L}}$, while
for $L = M_{3}$, $\LLs^{\times}$ is isomorphic to the permutation
group on $3$ elements.

The relation $\LLt = \LLs \setminus \LLs^{\times}$ does not hold in
general. We exemplify this by characterizing $\LLt$ for $L = M_{n}$,
the modular lattice of height $3$ $n$ atoms. The cadinalities of
$\Homs{M_{n}}$ have been described in \cite{Valencia2020}. To describe
$\Homs{M_{n}}^{t}$, let $At(M_{n})$ denote the set of atoms of
$M_{n}$.

\begin{proposition}
  \label{prop:tightMn}
  A \jp map $f: M_{n} \rto M_{n}$ is tight if and only if its image
  $f(M_{n})$ is distributive, if and only if the cardinality of
  $f(M_{n}) \cap At(M_{n})$ is at most $2$.
\end{proposition}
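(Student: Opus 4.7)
The plan is to split the three-way equivalence into the easy equivalence between the two conditions on the image, plus a tightness criterion for which one direction is immediate from Lemma~\ref{lemma:tightFromImage} and the other requires a direct Raney computation exploiting the very restricted structure of $M_n$.

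First I would dispatch the equivalence between ``$f(M_n)$ is distributive'' and ``$|f(M_n)\cap At(M_n)|\leq 2$''. Since $f$ is \jp and $f(\bot)=\bot$, the image $f(M_n)$ is closed under finite joins in $M_n$, so it has the form $\{\bot\}\cup S\cup\{f(\top)\}$ with $S\subseteq At(M_n)$. If $|S|\leq 1$ the image is a chain; if $|S|=2$, the two distinct atoms join to $\top$, forcing $f(\top)=\top$ and making $f(M_n)$ the four-element Boolean lattice; in both cases $f(M_n)$ is distributive. If $|S|\geq 3$, then $f(M_n)$ contains a copy of $M_3$ and is not distributive.

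For the tightness equivalence, Lemma~\ref{lemma:tightFromImage} gives immediately that distributivity of $f(M_n)$ implies that $f$ is tight. For the converse, I would suppose $|f(M_n)\cap At(M_n)|\geq 3$ and exhibit an atom at which $f$ disagrees with $\rand{f}$. Since the atoms in $f(M_n)$ must be hit by atoms of the domain, we may rename so that $f(a_1)=b_1$, $f(a_2)=b_2$, $f(a_3)=b_3$ are three distinct atoms, which forces $f(\top)=b_1\vee b_2=\top$. Now I compute $\rand{f}(a_1)=\bigvee_{a_1\not\leq t}\rani{f}(t)$ where the index set is $\{\bot\}\cup\{a_j\mid j\neq 1\}$. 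For every $i$,
\begin{align*}
\rani{f}(a_i) & = \bigwedge_{t\not\leq a_i} f(t) = f(\top)\wedge\bigwedge_{j\neq i}f(a_j)\,,
\end{align*}
and the index set $\{j\neq i\}$ always contains at least two of $1,2,3$, so this meet involves two distinct atoms among $b_1,b_2,b_3$ and collapses to $\bot$. The same reasoning gives $\rani{f}(\bot)=\bot$. Hence $\rand{f}(a_1)=\bot\neq b_1=f(a_1)$, so $f$ is not tight.

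The main obstacle is the converse direction: one has to recognise that the very thin height-$3$ structure of $M_n$ causes every relevant infimum in the Raney computation to meet at least two distinct atoms, and hence to vanish. Once that observation is made the argument is essentially a one-line computation; the earlier equivalence with the cardinality condition is then a purely order-theoretic check about what sub-join-semilattices of $M_n$ look like.
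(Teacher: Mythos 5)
Your proposal is correct and follows essentially the same route as the paper: both establish the equivalence of the image conditions by direct inspection of sub-join-semilattices of $M_n$, invoke Lemma~\ref{lemma:tightFromImage} for the easy direction, and for the converse show $\rand{f}(a_1)=\bot$ by observing that every infimum $\rani{f}(t)$ appearing in the Raney computation ranges over at least two atoms mapped to distinct atoms and hence vanishes. The only difference is presentational: you separate the computation of $\rani{f}$ from that of $\rand{f}$, while the paper writes the double supremum/infimum in one display.
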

\begin{proof}
  If $\card{f(M_{n}) \cap At(M_{n})} \leq 2$, then
  $\card{f(M_{n})} \leq 4$ and $f(M_{n})$ is a distributive lattice.
  If $\card{f(M_{n}) \cap At(M_{n})} \geq 3$, then $f(M_{n})$ is
  isomorphic to $M_{k}$ with $3 \leq k \leq n$, thus it is not
  distributive.

  By Lemma~\ref{lemma:tightFromImage}, if $f(M)$ is distributive, then
  $f$ is tight.  Conversely, suppose that
  $\card{f(M_{n}) \cap At(M_{n})} \geq 3$. Let $a,b,c$ be such that
  $f(a),f(b),f(c) \in At(M_{n})$ and observe that $f(a) \neq
  \bot$. Then
  \begin{align*}
    \rand{f}(a)
    & = \bigvee_{a \not\leq y} \bigwedge_{t \not\leq y} f(t) 
    = \bigvee_{y \in At(M_{n}), y \neq a} \;\;\bigwedge_{t \in At(M_{n}), t \neq y} f(t) \\
    & = \bigvee_{y \in At(M_{n}), y \neq a} f(a) \land \bigwedge_{t
      \in At(M_{n}), t \neq y,a} f(t) \\
    & = \bigvee_{y \in At(M_{n}), y \neq a} \bot = \bot\,,
  \end{align*}
  so $f$ is not tight.
\end{proof}

\begin{corollary}
  We have
  \begin{align*}
    \card{\Homs{M_{n}}^{t}} & = 2 + 2n + 2n^{2} +
    \binom{n}{2}n(n-1) 
    = \frac{1}{2}n^{4} - n^{3} + \frac{5}{2}n^{2} + 2 n + 2\,.
  \end{align*}
\end{corollary}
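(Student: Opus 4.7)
The plan is to enumerate directly, by case analysis, the join-preserving endomaps $f : M_{n} \rto M_{n}$ satisfying the tightness criterion of Proposition~\ref{prop:tightMn}, namely $\card{f(M_{n}) \cap At(M_{n})} \leq 2$.

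First I would parametrize a \jp endomap $f$ by the tuple $(b_{1}, \ldots, b_{n})$ with $b_{i} \eqdef f(\alpha_{i})$, where $\alpha_{1}, \ldots, \alpha_{n}$ denote the atoms of $M_{n}$. Since $\alpha_{i} \vee \alpha_{j} = \top$ whenever $i \neq j$, join-preservation forces $b_{i} \vee b_{j} = T$ for all $i \neq j$, where $T \eqdef \bigvee_{k} b_{k} = f(\top)$; conversely, any such tuple extends uniquely to a \jp endomap. Hence $f(M_{n}) = \{\bot, b_{1}, \ldots, b_{n}, T\}$, and the tightness criterion can be read off directly from the tuple.

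Next I would split the enumeration on the value of $T$. The case $T = \bot$ forces all $b_{i} = \bot$, contributing $1$ map. The case $T = \alpha_{m}$ forces each $b_{i} \in \{\bot, \alpha_{m}\}$ with at most one $b_{i} = \bot$; summing over the $n$ choices of $m$ yields $n + n^{2}$ maps. The main case is $T = \top$: the condition $b_{i} \vee b_{j} = \top$ for $i \neq j$ imposes two constraints: (i) at most one $b_{i}$ equals $\bot$, and if so all others equal $\top$, and (ii) no atom of $M_{n}$ appears more than once among the $b_{i}$'s (since an atom joined with itself is not $\top$). Consequently the atoms appearing in $f(M_{n})$ are exactly the atomic values taken by the $b_{i}$, and tightness reduces to the bound ``at most two $b_{i}$ are atoms''.

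Finally I would subdivide the $T = \top$ case according to $(|Z|, |A|)$, where $Z = \{i \mid b_{i} = \bot\}$ and $A = \{i \mid b_{i} \in At(M_{n})\}$, obtaining four subcases: $n$ maps with $|Z| = 1$ (hence $|A| = 0$); $1$ map with $|Z| = |A| = 0$; $n^{2}$ maps with $|A| = 1$ (choice of index times choice of atom); and $\binom{n}{2} \cdot n(n-1)$ maps with $|A| = 2$ (choice of two indices times an ordered pair of distinct atoms). Summing all contributions yields
\begin{align*}
  1 + (n + n^{2}) + n + 1 + n^{2} + \binom{n}{2} n(n-1) & = 2 + 2n + 2n^{2} + \binom{n}{2} n(n-1)\,,
\end{align*}
from which the polynomial stated in the corollary follows by routine expansion of $\binom{n}{2} n(n-1) = \tfrac{1}{2} n^{2}(n-1)^{2}$. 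The only care needed is bookkeeping in the $T = \top$ subcases so as not to double-count across them; no genuine mathematical obstacle arises.
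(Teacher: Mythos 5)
Your enumeration is correct and matches the paper's approach: both count tight maps directly via the criterion of Proposition~\ref{prop:tightMn}, and your subcases correspond exactly to the map types $c_\bot$, $c_\top$, $c_j$, $a_j$, $c_j\circ a_m$, $c_j\vee a_m$, and $f_{x_1,y_1,x_2,y_2}$ that the paper lists. Your parametrization by $(f(\alpha_1),\dots,f(\alpha_n))$ with $f(\top)=\bigvee_i f(\alpha_i)$ is a cleaner way to organize the same bookkeeping (and is valid here since $\top$ is join-reducible in $M_n$ for $n\geq 2$).
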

\begin{proof}
  If a tight function is not of the form $c_{\bot},c_{\top}$,
  $c_{j},a_{j}$ (for $j \in At(M_{n})$), $c_{j} \circ a_{m}$,
  $c_{j} \vee a_{m}$ ($j,m \in At(M_{n})$), then it takes two atoms
  and sends it to distinct atoms, while all other atoms are sent to
  $\top$.
\end{proof}

Observe that if $f$ is tight and $\card{f(M_{n})} \leq 3$, then $f$
has one of this form:
\begin{align*}
  c_{y} \circ a_{x}\,, & \qquad \quad c_{y} \vee a_{x}\,
\end{align*}
where $x,y \in M_{n}$. Notice that $c_{y} \vee a_{x}$ is the map which
sends $t$ such that $\bot < t \leq x$ to $y$ and $t$ such that
$t \not\leq x$ to $\top$.
If  $\card{f(M_{n})} = 4$, then $f$ is of the form
\begin{align*}
  f_{x_{1},y_{1},x_{2},y_{2}} & \eqdef c_{y_{2}}\circ a_{x_{1}} \vee
  c_{y_{1}}\circ a_{x_{2}}\,,
\end{align*}
where $(x_{1},x_{2})$ and $(y_{1},y_{2})$ are pairs of distinct
atoms of $M_{n}$. Observe that
\begin{align}
  \label{eq:Gen}
  f_{x_{1},y_{1},x_{2},y_{2}}(t) & =
  \begin{cases}
    \bot \,, & t = \bot\,, \\
    y_{1}\,, & t = x_{1}\,, \\
    y_{2}\,, & t = x_{2}\,, \\
    \top \,, & \text{otherwise}\,.
  \end{cases}
\end{align}

The following proposition achieves the goal of giving a more immediate
formula for the negation in the quantale $(\Homs{M_{n}}^{t},\circ)$.
\begin{proposition}
  We have
  \begin{align*}
    \Star{(c_{y} \circ a_{x})}
    & = c_{x} \vee a_{y} \Tand
    \Star{(f_{x_{1},y_{1},x_{2},y_{2}})} = f_{y_{1},x_{2},y_{2},x_{1}}
  \end{align*}
  where above $x_{1} \neq x_{2}$ and $y_{1} \neq y_{2}$.
\end{proposition}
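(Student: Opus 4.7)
The plan is a direct computation unfolding the definition $\Star{f} = \rans{\radj{f}}$, supported by case analysis that exploits the very simple structure of $M_{n}$ (elements are $\bot$, atoms, and $\top$).

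For the first identity, I would first compute $g \eqdef \radj{c_{y}\circ a_{x}}$. Since $(c_{y}\circ a_{x})(t) = y$ when $t \not\leq x$ and $\bot$ otherwise, the set $\set{t \mid (c_{y}\circ a_{x})(t) \leq s}$ is all of $M_{n}$ if $y \leq s$ and equals $\downset x$ otherwise. Hence $g(s) = \top$ if $y \leq s$, and $g(s) = x$ if $y \not\leq s$. Then I would compute $\rans{g}(x') = \bigvee_{x' \not\leq t} g(t)$ by splitting on $x'$: if $x' = \bot$ the index set is empty and the supremum is $\bot$; if $x' \not\leq y$ (in particular $x' \neq \bot$) then taking $t = y$ gives $x' \not\leq y$ and $g(y) = \top$, so $\rans{g}(x') = \top$; finally, if $\bot \neq x' \leq y$ then every $t$ with $y \leq t$ satisfies $x' \leq t$, so only $t$'s with $g(t) = x$ contribute and the supremum is $x$ (the index set is nonempty since $\bot$ belongs to it). Reading the formula for $c_{x}\vee a_{y}$ on the three cases $t=\bot$, $t\not\leq y$, $\bot \neq t \leq y$ matches exactly.

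For the second identity, I would first compute $h \eqdef \radj{f_{x_{1},y_{1},x_{2},y_{2}}}$. From the description~\eqref{eq:Gen} of $f_{x_{1},y_{1},x_{2},y_{2}}$, the preimage of $\downset s$ is easy to read off: $h(\top) = \top$, $h(\bot) = \bot$, $h(y_{1}) = x_{1}$, $h(y_{2}) = x_{2}$, and $h(a) = \bot$ for any atom $a \notin \set{y_{1},y_{2}}$. Then I would compute $\rans{h}(x') = \bigvee_{x'\not\leq t}h(t)$ by splitting on $x'$: for $x' = \bot$ the supremum is $\bot$; for $x' = \top$ the relevant $t$'s include $y_{1}$ and $y_{2}$, giving $x_{1}\vee x_{2} = \top$ (using that in $M_{n}$ two distinct atoms join to $\top$); for $x' = y_{1}$ the set $\set{t \mid y_{1}\not\leq t}$ contains $y_{2}$ but not $y_{1}$, giving supremum $x_{2}$; symmetrically $\rans{h}(y_{2}) = x_{1}$; and for an atom $a \notin \set{y_{1},y_{2}}$, the set $\set{t \mid a \not\leq t}$ contains both $y_{1}$ and $y_{2}$, giving supremum $x_{1}\vee x_{2} = \top$. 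Comparing with~\eqref{eq:Gen} applied to $f_{y_{1},x_{2},y_{2},x_{1}}$ gives the required equality.

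There is no real conceptual obstacle; the only care needed is the bookkeeping of cases and the use of the $M_{n}$-specific fact that any two distinct atoms sum to $\top$. A more structural route would exploit $f_{x_{1},y_{1},x_{2},y_{2}} = c_{y_{2}}\circ a_{x_{1}} \vee c_{y_{1}}\circ a_{x_{2}}$ together with the antitonicity of $\Star{\fun}$, but this forces one to compute meets in the Girard quantale $\LLt$ (which are not the pointwise meets), so I expect the direct approach above to be cleaner and shorter.
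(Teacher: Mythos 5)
Your computation is correct, and it takes a genuinely different route from the paper's. The paper proceeds structurally: it first shows $\radj{c_{y}} = \alpha_{y}$ (where $\alpha_{y}(t) = \top$ iff $y \leq t$, else $\bot$) and $\rans{\alpha_{y}} = a_{y}$, giving $\Star{c_{y}} = a_{y}$ and hence $\Star{a_{x}} = c_{x}$ by involutivity; it then observes $c_{y}\circ a_{x} = c_{y}\land a_{x}$ (pointwise meet) and invokes antitonicity of $\Star{\fun}$ to obtain $\Star{(c_{y}\circ a_{x})} = a_{y} \vee c_{x}$. For the second identity the paper again applies $\Star{\fun}$ to the join decomposition $f_{x_{1},y_{1},x_{2},y_{2}} = (c_{y_{2}}\circ a_{x_{1}}) \vee (c_{y_{1}}\circ a_{x_{2}})$, converts join to meet, and reads off the pointwise meet as $f_{y_{1},x_{2},y_{2},x_{1}}$. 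You unfold the definition $\Star{f} = \rans{\radj{f}}$ directly and do a case split on the elements of $M_{n}$. Each approach has a trade-off: the paper's argument is shorter once the building blocks $\Star{c_{y}}$, $\Star{a_{x}}$ are in hand and exhibits $\Star{\fun}$ acting as a lattice anti-isomorphism, but it quietly relies on the fact that the pointwise meet of the two tight maps $a_{y_{2}}\vee c_{x_{1}}$ and $a_{y_{1}}\vee c_{x_{2}}$ is itself tight (so that the quantale meet in $\LLt$ coincides with the pointwise one) --- a point you correctly flag and sidestep by computing $\rans{\radj{\fun}}$ directly, at the price of a somewhat longer bookkeeping of cases. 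Both are valid proofs of the same length-class; yours is more elementary and self-contained, the paper's is more reusable since it records $\Star{c_{y}} = a_{y}$ and $\Star{a_{x}} = c_{x}$ along the way.
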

\begin{proof}
  Let $\alpha_{x}$ be the \mp endomap such that $\alpha_{x}(t) = \top$
  if $x \leq t$, and $\alpha_{x}(t) = \bot$ otherwise. 
  It is easily verified that $\radj{c_{y}} = \alpha_{y}$ and that
  $\rans{(\alpha_{y})} = a_{y}$, so $\Star{(c_{y})} = a_{y}$. From
  this it also follows that $\Star{(a_{x})} = c_{x}$.  Observe now 
  that
  $c_{y} \circ a_{x} = c_{y} \land a_{x}$, from which it follows that
  \begin{align*}
    \Star{(c_{y} \circ a_{x})} & = \Star{(c_{y} \land a_{x})} =
    \Star{(c_{y})} \vee \Star{(a_{x})} = a_{y} \vee c_{x}\,.
  \end{align*}

  Using this and assuming that $x_{1},x_{2}$ and $y_{1},y_{2}$ are
  pairs of distinct atoms, we also compute as follows:
  \begin{align*}
    \Star{(f_{x_{1},y_{1},x_{2},y_{2}})} & = \Star{((c_{y_{2}}\circ
      a_{x_{1}}) \vee (c_{y_{1}}\circ a_{x_{2}}))} \\
    &
    = \Star{(c_{y_{2}} \circ a_{x_{1}})} \land \Star{(c_{y_{1}}
      \circ a_{x_{2}})} \\
    &
    = (a_{y_{2}}\vee
    c_{x_{1}}) \land (a_{y_{1}}\vee c_{x_{2}}) = f_{y_{1},x_{2},y_{2},x_{1}}\,,
  \end{align*}
  where the last equality is derived by observing that the pointwise
  meet of $ c_{x_{1}}\vee a_{y_{2}}$ and $c_{x_{2}} \vee a_{y_{1}}$
  satisfies the pattern for $f_{y_{1},x_{2},y_{2},x_{1}}$ given in
  equation~\eqref{eq:Gen}. 
  For example, if $t = y_{1}$, then
  $(a_{y_{2}}\vee c_{x_{1}})(t) = \top$ and
  $(a_{y_{1}}\vee c_{x_{2}})(t) = x_{2}$. A similar argument is used
  when $t = y_{2}$.  Finally, if
  $t\not\leq y_{i}$, $i = 1,2$, then
  $(a_{y_{2}}\vee c_{x_{1}})(t) = (a_{y_{1}}\vee c_{x_{2}})(t) =
  \top$.
\end{proof}

We might further study quantales of the form $\LLt$ by trying to
identify similarities and differences with unital quantales.  Let us say that an
element $p \in Q$ is \emph{positive} if, for all $x \in Q$,
\begin{align*}
  x & \leq x \qmult p \land p \qmult x\,.
\end{align*}
Let us call a quantale  \emph{\positive}
if each element of the form $x \lrimpl x$ or $x \rlimpl x$ is
positive.
The proof Theorem~\ref{thm:noexpansion} immediately yields the
following statement.
\begin{lemma}
  A Frobenius quantale is unital if and only if it is positive and
  positive elements are closed under infima.
\end{lemma}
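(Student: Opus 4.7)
The plan is to extract the relevant argument from the proof of Theorem~\ref{thm:noexpansion} and complement it with a straightforward analysis of the unital case.

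For the forward direction, I assume $(Q,\qmult,\Lneg{\fun},\Rneg{\fun})$ has a unit $1$. Then for every $x \in Q$ we have $x \qmult 1 = x \leq x$, so $1 \leq x \lrimpl x$; by the identity $\Lneg{x} \lrimpl \Lneg{x} = x \rlimpl x$ from \eqref{eqs:Serre}, we also get $1 \leq x \rlimpl x$. For any $p \geq 1$, the monotonicity of $\qmult$ gives $y \qmult p \geq y \qmult 1 = y$ and $p \qmult y \geq y$, so $p$ is positive. Conversely, if $p$ is positive, then applying the definition with $x = 1$ gives $1 \leq 1 \qmult p = p$. Thus in the unital case the positive elements are exactly those above $1$; in particular every $x\lrimpl x$ and $x\rlimpl x$ is positive, establishing positivity, and the infimum of elements above $1$ is above $1$, establishing closure of positive elements under infima.

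For the backward direction, I mirror the construction in the proof of Theorem~\ref{thm:noexpansion}. Set
\[
u \;\eqdef\; \bigwedge \set{x \lrimpl x \mid x \in Q}\,.
\]
By the positivity assumption, each $x \lrimpl x$ is positive, so by closure of positive elements under infima, $u$ itself is positive. On the other hand, by construction $u \leq x \lrimpl x$ for every $x$, hence $x \qmult u \leq x$; using $\Lneg{x} \lrimpl \Lneg{x} = x \rlimpl x$ from \eqref{eqs:Serre}, we also have $u \leq x \rlimpl x$, whence $u \qmult x \leq x$. Combining these inequalities with the positivity of $u$ (which yields $x \leq x \qmult u$ and $x \leq u \qmult x$), we obtain $x = x \qmult u = u \qmult x$ for every $x \in Q$, so $u$ is a unit.

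There is no real obstacle; the only point that requires a small observation is that the element $u$ defined solely via the right-residuation $x \lrimpl x$ also bounds every $x \rlimpl x$, and this is handled exactly by the identity in \eqref{eqs:Serre}. The rest is a direct repackaging of the computation already carried out for Theorem~\ref{thm:noexpansion}.
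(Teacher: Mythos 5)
Your proof is correct and follows exactly the route the paper intends: the lemma is stated as an immediate consequence of the proof of Theorem~\ref{thm:noexpansion}, and your backward direction is precisely that argument (define $u = \bigwedge\{x\lrimpl x\}$, use closure under infima to get positivity of $u$, then squeeze to $x\qmult u = u\qmult x = x$), while your forward direction supplies the routine check that in a unital quantale the positive elements are exactly those above $1$, which is upward closed and hence closed under infima.
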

Similarly, a residuated partially-ordered semigroup (see \cite[\S
3.2]{GJKO}) is positive if each element of the form $x \lrimpl x$ or
$x \rlimpl x$ is positive.  It was argued in \cite[Theorem
3.27]{Blount1999} that a residuated partially-ordered semigroup embeds
into a residuated partially-ordered monoid if and only if it is
\positive.

The following statement shows that positivity is no longer sufficient
when we step from residuated semigroups to Frobenius quantales and
residuated lattices---where for residuated lattices, a straightforward
modification of Theorem~\ref{thm:noexpansion} shows that the finite
unitless quantales $\Homs{M_{n}}^{t}$ cannot be embedded into unital
ones.
\begin{proposition}
  The quantales $\Homs{M_{n}}^{t}$ are \positive.
\end{proposition}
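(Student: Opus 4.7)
The plan is to reduce, via the Girard identity $f \rlimpl f = \Star{f} \lrimpl \Star{f}$ from~\eqref{eqs:Serre} and the fact that the duality $\Star{\fun}$ preserves tightness, to showing that every $p \eqdef f \lrimpl f$ with $f \in \Homs{M_n}^t$ is positive. Positivity of $p$ in turn reduces to the single pointwise inequality $p(t) \geq t$ for $t \in M_n \setminus \set{\bot}$: given this, for any tight $g$ one has $g(t) \leq g(p(t))$ by monotonicity of $g$ (using $p(\bot) = \bot$ at the special point), and $g(t) \leq p(g(t))$ either trivially (when $g(t) = \bot$) or by applying the pointwise inequality at $g(t)$.

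Since $p$ is the pointwise join of those tight $h$ satisfying $f \circ h \leq f$ (Corollary~\ref{cor:closed_suprema}), it is enough to exhibit a single tight $h$ with $f \circ h \leq f$ and $h(t) \geq t$ for every $t \neq \bot$. Put
\begin{align*}
  E \eqdef \set{s \in At(M_n) \mid f(s) < f(\top)}
\end{align*}
and define $h$ by $h(\bot) = \bot$, $h(s) = s$ for $s \in E$, and $h(s) = \top$ for every other $s \in M_n$. Assuming the bound $\card{E} \leq 2$, the image of $h$ is $\set{\bot} \cup E \cup \set{\top}$, a distributive sublattice of $M_n$ with at most two atoms, so $h$ is tight by Proposition~\ref{prop:tightMn}. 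The inequality $f \circ h \leq f$ is immediate: on $\set{\bot} \cup E \cup \set{\top}$ the two maps agree, and for an atom $s \notin E$ the definition of $E$ forces $f(s) = f(\top) = f(h(s))$. Finally, $h(t) \geq t$ for $t \neq \bot$ holds by construction.

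The key technical point, and the main obstacle, is the bound $\card{E} \leq 2$. It follows from two consequences of the \jp property of $f$: (i) if distinct $s_{1}, s_{2} \in E$ had $f(s_{1}) = f(s_{2})$, then $f(\top) = f(s_{1} \vee s_{2}) = f(s_{1})$ would contradict $s_{1} \in E$; and (ii) if $s_{1}, s_{2} \in E$ satisfied $f(s_{1}) = \bot$ and $f(s_{2}) \neq \bot$, then similarly $f(\top) = f(s_{1}) \vee f(s_{2}) = f(s_{2})$ would contradict $s_{2} \in E$. Together these force either $E \subseteq f^{-1}(\bot)$---where injectivity of $f$ on $E$ yields $\card{E} \leq 1$---or $E \cap f^{-1}(\bot) = \emptyset$, in which case $f$ restricts to an injection from $E$ into the atoms of the image of $f$, a set of cardinality at most two by Proposition~\ref{prop:tightMn}. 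Either way $\card{E} \leq 2$, completing the argument.
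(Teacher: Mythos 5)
Correct, and essentially the same proof as the paper's: both exhibit the tight witness $h$ that fixes exactly the atoms $s$ with $f(s) < f(\top)$ (and sends the remaining nonbottom elements to $\top$), verify $f \circ h \leq f$, and hinge on bounding the number of such atoms by two so that Proposition~\ref{prop:tightMn} applies. You obtain that bound by a direct case analysis on how join-preservation constrains $f$ on atoms, where the paper appeals to the order-isomorphism between the fixed points of $\rho(f)\circ f$ and the image $f(M_n)$; you also make explicit the Girard-duality reduction $f \rlimpl f = \Star{f} \lrimpl \Star{f}$ that handles the $x \rlimpl x$ case, which the paper leaves implicit.
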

\begin{proof}
  Let us show that $f \lrimpl f$ is above the identity whenever
  $f \in \Homs{M_{n}}^{t}$.  It is easily seen that, within
  $\Homs{M_{n}}^{t}$, $f \lrimpl g = \rand{(\rho(f) \circ g)}$.

  Let $h = \rho(f) \circ f$ and recall that $id_{M_{n}} \leq h$. It
  follows that, for each $a \in At(M_{n})$,
  \begin{align*}
    h(a) & =
    \begin{cases}
      \top\,,& f(a) = f(\top)\,, \\
      a\,, & f(a) < f(\top)\,.
    \end{cases}
  \end{align*}
  The restriction of $f$ to the set of fixed points of $h$ yields an
  order isomorphism with the image of $f$. Therefore, there can be at
  most two atoms that are fixed by $h$.  The map that is equal to $h$
  except that it sends $\bot$ to $\bot$ is then join-preserving and
  tight, by Proposition~\ref{prop:tightMn}. Then
  $f \lrimpl f = \rand{h}$ is this map, and clearly this map is above
  the identity.
\end{proof}

Therefore, a main difference with unital quantales is that positive
elements are not closed under infima.
This can also be directly observed as follows. Consider idempotent
positive elements of $\HLL$ as \jp closure operators on $L$. The set
of fixed points of such a \jp closure operator is closed under infima
(as usual) and, moreover, under suprema. This establishes a dual
bijection between \jp closure operators on $M_{n}$ and sublattices of
$M_{n}$, see \cite[\S 3]{SAN-2019-WORDS}. 
Then, by Proposition~\ref{prop:tightMn}, tight closure operators
correspond to distributive sublattices. As suggested in
Figure~\ref{fig:jDistrLatt}, the join of two distributive sublattices
of $M_{n}$ might not be distributive. Dually, this amounts to saying
that the meet in $\Homs{M_{n}}$ of tight closure operators might not
be tight; then, Figure~\ref{fig:jDistrLatt} tells us that the meet in
$\Homs{M_{n}}$ of two tight closure operators $j_{1},j_{2}$ is the
identity $\id_{M_{n}}$. Thus, the meet $j_{1} \land j_{2}$ in
$\Homs{M_{n}}^{t}$, computed as the interior of the identity
$\rand{(\id_{M_{n}})}$, is easily seen to be the bottom of the
quantale. Next, the bottom of a quantale is never positive if the
quantale has more than one element.
\begin{figure}
  \centering
  \includegraphics{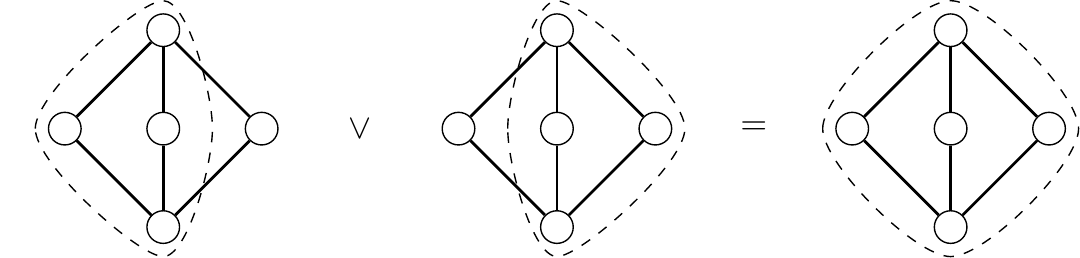}
  \caption{Join of distributive lattices as meet of tight closure
    operators}
  \label{fig:jDistrLatt}
\end{figure}

\bigskip

\ifacs\noindent {\bf Acknowledgment.}~\else
\paragraph{\bf Acknowledgment}\fi The authors are thankful to Nick Galatos
for precious pointers and remarks.

\ifacs
\backmatter

\else
\bibliographystyle{abbrv}
\bibliography{biblio}
\fi

\end{document}
